\documentclass[hidelinks, 11 pt, a4paper]{article}
\usepackage{amsfonts,amsmath,amssymb,amsthm}
\usepackage{subfigure}
\usepackage[pdftex]{graphicx}
\usepackage{natbib}
\usepackage[hypertexnames=false]{hyperref}
\usepackage{setspace}
\usepackage{pdfsync}
\usepackage{lscape}
\usepackage{verbatim}
\usepackage[hmargin=1.25in,vmargin=1.25in]{geometry}
\usepackage{footmisc}
\usepackage{rotating}
\usepackage{titling}
\usepackage{mathrsfs,dsfont}
\usepackage[dvipsnames,svgnames, x11names]{xcolor}
\usepackage[capposition=top]{floatrow}
\usepackage{booktabs}
\usepackage{multirow}
\usepackage{xcolor}
\usepackage{comment}


\usepackage[normalem]{ulem}

\usepackage{tikz}
\usepackage{tikz-3dplot}
\usepackage{tkz-euclide}
\usepackage{mathtools}

\hypersetup{colorlinks, linkcolor = {teal}, citecolor = {blue}, urlcolor ={blue!80!black}}

\makeatletter
\renewcommand\paragraph{\@startsection{paragraph}{4}{\z@}%
            {-2.5ex\@plus -1ex \@minus -.25ex}%
            {1.25ex \@plus .25ex}%
            {\normalfont\normalsize\bfseries}}
\makeatother
\setcounter{secnumdepth}{4} 
\setcounter{tocdepth}{4}    


\usetikzlibrary{patterns,shapes, calc, positioning}

\DeclareGraphicsExtensions{.jpg}

\def\qed{\rule{2mm}{2mm}}
\def\indep{\perp \!\!\! \perp}

\parskip = 1.5ex plus 0.5 ex minus0.2 ex

\newtheorem{condition}{Condition}
\newtheorem{proposition}{Proposition}[section]
\newtheorem{theorem}{Theorem}[section]
\newtheorem{lemma}{Lemma}[section]

\newtheorem{corollary}{Corollary}[section]
\newtheorem{remark}{Remark}[section]
\newtheorem{assumption}{Assumption}[section]

\let\oldmarginpar\marginpar
\renewcommand{\marginpar}[2][rectangle,draw,fill=black, text=white,text width= 2cm,rounded corners]{
    \oldmarginpar{
    \tiny \tikz \node at (0,0) [#1]{#2};}
    }

\onehalfspacing

\newcommand\Sh{\mathcal Z}
\newcommand\R{\mathbb R}
\newcommand\supp{\text{supp}}
\DeclareMathOperator*{\Times}{\scalebox{1.8}{$\times$}}

\begin{document}


\title{Overidentification in Shift-Share Designs}
\author{Jinyong Hahn\\ UCLA\\ hahn@econ.ucla.edu
\and Guido Kuersteiner\\ University of Maryland \\ gkuerte@umd.edu
\and Andres Santos\\ UCLA \\ andres@econ.ucla.edu
\and Wavid Willigrod\\UCLA \\ dwwilligrod@gmail.com}
\maketitle

\begin{abstract}
This paper studies the testability of identifying restrictions commonly employed to assign a causal interpretation to two stage least squares (TSLS) estimators based on Bartik instruments.
For homogeneous effects models applied to short panels, our analysis yields testable implications previously noted in the literature for the two major available identification strategies.
We propose overidentification tests for these restrictions that remain valid in high dimensional regimes and are robust to heteroskedasticity and clustering.
We further show that homogeneous effect models in short panels, and their corresponding overidentification tests, are of central importance by establishing that:
(i) In heterogenous effects models, interpreting TSLS as a positively weighted average of treatment effects can impose implausible assumptions on the distribution of the data; and
(ii) Alternative identifying strategies relying on long panels can prove uninformative in short panel applications.
We highlight the empirical relevance of our results by examining the viability of Bartik instruments for identifying the effect of rising Chinese import competition on US local labor markets.
\end{abstract}

\newpage

\section{Introduction}

Shift-share designs typically rely on linear combinations of unit specific variables  (the ``shares") and aggregate level variables (the ``shocks") as instruments to obtain identification.
Originally employed in the work of \cite{bartik1991} and \cite{blanchard1992yregional}, these instruments have become known in the literature as ``Bartik" instruments.
Bartik instruments have proven to be remarkably versatile yielding insights into, among others, the impact of immigration on labor markets \citep{card2001immigrant}, the consequence of trade liberalization for poverty \citep{topalova2010factor}, the effect of import competition on labor markets \citep{david2013china}, and the welfare implications of geographic sorting by education \citep{diamond2016determinants}. 


In this paper, we examine the testable implications of identifying restrictions commonly employed to assign a causal interpretation to two stage least squares (TSLS) estimators based on Bartik instruments.
We largely focus our analysis on two recent complementary identification strategies.
The first, studied by \cite{goldsmith2020bartik}, attributes the exogeneity of the Bartik instrument to the exogeneity of the shares.
The second, examined by \cite{adao2019shift} and \cite{borusyak2022quasi}, instead attributes the exogeneity of the Bartik instrument to the exogeneity of aggregate shocks.
We systematically study the overidentifying content of both approaches by noting that their differing asymptotic frameworks either implicitly or explicitly require \emph{conditional} moment restrictions to hold.
Specifically, \cite{goldsmith2020bartik} necessitates the Bartik instrument to be exogenous conditional on the realization of the aggregate shocks, while \cite{adao2019shift} and \cite{borusyak2022quasi} require the Bartik instrument to be exogenous conditional on the realization of the shares (among other variables).

The conditional moment restrictions implied by the different identification strategies yield previously noted overidentifying restrictions in homogeneous treatment effects models.
For instance, when applied to the framework of \cite{goldsmith2020bartik}, our analysis implies that the entire vector of shares must be exogenous.
As noted by \cite{goldsmith2020bartik}, the model is therefore overidentified because any deterministic linear combination of the shares yields a valid instrument.
In contrast, when applied to the framework of \cite{adao2019shift} and \cite{borusyak2022quasi}, our results imply that the conditional mean of the Bartik instrument given controls, shares, and residuals must be a linear function of the controls only.
As a result, the model is overidentified because any appropriately centered deterministic linear combination of the aggregate shocks yields a valid instrument.
Our impression is that the latter overidentifying restrictions have not received the same level of attention by practitioners as the testable implications of \cite{goldsmith2020bartik}.

Building on our overidentification analysis, we develop a framework for testing the overidentifying restrictions corresponding to both identification strategies.
In many of the applications that motivate us, the number of overidentifying restrictions can be ``large" relative to the sample -- e.g., in many applications the numbers of sectors exceeds the number of clusters.
We therefore focus on developing tests that remain valid in such high dimensional settings by employing the high dimensional central limit theorem of \cite{chernozhuokov2022improved}.
Our tests utilize bootstrap based critical values and are robust to the presence of heteroskedasticity, clustering, and weighting.\footnote{In contrast, \cite{goldsmith2020bartik} emply overidentification tests based on LIML (which requires homoskedasticity) and \cite{CHAO201415} (which precludes clustering). Neither test allows the number of restrictions to exceed the sample size.}

A natural way to proceed, for instance after rejecting the validity of the homogeneous effects model, is to consider weaker assumptions that still enable us to attribute a causal interpretation to TSLS.
To this end, we study a generalization of the model and an alternative identification strategy, but find both approaches to be potentially empirically limited in scope.
As a generalization of the model, we consider linear heterogeneous treatment effects models previously employed in the literature \citep{goldsmith2020bartik, adao2019shift}.
In such a setting, TSLS can readily be shown to estimate a weighted average of group specific treatment effects.
As forcibly argued by \cite{kolesa2013estimation} and \cite{blandhol2022tsls} among others, however, the averaging weights should be positive in order to attribute a causal interpretation to TSLS.
Unfortunately, we find that in the case of Bartik instruments the weights can naturally be negative.
In particular, within the identification framework of \cite{goldsmith2020bartik}, we show that a necessary condition for the weights to be positive is that shares of different sectors be uncorrelated with each other -- a requirement that a-fortiori fails when shares sum up to one.
Conversely, within the identification framework of \cite{adao2019shift} and \cite{borusyak2022quasi},  we show that a necessary condition for the weights to be positive is that aggregate shocks to different sectors be (weakly) positively correlated with each other.

As an alternative identification strategy, we consider the possibility that the exogeneity of the instrument is not justified by the shares alone \citep{goldsmith2020bartik} or the aggregate shocks alone \citep{adao2019shift, borusyak2022quasi} but by the shares and shocks together.
Formally, such an identification strategy corresponds to requiring that the Bartik instrument be uncorrelated with the error term when expectations are evaluated over both the time series distribution of aggregate shocks and the cross sectional distribution of shares.
While this identification strategy renders the model ``just identified," it also necessitates a long panel in order to estimate the time series distribution of the aggregate shocks.
We show that as a result, the effective number of observations for computing TSLS standard errors is governed by the number of time periods in the panel.
Because the majority of empirical studies relying on Bartik instruments have relied on short panels, we expect TSLS to be statistically uninformative under this identification strategy.
Our asymptotic analysis relies on a novel simultaneous time series and cross sectional study of long panels.
These results significantly extend related work in \cite{HKM, RePEc:cup:etheor:v:38:y:2022:i:5:p:942-958_5} and may be of independent interest.

We highlight the empirical relevance of our analysis by revisiting the study by \cite{david2013china} on the impact of rising Chinese import competition on local US labor markets.
In this setting, our overidentification tests find evidence against the validity of the identification framework of \cite{goldsmith2020bartik} as well as the identification framework of \cite{adao2019shift} and \cite{borusyak2022quasi}.
Moreover, since shares are empirically correlated across sectors, our analysis further suggests that TSLS may not have a causal interpretation in a heterogeneous effects model under the identification strategy of \cite{goldsmith2020bartik}.
Likewise, TSLS may not possess a causal interpretation in a heterogeneous effects model under the identification strategy of \cite{adao2019shift} and \cite{borusyak2022quasi} when shocks within clusters are negatively correlated.
Finally, we note that applying the proposed alternative just identified long panel identification strategy is not viable in this application because there are only two time periods.

The rest of the paper is organized as follows.
Section \ref{sec:overid} characterizes the overidentifying restrictions implied by the identification strategies of \cite{goldsmith2020bartik} as well \cite{borusyak2022quasi} and \cite{adao2019shift}.
Section \ref{sec:justid} examines the scope for attributing a causal interpretation to TSLS under heterogeneous effects models or long panel identification strategies.
Our overidentification tests are developed in Section \ref{sec:test}, while Section \ref{sec:china} illustrates the  relevance of our analysis by revisiting \cite{david2013china}.
Section \ref{sec:recs} briefly concludes by providing recommendations for empirical practice.
A series of appendices contain proofs of our results and a Monte Carlo study.

\section{Overidentification}\label{sec:overid}

Following the main instrumental variables specifications of \cite{goldsmith2020bartik} and \cite{adao2019shift}, we begin by studying homogeneous treatment effects models in which the parameter of interest is common across all individuals.
In the next section, we will examine heterogenous treatment effects models instead.
For ease of exposition, we focus on cross sectional applications though note that an extension to short panel data settings is immediate; see, e.g., Remarks \ref{rm:panelGSS} and \ref{rm:panelakm} below.

In what follows we let $Y_{i}\in\R$, $X_{i}\in\R$, $Z_i\in \R$, and $W_{i}\in\R^d$ denote the outcome variable, a scalar regressor of interest, a scalar instrument, and a vector of controls.
We will refer to each ``$i$" as an individual, though note that in applications $i$ may represent, for example, a location.
The variables $(Y_i,X_i,W_i)$ are assumed to satisfy
\begin{equation}\label{model:eq1}
Y_{i}=X_{i}\beta+W_{i}^{\prime}\gamma_{\mathtt{s}}+\varepsilon_{i},
\end{equation}
where the restrictions on $\varepsilon_i$ will be stated shortly.
We will further require that the instrument $Z_{i}$ follow a Bartik structure in the sense
that it satisfies
\begin{equation*}
Z_{i}=\Sh^{\prime}S_{i}
\end{equation*}
with $\Sh \in\R^p$ an aggregate variable and $S_{i}\in\R^p$ an individual specific variable.

For instance, in canonical applications $\Sh$ equals a vector of aggregate industry specific shocks and $S_i$ equals the share of each industry in the economy of location $i$.
For this reason, in what follows we refer to $\Sh$ as ``shocks" and $S_i$ as ``shares."
Other variables, such as $\varepsilon_i$, may have a Bartik structure as well though we only make the Bartik structure explicit for $Z_i$.
Finally, we complement equation \eqref{model:eq1} with the first stage equation
\begin{equation}\label{model:eq3}
X_{i}=Z_{i} \lambda +W_{i}^{\prime}\gamma_{\mathtt{f}%
}+\eta_{i}.
\end{equation}
Here, the $\mathtt{f}$ and $\mathtt{s}$ subscripts on $\gamma$ refer to the first and the second stage respectively.
While we let the first stage coefficients in \eqref{model:eq3} be fixed for simplicity, it is worth emphasizing that all the results in this section continue section hold if the first stage coefficients are instead random as in \cite{adao2019shift}.
The observable variables are the aggregate shocks $\Sh$ and the cross sectional variables $\left\{Y_{i},X_{i},W_{i},S_{i}\right\}_{i=1}^n$.

Different asymptotic frameworks either implicitly or explicitly condition on different variables when delivering asymptotic promises.
To reflect these differences, it will prove convenient to introduce the notation $\mathcal G_n$ to denote the relevant conditioning variables.
For instance, we will argue below that asymptotic approximations that rely on only the cross section being large implicitly condition on aggregate variables such as the shocks $\Sh$.
Hence, in such instances we would set $\mathcal G_n$ to include all aggregate variables, including $\Sh$.
Alternatively, when specialized to equations \eqref{model:eq1} and \eqref{model:eq3}, the asymptotic framework of \cite{adao2019shift} and \cite{borusyak2022quasi} explicitly conditions on the cross-sectional variables $\{W_i,S_i,\varepsilon_i\}_{i=1}^n$ so we would in that case simply set $\mathcal G_n$ to equal $\{W_i,S_i,\varepsilon_i\}_{i=1}^n$.

We will first examine conditions under which the two stage least squares (TSLS) estimand equals $\beta$ in the homogeneous effects model defined by \eqref{model:eq1} and \eqref{model:eq3}.
To this end, it is helpful to define the ``population" residualized instrument $\dot Z_{i,n}$ according to
\begin{equation}\label{model:eq4}
\dot Z_{i,n} \equiv Z_i - W_i^\prime \pi_n \hspace{0.5 in} \pi_{n}   \equiv\left(  \frac{1}{n}\sum_{i=1}^{n}E\left[  \left.  W_{i} W_{i}^{\prime}\right\vert \mathcal{G}_{n}\right]  \right)  ^{-1}\left( \frac{1}{n}\sum_{i=1}^{n}E\left[  \left.  W_{i}Z_{i}\right\vert \mathcal{G}_{n}\right]  \right),
\end{equation}
which allows us to define the TSLS estimand $\beta_{0,n}$ as the solution to the equation
\begin{equation}\label{model:eq5}
\frac{1}{n}\sum_{i=1}^{n}E\left[  \left.  \left(  Y_{i}-X_{i}\beta
_{0,n}\right)  \dot{Z}_{i,n}\right\vert \mathcal{G}_{n}\right]  =0.
\end{equation}
We note that in the notation we have let $\beta_{0,n}$ depend on $n$ to reflect that the TSLS estimand may depend on $n$ through the conditioning on $\mathcal G_n$.
Under an appropriate rank condition, plugging the outcome equation \eqref{model:eq1} into the moment condition in \eqref{model:eq5} readily yields that the estimand $\beta_{0,n}$ equals $\beta$ if and only if we have
\begin{equation}\label{model:eq6}
\frac{1}{n}\sum_{i=1}^n E\left[\varepsilon_i(Z_i-W_i^\prime \pi_n)|\mathcal G_n\right] = 0    .
\end{equation}
We next examine the implications of this exogeneity condition under the different choices of conditioning set $\mathcal G_n$ that correspond to different asymptotic approximations.

\subsection{Conditioning on Shocks}\label{sec:homoggss}

Traditionally, empirical work employing shift-share designs has reported standard errors that are motivated by asymptotic approximations in which only the cross section becomes large (i.e.\ $n\to \infty$).
Probabilistic statements associated with these asymptotic approximations correspond to a thought experiment in which we only re-sample the individual specific variables $\{Y_i,X_i,W_i,S_i\}_{i=1}^n$.
For instance, in this context, the level of a confidence interval for $\beta$ refers to the probability that a randomly drawn sample $\{Y_i,X_i,W_i,S_i\}_{i=1}^n$ yields a confidence interval that indeed includes $\beta$.

Probabilistic statements based on re-sampling only individual specific variables a fortiori keep aggregate variables such as the shocks $\Sh$ fixed.
Consequently, in an asymptotic approximation in which only $n$ becomes large we are only able to identify the distribution of individual specific variables \emph{conditionally} on the realizations of aggregate variables such as $\Sh$. 
When evaluating the exogeneity condition on our instrument (i.e.\ restriction \eqref{model:eq6}) under this asymptotic approximation,  we should therefore view all aggregate variables as belonging to the conditioning set  $\mathcal G_n$.

In order to illustrate the implications of these observations, we will for simplicity assume that the  variables $\{S_i,W_i,\varepsilon_i\}_{i=1}^n$ are i.i.d.\ and independent of all aggregate variables.\footnote{This setting rules out, for example, that $(W,\varepsilon)$ have a Bartik structure as in \cite{adao2019shift}. We note, however, that the main points made in this section carry over if we allow $(W,\varepsilon)$ to have a Bartik structure as well, though such an extension requires additional notation and assumptions. }
Provided that the controls $W_i$ are uncorrelated with the error $\varepsilon_i$, it then follows from a bit of algebra that the exogeneity condition in \eqref{model:eq6} is equivalent to
\begin{equation}\label{model:eq7}
\Sh^\prime E\left[S_i \varepsilon_i\right] = 0    .
\end{equation}
Equation \eqref{model:eq7} highlights that the relevant exogeneity condition for this asymptotic framework depends on the realization of the aggregate shocks $\Sh$ and the correlation between the shares $S_i$ and the error $\varepsilon_i$.
Lacking a justification as to why \eqref{model:eq7} should hold at the \emph{actual} realization of $\Sh$ and not others, we should in the interest of robustness demand that \eqref{model:eq7} hold at \emph{all} possible realizations of $\Sh$.
Provided the aggregate shocks $\Sh$ exhibit sufficient variation, however, it then follows that condition \eqref{model:eq7} can hold for all possible realizations of $\Sh$ if and only if in fact $S_i$ itself is uncorrelated with the errors $\varepsilon_i$.

Our next simple proposition formalizes the preceding discussion.

\begin{proposition}\label{prop:homoggss}
Suppose that $\mathcal Z\in \mathcal G_n$ and that  $\{S_i,W_i,\varepsilon_i\}_{i=1}^n$ are i.i.d.\ and independent of $\mathcal G_n$.
If $E[W\varepsilon] = 0$ and the support of $\mathcal Z\in \R^p$ has dimension $p$, then
\begin{equation}\label{prop:homoggssdisp}
\frac{1}{n}\sum_{i=1}^n E\left[\varepsilon_i(Z_i - W_i^\prime \pi_n)|\mathcal G_n \right] =0
\end{equation}
with probability one (over $\Sh$) if and only if $E[S_i \varepsilon_i ] = 0$.
\end{proposition}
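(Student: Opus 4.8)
The plan is to evaluate the left-hand side of \eqref{prop:homoggssdisp} in closed form, show it reduces to the bilinear expression $\Sh^{\prime}E[S_i\varepsilon_i]$ appearing in \eqref{model:eq7}, and then read off the equivalence from the assumed full-dimensional support of $\Sh$. As a first step I would compute the population projection coefficient $\pi_n$ in \eqref{model:eq4}. Because $\{S_i,W_i,\varepsilon_i\}_{i=1}^n$ are i.i.d.\ and independent of $\mathcal G_n$, and $\Sh\in\mathcal G_n$, we get $\tfrac1n\sum_{i=1}^n E[W_iW_i^{\prime}\mid\mathcal G_n]=E[WW^{\prime}]$ and $\tfrac1n\sum_{i=1}^n E[W_iZ_i\mid\mathcal G_n]=\tfrac1n\sum_{i=1}^n E[W_iS_i^{\prime}\mid\mathcal G_n]\Sh=E[WS^{\prime}]\Sh$, so that $\pi_n=(E[WW^{\prime}])^{-1}E[WS^{\prime}]\Sh$; in particular $\pi_n$ is a deterministic (indeed linear) function of the realization of $\Sh$.

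Next I would substitute $Z_i=\Sh^{\prime}S_i$ and this expression for $\pi_n$ into the left-hand side of \eqref{prop:homoggssdisp} and split it into the instrument term and the control term. For the control term, independence of $\{S_i,W_i,\varepsilon_i\}$ from $\mathcal G_n$ gives $\tfrac1n\sum_{i=1}^n E[\varepsilon_i W_i^{\prime}\mid\mathcal G_n]\pi_n=E[\varepsilon W^{\prime}]\pi_n$, which vanishes by the hypothesis $E[W\varepsilon]=0$. For the instrument term, using $\Sh\in\mathcal G_n$ to pull $\Sh$ out of the conditional expectation and then independence, $\tfrac1n\sum_{i=1}^n E[\varepsilon_i Z_i\mid\mathcal G_n]=\tfrac1n\sum_{i=1}^n E[\varepsilon_i S_i^{\prime}\mid\mathcal G_n]\Sh=E[S\varepsilon]^{\prime}\Sh$. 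Hence the left-hand side of \eqref{prop:homoggssdisp} equals $\Sh^{\prime}E[S_i\varepsilon_i]$ for every realization of $\Sh$, so \eqref{prop:homoggssdisp} holds with probability one over $\Sh$ if and only if $\Sh^{\prime}E[S_i\varepsilon_i]=0$ for $\Sh$ almost surely.

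Finally I would close the equivalence. If $E[S_i\varepsilon_i]=0$, the restriction clearly holds at every realization of $\Sh$. Conversely, writing $a\equiv E[S_i\varepsilon_i]$, if $a\neq 0$ then $\{z\in\R^p:z^{\prime}a=0\}$ is a linear subspace of dimension $p-1$, and $\Sh^{\prime}a=0$ almost surely forces the support of $\Sh$ into this subspace, contradicting the assumption that the support has dimension $p$; therefore $a=0$. The only points requiring care are the bookkeeping around the conditioning set -- justifying that $\Sh$ and $\pi_n$ may be extracted from the conditional expectations (which is exactly where $\Sh\in\mathcal G_n$ and the independence assumption enter) and the implicit invertibility of $E[WW^{\prime}]$ needed for $\pi_n$ to be well defined -- and correctly translating ``support of dimension $p$'' into ``not contained in any proper linear subspace.'' Beyond this, the argument is entirely elementary and I do not anticipate a genuine obstacle.
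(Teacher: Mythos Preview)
Your proposal is correct and follows essentially the same approach as the paper: both reduce the left-hand side of \eqref{prop:homoggssdisp} to $\Sh^{\prime}E[S_i\varepsilon_i]$ using independence and $E[W\varepsilon]=0$, and then argue the equivalence from the full-dimensional support of $\Sh$. The only cosmetic difference is in the necessity direction: the paper finds a point $z^*$ in the support with $(z^*)^{\prime}E[S_i\varepsilon_i]\neq 0$ (via a short linear-algebra lemma) and then uses an open ball around $z^*$ to get positive probability, whereas you argue the contrapositive directly by noting that $\Sh^{\prime}a=0$ a.s.\ confines the support to a $(p-1)$-dimensional hyperplane---these are the same idea phrased two ways.
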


Proposition \ref{prop:homoggss} establishes that the entire vector of shares $S_i$ must be a valid instrument in order for the scalar instrument $Z_i$ itself to be valid.
This conclusion echoes arguments in \cite{goldsmith2020bartik}, who similarly argue that the exogeneity of the instrument $Z_i$ should be understood in terms of the exogeneity of the shares $S_i$.
The principal implication of Proposition \ref{prop:homoggss} for our purposes is that the validity of the Bartik instrument renders the model overidentified.
For instance, it follows that we may construct an overidentification test by employing the sample moments
\begin{equation*}
\frac{1}{n}\sum_{i=1}^n S_i \hat \varepsilon_i ,
\end{equation*}
where $\{\hat \varepsilon_i\}_{i=1}^n$ denotes the fitted residuals from the estimated model.
One complication that arises in building overidentification tests from these moments is that the number of moments (i.e.\ the number of shares) is often too ``large" for standard asymptotic approximations to remain accurate.
In Section \ref{sec:test}, we address this challenge by developing overidentification tests that rely on high dimensional asymptotics instead.

\begin{remark}\label{rm:panelGSS} \rm
While we have focused on a cross-sectional setting for ease of exposition, our conclusions readily extends to panel data applications.
Specifically, suppose there are $1\leq t \leq T$ time periods, each with an aggregate shock $\Sh_t$ and a cross-section $\{Y_{it},X_{it},S_{it},W_{it}\}_{i=1}^n$.
Under a short panel asymptotic approximation (i.e.\ $n\to\infty$ and $T$ fixed), conditions similar to those employed in Proposition \ref{prop:homoggss} readily imply that as the relevant exogeneity condition we should require that
\begin{equation*}
E[S_{it}\varepsilon_{it}] = 0 \quad \quad \text{ for } 1\leq t \leq T.
\end{equation*}
Thus, in this context, the time dimension yields additional overidentifying restrictions to those available in the cross sectional setting. \qed
\end{remark}

\subsection{Identification Through Shocks}\label{sec:homogakm}

As an alternative identification strategy, \cite{adao2019shift} and \cite{borusyak2022quasi} advocate interpreting the exogeneity of the Bartik instrument $Z_i$ as originating from the exogeneity of the aggregate shocks $\Sh$.
Within our context, the assumptions employed by their asymptotic approximations correspond to setting  $\mathcal G_n$ to equal $\{S_i,W_i,\varepsilon_i\}_{i=1}^n$ and letting the number of shocks $p$ increase with the sample size $n$.\footnote{As noted by \cite{adao2019shift}, including $\{\varepsilon_i\}_{i=1}^n$ in the conditioning set is important for obtaining asymptotically valid standard errors in applications in which $\{\varepsilon_i\}_{i=1}^n$ possesses a Bartik structure.}
Under this asymptotic framework, the relevant exogeneity condition (i.e.\ restriction \eqref{model:eq6}) is equivalent to
\begin{equation}\label{model:eq9}
\frac{1}{n}\sum_{i=1}^n \varepsilon_i\left(S_i^\prime E\left[\Sh|\mathcal G_n \right] - W_i^\prime \pi_n\right) = 0  .
\end{equation}

To gain intuition into this exogeneity requirement, it is instructive to consider the case in which there are no controls $W_i$. 
Letting $\Sh_j$ denote the $j^{th}$ coordinate of the vector of aggregate shocks $\Sh\in \R^p$ and $S_{ij}$ the $j^{th}$ coordinate of the shares $S_i\in \R^p$, it then follows that the exogeneity condition in \eqref{model:eq9} simplifies to the expression
\begin{equation}\label{model:eq10}
\sum_{j=1}^p \left(E[\mathcal Z_j|\mathcal G_n] \times  \left(\frac{1}{n}\sum_{i=1}^n \varepsilon_i S_{ij}\right)\right) = 0 .
\end{equation}
Moreover, since the moment restriction in \eqref{model:eq10} must hold for any possible realization of the sample $\{S_i,\varepsilon_i\}_{i=1}^n$, the exogeneity condition in this context implies, under appropriate assumptions, that the aggregate shocks $\{\Sh_j\}_{j=1}^p$ must have mean zero and be mean independent of the individual specific variables $\{S_i,\varepsilon_i\}_{i=1}^n$.
However, if all the aggregate shock $\{\Sh_j\}_{j=1}^p$ are mean independent of the individual specific variables, then the instrument $Z_i = S_i^\prime \mathcal Z$ must itself be mean independent of $\{S_i,\varepsilon_i\}_{i=1}^n$ because
\begin{equation*}
E[Z_i|\{S_i,\varepsilon_i\}_{i=1}^n] = \sum_{j=1}^p E[\Sh_j|\mathcal G_n]S_{ij} = 0.
\end{equation*}
Intuitively, if the aggregate shocks $\{\Sh_j\}_{j=1}^p$ are exogenous in the sense that \eqref{model:eq10} holds, then any suitable linear combination of the shocks, such as $Z_i = S_i^\prime \Sh$, must be mean independent of $\{S_i,\varepsilon_i\}_{i=1}^n$ as well.
In particular, we obtain the overidentifying restriction that $Z_i$ must be uncorrelated with any functions of the shares $S_i$ and the error $\varepsilon_i$.\footnote{Equivalently, note that since any suitably linear combination of $\Sh$ is also a valid instrument, it follows that there is a surplus of instruments and hence that the model is overidentified.}

While the inclusion of controls $W_i$ allows us to relax the requirements on the aggregate shocks $\Sh$, the model remains overidentified through its implications on the conditional mean of the instrument $Z_i$.
Our next proposition illustrates this conclusion in the simple setting in which $\Sh$ is independent of all individual specific variables.

\begin{proposition}\label{prop:homogakm}
Suppose $\mathcal G_n = \{S_i,W_i,\varepsilon_i\}_{i=1}^n$ and $\Sh$ is independent of $\mathcal G_n$.
If the support of $\{\varepsilon_i\}_{i=1}^n$ conditional on $\{S_i,W_i\}_{i=1}^n$ has dimension $n$, then
\begin{equation}\label{prop:homogakmdisp}
\frac{1}{n}\sum_{i=1}^n E[\varepsilon_i(Z_i - W_i^\prime \pi_n)|\mathcal G_n] = 0
\end{equation}
with probability one (over $\{S_i,W_i,\varepsilon_i\}_{i=1}^n$) if and only if $E[Z_i|\{S_i,W_i,\varepsilon_i\}_{i=1}^n] = W_i^\prime \pi_n$.
\end{proposition}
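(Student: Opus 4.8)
The plan is to reduce the conditional moment equality in \eqref{prop:homogakmdisp} to a deterministic linear-algebra statement about a residualized instrument, and then close the argument using the full-dimensionality of the error support. The first step exploits the conditioning structure. Since $\mathcal{G}_n=\{S_i,W_i,\varepsilon_i\}_{i=1}^n$, the variables $\varepsilon_i$ and $W_i$ are $\mathcal{G}_n$-measurable; moreover, because $\Sh$ is independent of $\mathcal{G}_n$, we have $E[Z_i|\mathcal{G}_n]=S_i^\prime E[\Sh]$ and $E[W_iZ_i|\mathcal{G}_n]=W_iS_i^\prime E[\Sh]$, so $\pi_n$ — and hence the quantity $\dot Z_{i,n}\equiv E[Z_i|\mathcal{G}_n]-W_i^\prime\pi_n=S_i^\prime E[\Sh]-W_i^\prime\pi_n$ — is a function of $\{S_i,W_i\}_{i=1}^n$ alone. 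Pulling the $\mathcal{G}_n$-measurable factors out of the conditional expectation then gives
\[
\frac{1}{n}\sum_{i=1}^n E[\varepsilon_i(Z_i-W_i^\prime\pi_n)|\mathcal{G}_n]=\frac{1}{n}\sum_{i=1}^n \varepsilon_i\dot Z_{i,n}.
\]
Since the right-hand condition in the proposition is precisely $\dot Z_{i,n}=0$ a.s.\ for every $i=1,\dots,n$ (a finite collection, hence equivalent to $(\dot Z_{1,n},\dots,\dot Z_{n,n})=0$ a.s.), it remains to show that $\sum_{i=1}^n\varepsilon_i\dot Z_{i,n}=0$ a.s.\ if and only if $(\dot Z_{1,n},\dots,\dot Z_{n,n})=0$ a.s.

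The ``if'' direction is immediate from the display above. For ``only if'' I would argue by contraposition: suppose the event $\{(\dot Z_{1,n},\dots,\dot Z_{n,n})\neq 0\}$ has positive probability. Since this event is $\{S_i,W_i\}_{i=1}^n$-measurable, there is a positive-probability set of realizations of $\{S_i,W_i\}_{i=1}^n$ on which the (now fixed) vector $(\dot Z_{1,n},\dots,\dot Z_{n,n})\in\R^n$ is nonzero; discarding a null set, we may also assume that along these realizations the conditional support of $\{\varepsilon_i\}_{i=1}^n$ has dimension $n$ and that $\sum_i\varepsilon_i\dot Z_{i,n}=0$ holds conditionally a.s. But the latter forces the conditional support of $\{\varepsilon_i\}_{i=1}^n$ into the linear subspace $H=\{v\in\R^n:\sum_i \dot Z_{i,n}v_i=0\}$, which has dimension $n-1$ because $(\dot Z_{1,n},\dots,\dot Z_{n,n})\neq 0$; a set contained in a proper affine subspace of $\R^n$ cannot have dimension $n$, a contradiction. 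Hence $\{(\dot Z_{1,n},\dots,\dot Z_{n,n})\neq 0\}$ is null, i.e.\ $E[Z_i|\mathcal{G}_n]=W_i^\prime\pi_n$ for all $i$.

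The step I expect to require the most care is this last one: pinning down the notion of ``dimension'' of the support (most cleanly, the dimension of its affine hull) and carrying out the measure-theoretic bookkeeping that lets one pass from ``$\dot Z_{i,n}\neq 0$ with positive probability'' to ``there is a positive-probability set of conditioning values on which $(\dot Z_{1,n},\dots,\dot Z_{n,n})\neq 0$ while the conditional error support remains $n$-dimensional,'' so that the hyperplane argument can be applied value-by-value. The algebraic simplification of the moment and the ``if'' direction are routine. I would also note in passing that $\pi_n$ is presumed well defined, i.e.\ $n^{-1}\sum_i W_iW_i^\prime$ is invertible, and that the least-squares normal equation $n^{-1}\sum_i W_i\dot Z_{i,n}=0$ holds by construction of $\pi_n$, though it is not needed for this argument.
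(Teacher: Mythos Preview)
Your proposal is correct and follows essentially the same route as the paper: both reduce \eqref{prop:homogakmdisp} to $\frac{1}{n}\sum_i\varepsilon_i(S_i^\prime E[\Sh]-W_i^\prime\pi_n)=0$ using $\mathcal G_n$-measurability and independence of $\Sh$, then argue necessity by fixing a realization of $\{S_i,W_i\}_{i=1}^n$ on which the residualized-instrument vector is nonzero. The only cosmetic difference is that the paper packages the linear-algebra step as a separate lemma (for any nonzero vector there is a point in a full-dimensional support not orthogonal to it) and closes with an explicit open-ball argument, whereas you phrase the same step as ``support contained in a hyperplane contradicts dimension $n$''; both are the same observation, and the measure-theoretic bookkeeping you flagged is exactly what the paper carries out.
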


Proposition \ref{prop:homogakm} establishes that the exogeneity requirement needed to obtain identification through the aggregate shocks $\Sh$ effectively imposes that the conditional mean of the instrument $Z_i$ given $\{S_i,W_i,\varepsilon_i\}_{i=1}^n$ be equal to the linear projection of $Z_i$ onto $W_i$.
This conclusion can be shown to hold even if $\Sh$ is not independent of the individual specific variables $\{S_i,W_i,\varepsilon_i\}_{i=1}^n$ at the cost of additional notation and assumptions.
For our purposes, the principal implication of Proposition \ref{prop:homogakm} is that the residualized instrument $(Z_i - W_i^\prime \pi_n)$ must be uncorrelated with any function of the shares $S_i$, controls $W_i$, and errors $\varepsilon_i$.
This observation suggests employing the sample moments
\begin{equation*}
\frac{1}{n}\sum_{i=1}^n g(\hat \varepsilon_i,W_i,S_i)(Z_i - W_i^\prime \hat \pi_n)
\end{equation*}
to construct an overidentification test -- here $g$ is an arbitrary vector valued function and $\{Z_i - W_i^\prime \hat \pi_n\}_{i=1}^n$ denotes the residuals from regressing the instrument on the controls.
In Section \ref{sec:test} we will develop such an overidentification test while allowing the number of sample moments to potentially be high dimensional.

\begin{remark}\label{rm:panelakm} \rm
The overidentifying restrictions of Proposition \ref{prop:homogakm} generalize to panel data applications.
Specifically, suppose that at each time period $1\leq t \leq T$ we observe a  shock $\Sh_t \in \R^p$ and individual specific variables $\{Y_{it},X_{it},S_{it},W_{it}\}_{i=1}^n$.
In this context, the asymptotic framework in \cite{adao2019shift} and \cite{borusyak2022quasi} corresponds to setting $\mathcal G_n$ to equal $\{S_{it},W_{it},\varepsilon_{it}\}_{i=1,t=1}^{n,T}$ and letting $T$ be fixed while $n$ and $p$ become large.
Under conditions similar to those imposed in Proposition \ref{prop:homogakm}, it follows that the exogeneity requirements imposed on the shocks $\{\Sh_t\}_{t=1}^T$ to obtain identification imply
\begin{equation*}
E[Z_{it}|\mathcal G_n] = W_{it}^\prime \pi_n \hspace{0.5 in} \pi_n \equiv \left(\sum_{i=1}^n\sum_{t=1}^T W_{it}W_{it}^\prime \right)^{-1} \left(\sum_{i=1}^n\sum_{t=1}^T W_{it}E[Z_{it}|\mathcal G_n]\right).
\end{equation*}
As a result, the model is now overidentified by the restriction that the residualized instrument $(Z_{it}-W_{it}^\prime \pi_n)$ be uncorrelated with any function of $\{W_{it},S_{it},\varepsilon_{it}\}_{t=1}^T$. \qed
\end{remark}

\begin{remark}\label{rm:akmrel} \rm
\cite{adao2019shift} and \cite{borusyak2022quasi} impose sufficient conditions for the overidentifying restrictions derived in Proposition \ref{prop:homogakm} to hold.
For instance, \cite{adao2019shift} require that the controls satisfy $W_i = \mathcal W^\prime S_i$, with $\mathcal W$ a $p\times d$ matrix of aggregate shocks, and impose that $E[\Sh|\{S_i,W_i,\varepsilon_i\}_{i=1}^n,\mathcal W] = \mathcal W\Gamma$ for some vector $\Gamma\in \R^d$.\footnote{More precisely, \cite{adao2019shift} impose that $W_i = \mathcal W^\prime S_i + U_i$ and require $U_i$ to be asymptotically negligible in a suitable sense; see their Assumptions 3(iii)(iv).}
Under this structure, it follows that $\pi_n$ equals $\Gamma$ and hence that
\begin{multline*}
E[Z_i|\{S_i,W_i,\varepsilon_i\}_{i=1}^n] \\
= E[S_i^\prime E[\Sh | \{S_i,W_i,\varepsilon_i\}_{i=1}^n,\mathcal W]|\{S_i,W_i,\varepsilon_i\}_{i=1}^n] = E[S_i^\prime \mathcal W\Gamma | \{S_i,W_i,\varepsilon_i\}_{i=1}^n] = W_i^\prime \pi_n,
\end{multline*}
which implies the overidentifying restrictions of Proposition \ref{prop:homogakm} are indeed satisfied. \qed
\end{remark}

\section{Just Identification?}\label{sec:justid}

We have so far shown that the homogeneous effects model is overidentified in shift-share designs.
One possible way to proceed, for instance after statistically rejecting the model, is to weaken assumptions in a manner that renders the model potentially just identified  (e.g., in the sense of \citep{chen2018overidentification}).
In this section, we show that two natural relaxations to the model of Section \ref{sec:overid} are severely limited in the set of empirical contexts to which they may be successfully applied.
As a result, we conclude that the homogeneous effects model of Section \ref{sec:overid} is of central empirical importance in shift-share designs relying on Bartik instruments.

First, we examine the possibility of relaxing the model in Section \ref{sec:overid} to a linear heterogeneous effects model.
In this context, the TSLS estimand can be interpreted as a weighted average of causal effects for different population subgroups.
We find, however, that ensuring  the estimand equals a \emph{positively} weighted average of causal effects requires strong, often unrealistic, assumptions on the distribution of the data.

Second, we examine the possibility of relaxing the model in Section \ref{sec:overid} by only requiring that the Bartik instrument be exogeneous when expectations are evaluated over \emph{both} the shocks and the shares.
We show that, under this exogeneity requirement, identification crucially relies on the time series variation of the shocks and, as a result, that the standard errors of TSLS depend on the time dimension.
Because the majority of empirical studies relying on shift-share designs have employed short panels, we expect TSLS estimates to be highly imprecise under this identification strategy.

\subsection{Heterogeneous Effects}

We first re-examine the shift-share design of Section \ref{sec:overid} in the presence of heterogenous treatment effects.
To this end, we generalize the second stage equation by setting
\begin{equation}\label{model:eq13}
Y_i = X_i \beta_i + W_i^\prime \gamma_{\mathtt{s}} + \varepsilon_i;
\end{equation}
i.e.\ the ``treatment effect" $\beta_i$ is allowed to depend on the individual but, for simplicity, we keep the coefficient for the controls fixed.
We complement the second stage by following \cite{adao2019shift} and \cite{goldsmith2020bartik} in imposing a linear first stage
\begin{equation}\label{model:eq14}
X_i = \Sh^\prime \Lambda_i S_i + W_i^\prime \gamma_{\mathtt{f}} + \eta_i \hspace{0.5 in} \Lambda_i = \text{diag}\left\{\left(\lambda_{i1},\ldots, \lambda_{ip}\right)\right\},
\end{equation}
where ``$\text{diag}\{(a_1,\ldots, a_p)\}$" denotes a diagonal matrix with diagonal entries $(a_1,\ldots, a_p)$ and again, for simplicity, we set the coefficient for the controls $W_i$ to be fixed.
In analogy to \cite{imbens1994identification}, we refer to $\Lambda_i$ as the ``type" of the individual.

Within this context, we examine conditions under which the TSLS estimand retains a causal interpretation.
Formally, we remain interested in the parameter $\beta_{0,n}$ solving
\begin{equation*}
\frac{1}{n}\sum_{i=1}^n E\left[(Y_i - X_i \beta_{0,n})\dot Z_{i,n}\Big|\mathcal G_n\right] = 0,
\end{equation*}
but now study conditions under which $\beta_{0,n}$ equals a positively weighted average of the average treatment effects for different subgroups.
Under suitable exogeneity and rank conditions on the instrument $Z_i$, it is possible to show that the estimand $\beta_{0,n}$ satisfies
\begin{equation}\label{model:eq16}
    \beta_{0,n} =  E\left[\left(\sum_{i=1}^n\omega_{i,n}E\left[\beta_i|\Lambda_i,W_i,\mathcal G_n\right]\right)\Bigg|\mathcal G_n\right] \hspace{0.3 in} \omega_{i,n} \equiv \frac{E\left[X_i\dot Z_{i,n}|\Lambda_i,W_i,\mathcal G_n\right]}{\sum_{i=1}^n E\left[X_i \dot Z_{i,n}|\mathcal G_n\right]};
\end{equation}
i.e.\ $\beta_{0,n}$ equals the expectation of a $\{\omega_{i,n}\}_{i=1}^n$-weighted average of the average treatment effects for subgroups determined by $\Lambda_i$ and $W_i$; see Lemma \ref{lm:decomp} for a formal statement.
As forcibly argued in the literature, a minimal requirement for $\beta_{0,n}$ to possess a causal interpretation is that the weights $\{\omega_{i,n}\}_{i=1}^n$ be positive \citep{kolesa2013estimation, blandhol2022tsls}.
In what follows, we study the implications of requiring that such a positivity condition hold under different asymptotic frameworks -- i.e.\ under different choices of $\mathcal G_n$.
For succinctness, we will refer to TSLS as having a causal interpretation whenever the weights $\{\omega_{i,n}\}_{i=1}^n$ are  positive with probability one.

\subsubsection{Conditioning on Shocks}\label{sec:heterogss}

We begin by studying the conditions under which TSLS has a causal interpretation in an asymptotic setting in which only the cross section grows (i.e.\ $n \to \infty$).
Recall that in Section \ref{sec:homoggss} we showed that this setting implicitly conditions on the shocks $\Sh$ and requires the shares $S_i$ to be a valid instrument.

The analysis in this section relies on the following assumption and additional regularity conditions that we formally state in the Appendix; see Assumption \ref{ass:heterogssextra}.

\begin{assumption}\label{ass:heterogss}
(i) $\Sh$ and $\{Y_i,X_i,W_i,S_i\}_{i=1}^n$ satisfy equations \eqref{model:eq13} and \eqref{model:eq14};
(ii) $(\Sh,\Lambda_i,\beta_i,\varepsilon_i,\eta_i)\indep S_i$ conditionally on $W_i$;
(iii) $E[Z_i|W_i,\mathcal G_n] = W_i^\prime \pi_n$ with $\mathcal G_n = \{\Sh\}$.
\end{assumption}

Assumption \ref{ass:heterogss}(i) formally imposes the structure of the model, while Assumption \ref{ass:heterogss}(ii) requires the shares $S_i$ to be suitably exogenous.
In Assumption \ref{ass:heterogss}(iii) we further demand that the conditional mean of the instrument given the covariates be linear.\footnote{Under regularity conditions, Assumption \ref{ass:heterogss}(iii) is equivalent to $E[S_i|W_i]$ being linear in $W_i$.}
\cite{blandhol2022tsls} showed in a related context that Assumption \ref{ass:heterogss}(iii) is a necessary condition for TSLS to have a causal interpretation.
We therefore impose Assumption \ref{ass:heterogss}(iii)  not because it is innocuous, but because it enables us to examine what additional conditions are necessary for TSLS to have a causal interpretation.
Assumption \ref{ass:heterogss}(iii) is of course testable, and practitioners may wish to examine its validity in assessing whether the TSLS estimator can be assigned a causal interpretation.

Our next proposition characterizes the weights $\{\omega_{i,n}\}_{i=1}^n$ and obtains necessary and sufficient conditions for them to be positive with probability one.

\begin{proposition}\label{prop:heterogss}
If Assumptions \ref{ass:heterogss} holds and $\mathcal G_n = \Sh$, then it follows that
\begin{equation}\label{prop:heterogssdisp}
\omega_{i,n} = \frac{\Sh^\prime (\Lambda_i {\rm Var}\{S_i|W_i\})\Sh}{\sum_{j=1}^n \Sh^\prime (\Lambda_j {\rm Var}\{S_j|W_j\})\Sh}.
\end{equation}
Moreover, if in addition Assumption \ref{ass:heterogssextra} holds, then for $n$ large $\omega_{i,n}$ is positive for all $i$ with probability one (over $\Lambda_i,W_i,\Sh$) if and only if the matrix $\Lambda_i {\rm Var}\{S_i|W_i\}$ is either positive semi-definite with probability one, or negative semi-definite with probability one.
\end{proposition}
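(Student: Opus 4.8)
The plan is to prove the two assertions separately: first the closed form \eqref{prop:heterogssdisp} for the weights, by directly evaluating the conditional mean $E[X_i\dot Z_{i,n}\mid\Lambda_i,W_i,\mathcal G_n]$ that enters the decomposition of Lemma \ref{lm:decomp}; and then the positivity characterization, by recognizing $\omega_{i,n}$ as a normalized quadratic form in $\mathcal Z$ and analyzing when $n$ i.i.d.\ such forms can simultaneously share a common sign. For the first part I would substitute the first stage \eqref{model:eq14} and the identity $\dot Z_{i,n}=\mathcal Z^\prime S_i-W_i^\prime\pi_n$ into $X_i\dot Z_{i,n}$, expand, and take $E[\,\cdot\mid\Lambda_i,W_i,\mathcal G_n]$ with $\mathcal G_n=\{\mathcal Z\}$. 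By Assumption \ref{ass:heterogss}(ii), $S_i$ is independent of $(\mathcal Z,\Lambda_i,\eta_i)$ given $W_i$, so every appearance of $S_i$ is replaced by its conditional first or second moment given $W_i$ alone and the $\eta_i$ term reduces to $E[\eta_i\mid\Lambda_i,W_i,\mathcal Z]\,(\mathcal Z^\prime E[S_i\mid W_i]-W_i^\prime\pi_n)$. Assumption \ref{ass:heterogss}(iii) supplies $W_i^\prime\pi_n=E[Z_i\mid W_i,\mathcal G_n]=\mathcal Z^\prime E[S_i\mid W_i]$, which annihilates that $\eta_i$ term and the control cross-term $W_i^\prime\gamma_{\mathtt{f}}(\mathcal Z^\prime E[S_i\mid W_i]-W_i^\prime\pi_n)$, and collapses $\mathcal Z^\prime\Lambda_i E[S_iS_i^\prime\mid W_i]\mathcal Z-(\mathcal Z^\prime\Lambda_i E[S_i\mid W_i])(W_i^\prime\pi_n)$ into $\mathcal Z^\prime\Lambda_i{\rm Var}\{S_i\mid W_i\}\mathcal Z$. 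Thus $E[X_i\dot Z_{i,n}\mid\Lambda_i,W_i,\mathcal G_n]=\mathcal Z^\prime(\Lambda_i{\rm Var}\{S_i\mid W_i\})\mathcal Z$, and substituting this into the numerator of $\omega_{i,n}$ and, by the same computation, into the normalizing sum yields \eqref{prop:heterogssdisp}.

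For the positivity claim I would write $a_i\equiv\mathcal Z^\prime(\Lambda_i{\rm Var}\{S_i\mid W_i\})\mathcal Z$, so that $\omega_{i,n}=a_i/\sum_{j=1}^n a_j$ and $\{\omega_{i,n}>0\text{ for all }i\}=\{a_i>0\ \forall i\}\cup\{a_i<0\ \forall i\}$ (the event on which the sum vanishes being excluded, for $n$ large, by Assumption \ref{ass:heterogssextra}). Under Assumption \ref{ass:heterogssextra} the $\{(\Lambda_i,W_i,S_i)\}_{i=1}^n$ are i.i.d.\ and independent of $\mathcal Z$, so conditionally on $\mathcal Z=z$ the $a_i$ are i.i.d.\ and $P(\omega_{i,n}>0\ \forall i\mid\mathcal Z=z)=P(a_1>0\mid z)^n+P(a_1<0\mid z)^n$. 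Since $p^n+q^n\le p+q\le 1$ for $p,q\ge0$ with $p+q\le1$, with equality for $n\ge2$ only if $\{p,q\}=\{0,1\}$, this probability equals one iff $a_1$ has a deterministic nonzero sign given $\mathcal Z=z$; integrating over $\mathcal Z$, $\omega_{i,n}>0$ for all $i$ with probability one iff for $\mathcal Z$-a.e.\ $z$ one has $z^\prime(\Lambda_1{\rm Var}\{S_1\mid W_1\})z>0$ a.s.\ or $z^\prime(\Lambda_1{\rm Var}\{S_1\mid W_1\})z<0$ a.s. The ``if'' direction of the Proposition then follows at once: if $\Lambda_1{\rm Var}\{S_1\mid W_1\}$ is positive semi-definite a.s.\ (resp.\ negative semi-definite a.s.) then $a_i\ge0$ (resp.\ $\le0$) a.s.\ for all $i$, and the full ($p$-dimensional) support of $\mathcal Z$ together with Assumption \ref{ass:heterogssextra} upgrades this to strict inequalities and rules out a vanishing denominator.

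The ``only if'' direction runs this chain backwards and is where the main obstacle lies. From the deterministic-sign condition one learns only that, for a.e.\ realization of $\Lambda_1{\rm Var}\{S_1\mid W_1\}$, the open cone $\{z:z^\prime(\Lambda_1{\rm Var}\{S_1\mid W_1\})z>0\}$ coincides, up to a $\mathcal Z$-null set, with one fixed cone; by itself this does not force $\Lambda_1{\rm Var}\{S_1\mid W_1\}$ to be sign-definite --- it does not rule out, e.g., $\Lambda_i{\rm Var}\{S_i\mid W_i\}$ being a.s.\ a positive multiple of a single fixed indefinite matrix, for which the weights $a_i/\sum_j a_j$ are trivially positive. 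Passing from ``$z\mapsto z^\prime(\Lambda_1{\rm Var}\{S_1\mid W_1\})z$ has a deterministic sign for $\mathcal Z$-a.e.\ direction $z$'' to ``$\Lambda_1{\rm Var}\{S_1\mid W_1\}$ is a.s.\ positive semi-definite, or a.s.\ negative semi-definite'' is exactly where the full support of $\mathcal Z$ and the non-degeneracy conditions collected in Assumption \ref{ass:heterogssextra} do the real work, and verifying that they suffice (and, in the ``if'' direction, that $\mathcal Z$ cannot charge a null direction of a singular positive-semi-definite $\Lambda_i{\rm Var}\{S_i\mid W_i\}$) is the delicate step. The hypothesis ``$n$ large'' is used only to keep the normalizing sum $\sum_j a_j$ away from zero and, via the $p^n+q^n\le1$ bound, to turn ``all weights positive'' into ``deterministic sign'' (for which any $n\ge2$ already suffices).
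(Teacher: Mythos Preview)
Your derivation of \eqref{prop:heterogssdisp} and your reduction of the positivity question to a ``deterministic sign'' condition --- that for $\mathcal Z$-a.e.\ $z$, the form $z^\prime(\Lambda_1{\rm Var}\{S_1\mid W_1\})z$ has a.s.\ constant sign --- are correct and coincide with the paper's argument (the paper packages the latter reduction as Lemma \ref{lm:sign} and obtains it via the same exponent trick you use). Where you stop is exactly where the paper's real work begins, and you have not supplied the two ideas that close the gap.

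The first is a perturbation lemma (Lemma \ref{aux:lin}): if $\Gamma^\ast=\Lambda^\ast\Sigma^\ast$ is indefinite with $\Lambda^\ast$ diagonal and $\Sigma^\ast$ positive semi-definite of rank $\geq p-1$, then for every $\delta>0$ there exist $z_0$ with $\|z_0\|<\delta$ and diagonal $H_1,H_2$ within $\delta$ of $\Lambda^\ast$ satisfying $z_0^\prime(H_1\Sigma^\ast)z_0>0$ and $z_0^\prime(H_2\Sigma^\ast)z_0<0$. Coupled with Assumption \ref{ass:heterogssextra}(iv) (continuous Lebesgue density on $(\lambda_{i1},\ldots,\lambda_{ip})$, so $H_1,H_2\in{\rm supp}\{\Lambda_i\}$), (v) (a neighborhood of $0$ in ${\rm supp}\{\mathcal Z\}$, so $z_0\in{\rm supp}\{\mathcal Z\}$), and the product-support condition (ii), this manufactures a $z_0$ at which both signs of $a_1$ occur with positive probability, contradicting the deterministic-sign condition and hence forcing every realization of $\Lambda_1{\rm Var}\{S_1\mid W_1\}$ to be semi-definite. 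This is precisely what dispatches your own counterexample: $\Lambda_i{\rm Var}\{S_i\mid W_i\}$ being a.s.\ a positive scalar multiple of a fixed indefinite $M$ would, for any fixed value of $W_i$, pin $(\lambda_{i1},\ldots,\lambda_{ip})$ to a one-dimensional ray, which is incompatible with a Lebesgue density. The second missing idea upgrades ``each realization is semi-definite'' to ``all realizations are semi-definite \emph{of the same sign}'': the paper sets $S_+=\{z:P(z^\prime\Gamma_1 z\geq 0)=1\}$ and $S_-=\{z:P(z^\prime\Gamma_1 z\leq 0)=1\}$, uses a short linear-algebra lemma (Lemma \ref{aux:lindep}) to show that $P(\mathcal Z\in S_+\cup S_-)=1$ together with (v) forces one of $S_+,S_-$ to span $\R^p$, and then argues that if $S_+$ contains $p$ linearly independent directions, any negative-semi-definite realization of $\Gamma_1$ must in fact vanish, yielding $P(\Gamma_1\geq0)=1$.
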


The first part of Proposition \ref{prop:heterogss} shows that the weights may be expressed as a quadratic form in the shocks $\Sh$ (see \eqref{prop:heterogssdisp}).
Intuitively, provided that the support of the shocks $\Sh$ is sufficiently rich, it therefore follows that the weights can only be positive for all possible realizations of the shocks $\Sh$ if in fact the matrices $\Lambda_i\text{Var}\{S_i|W_i\}$ are positive semi-definite (or negative semi-definite) with probability one.
The second part of Proposition \ref{prop:heterogss} formalizes this intuition under the requirement that the support of the shocks $\Sh$ contain a neighborhood of zero.\footnote{This requirement is formally stated in Assumption \ref{ass:heterogssextra} in the Appendix.}

The necessary and sufficient conditions derived in Proposition \ref{prop:heterogss} for TSLS to have a causal interpretation are both highly restrictive and testable.
Our next corollary shows that, under mild conditions on the support of $\Lambda_i$, these conditions in fact necessarily fail whenever shares are correlated with each other.

\begin{corollary}\label{cor:heteroimp}
Let the conditions of Proposition \ref{prop:heterogss} hold and suppose that the support of $\lambda_{ij}/\lambda_{ik}$ is unbounded for any $j\neq k$.
If for some $j\neq k$ we have that
\begin{equation}\label{cor:heteroimpdisp}
P\left({\rm Cov}\{S_{ij},S_{ik}|W_i\}\neq 0\right) > 0,
\end{equation}
then the weights $\{\omega_{i,n}\}_{i=1}^n$ are negative with positive probability (over $(\Lambda_i,W_i,\Sh)$).
\end{corollary}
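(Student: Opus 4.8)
The plan is to derive the corollary from the second part of Proposition~\ref{prop:heterogss}. That result says that, for $n$ large, the weights $\{\omega_{i,n}\}_{i=1}^n$ are positive for all $i$ with probability one if and only if $\Lambda_i \mathrm{Var}\{S_i|W_i\}$ is almost surely positive semi-definite or almost surely negative semi-definite. Hence it suffices to exhibit a positive-probability event (over $(\Lambda_i,W_i)$) on which $\Lambda_i \mathrm{Var}\{S_i|W_i\}$ is \emph{indefinite}: on such an event the matrix is with positive probability neither positive nor negative semi-definite, so the right-hand side of the equivalence fails, and therefore the weights cannot all be positive with probability one. I would then upgrade ``some $\omega_{i,n}\le 0$ with positive probability'' to ``some $\omega_{i,n}<0$ with positive probability'' by noting that, under Assumption~\ref{ass:heterogssextra}, the quadratic form $z\mapsto z^\prime(\Lambda_i\mathrm{Var}\{S_i|W_i\})z$ appearing in \eqref{prop:heterogssdisp} vanishes only on a Lebesgue-null set of shocks whenever $\Lambda_i\mathrm{Var}\{S_i|W_i\}\neq 0$, so that the weights are almost surely nonzero and the sign of the weight associated to an individual for whom the numerator is negative must itself be negative.

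The core of the argument is a two-by-two computation. Write $v_{\ell m}(W_i)\equiv\mathrm{Cov}\{S_{i\ell},S_{im}|W_i\}$, fix the pair $j\neq k$ for which \eqref{cor:heteroimpdisp} holds, and work on the event $E\equiv\{v_{jk}(W_i)\neq 0\}$, which has positive probability by hypothesis; by Cauchy--Schwarz, on $E$ one also has $v_{jj}(W_i)>0$ and $v_{kk}(W_i)>0$. Restricting $z\mapsto z^\prime(\Lambda_i\mathrm{Var}\{S_i|W_i\})z$ to the coordinate plane spanned by $e_j$ and $e_k$ (using $\Lambda_i=\mathrm{diag}\{(\lambda_{i1},\dots,\lambda_{ip})\}$ and symmetry of the variance matrix) yields the symmetric matrix
\begin{equation*}
M_{jk}\equiv\begin{pmatrix}\lambda_{ij}v_{jj}(W_i) & \tfrac12(\lambda_{ij}+\lambda_{ik})v_{jk}(W_i)\\[2pt] \tfrac12(\lambda_{ij}+\lambda_{ik})v_{jk}(W_i) & \lambda_{ik}v_{kk}(W_i)\end{pmatrix},
\end{equation*}
whose determinant equals $\lambda_{ij}\lambda_{ik}v_{jj}(W_i)v_{kk}(W_i)-\tfrac14(\lambda_{ij}+\lambda_{ik})^2 v_{jk}(W_i)^2$. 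Dividing by $v_{jj}(W_i)v_{kk}(W_i)>0$ and by $\lambda_{ik}^2$, and writing $r\equiv\lambda_{ij}/\lambda_{ik}$ and $\rho\equiv v_{jk}(W_i)/\sqrt{v_{jj}(W_i)v_{kk}(W_i)}\in[-1,1]\setminus\{0\}$, the sign of $\det M_{jk}$ equals the sign of the downward parabola $-\tfrac14\rho^2 r^2+(1-\tfrac12\rho^2)r-\tfrac14\rho^2$; since $\rho\neq 0$ this is strictly negative for all sufficiently extreme values of the ratio $r$ (in particular for every $r\le 0$, and for $r$ large and positive). A symmetric two-by-two matrix with negative determinant is indefinite, and a symmetric matrix possessing an indefinite principal submatrix is itself indefinite; hence on the sub-event of $E$ where $r$ is extreme, $\Lambda_i\mathrm{Var}\{S_i|W_i\}$ is indefinite.

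It then remains to verify that this sub-event has positive probability, which is the step I expect to require the most care. The hypothesis that the support of $\lambda_{ij}/\lambda_{ik}$ is unbounded gives $P(r\text{ extreme})>0$, and I would combine this with $P(E)>0$ using Assumption~\ref{ass:heterogss}(ii) — which makes $(\Lambda_i,\beta_i,\varepsilon_i,\eta_i)$ independent of $S_i$ given $W_i$ — together with the remaining regularity conditions in Assumption~\ref{ass:heterogssextra}; since the threshold for ``$r$ extreme'' depends on $W_i$ only through $\rho$, one can first pass to a further sub-event $\{|\rho|\ge\delta\}\subseteq E$ of positive probability, on which this threshold is uniform, and then invoke the unbounded support of the type ratio. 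Once indefiniteness on a positive-probability event is established, Proposition~\ref{prop:heterogss} and the almost-sure nonvanishing of the weights deliver the conclusion that, for $n$ large, at least one $\omega_{i,n}$ is negative with positive probability over $(\Lambda_i,W_i,\Sh)$.
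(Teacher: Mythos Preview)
Your proposal is correct and follows essentially the same approach as the paper: both proofs invoke Proposition~\ref{prop:heterogss}, pass to the $2\times 2$ principal submatrix indexed by $(j,k)$, symmetrize it to obtain the matrix you call $M_{jk}$ (the paper's $\Omega_i$), analyze its determinant as a function of the ratio $\lambda_{ij}/\lambda_{ik}$, and then combine the unbounded-support hypothesis with the nonzero-correlation event via the product-support condition in Assumption~\ref{ass:heterogssextra}(ii). The only cosmetic difference is that the paper packages the determinant condition as the inequality $1\ge\tfrac{\rho^2}{4}(r+r^{-1}+2)$ and shows directly that $P(\det\Omega_i\ge 0,\ \lambda_{ij}\lambda_{ik}\ge 0)<1$, whereas you write the determinant as a downward parabola in $r$ and first restrict to a sub-event $\{|\rho|\ge\delta\}$; these are equivalent computations.
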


Corollary \ref{cor:heteroimp} implies that, whenever shares are correlated, TSLS necessarily lacks a causal interpretation under certain realizations of the shocks $\Sh$.
Because shares must be correlated whenever they sum up to one, we view Corollary \ref{cor:heteroimp} as a  warning that TSLS based on the Bartik instrument can easily fail to have a causal interpretation under a heterogenous effects model.
However, it may be worth emphasizing that, as argued by \cite{goldsmith2020bartik}, empirical researchers may still be able to estimate causal parameters by carefully employing the shares as separate instruments instead of combining them into a single scalar $Z_i$ by employing the shocks $\Sh$.

\begin{remark}\label{rm:heterotest} \rm
Corollary \ref{cor:heteroimp} yields testable implications beyond restricting the correlation between shares.
Notably, because $\text{Var}\{S_i|W_i\}$ must be diagonal and $\Lambda_i$ must have constant sign in order for TSLS to have a causal interpretation, it follows that the sign of the first stage must be constant under certain choices of instrument.
In particular, letting $\tilde {\Sh}_{ij} = f_j(W_i)\Sh_j$ for any positive $f_j$ and $\tilde {\Sh}_{i} = (\tilde {\Sh}_{i1},\ldots, \tilde {\Sh}_{ip})^\prime$, it follows that
\begin{equation}\label{rm:heterotest1}
\text{sign}\left\{E[X_i \dot Z_{i,n}|\Sh]\right\} = \text{sign}\left\{E[X_i (S_i - E[S_i|W_i])^\prime \tilde {\Sh}_{i}|\Sh ]\right\}.
\end{equation}
Since the conditional mean of the shares is linear in the controls in our setting, we obtain a simple diagnostic: Compare the sign of the covariances between $X_i$ and $\dot S_{i,n}^\prime \tilde {\Sh}_i$ for different choices of $\tilde {\Sh}_i$ -- here, $\dot S_{i,n}$ denotes the residual from regressing $S_i$ on $W_i$. \qed
\end{remark}


\subsubsection{Identification Through Shocks}\label{sec:heteroakm}

We next examine the conditions under which TSLS retains a causal interpretation in an asymptotic framework in which identification is driven by the exogeneity of the aggregate shocks.
To this end, we follow \cite{adao2019shift} and \cite{borusyak2022quasi} in augmenting the conditioning set $\mathcal G_n$ of Section \ref{sec:homogakm} to include all unobserved heterogeneity -- i.e., we set $\mathcal G_n$ to equal $\{S_i,W_i,\Lambda_i,\beta_i,\varepsilon_i,\eta_i\}_{i=1}^n$.

Our analysis in this section relies on the following assumption and a set of regularity conditions that we formally state in the Appendix; see Assumption \ref{ass:heteroakmextra}.

\begin{assumption}\label{ass:heteroakm}
(i) $\Sh$ and $\{Y_i,X_i,W_i,S_i\}_{i=1}^n$ satisfy equations \eqref{model:eq13} and \eqref{model:eq14};
(ii) $E[Z_i|W_i,\mathcal G_n] = W_i^\prime \pi_n$ with $\mathcal G_n = \{S_i,W_i,\Lambda_i,\beta_i,\varepsilon_i,\eta_i\}_{i=1}^n$.
\end{assumption}

Assumption \ref{ass:heteroakm} imposes the structure of our model and requires that the conditional mean of the instrument given the controls be linear.\footnote{Sufficient conditions for Assumption \ref{ass:heteroakm}(ii) were introduced by \cite{adao2019shift}; see Remark \ref{rm:akmrel}.}
As previously noted in Section \ref{sec:heterogss}, linearity has been shown to be a necessary condition for TSLS to preserve a causal interpretation in related contexts.
We reiterate that we therefore impose linearity to examine what additional conditions are needed for TSLS to have a causal interpretation, and not because we view linearity as an inoccuous assumption.
It is also worth emphasizing the connection between Assumption \ref{ass:heteroakm}(ii) and Proposition \ref{prop:homogakm}, which established that linearity was a necessary condition for the instrument to be exogenous in a homogeneous effects model.
As a result, the overidentification test for the homogenous effects model that we develop in Section \ref{sec:test} may be readily adapted to test Assumption \ref{ass:heteroakm}(ii).

Our next proposition obtains a characterization of the weights $\{\omega_{i,n}\}_{i=1}^n$ and derives a necessary and sufficient condition for TSLS to have a causal interpretation.

\begin{proposition}\label{prop:heteroakm}
If Assumption \ref{ass:heteroakm} holds, and $\mathcal G_n = \{ S_i, W_i, \Lambda_i, \beta_i,\varepsilon_i,\eta_i\}_{i=1}^n$, then
\begin{equation}\label{prop:heteroakmdisp}
\omega_{i,n} = \frac{S_i^\prime (\Lambda_i {\rm Var}\{\Sh|\mathcal G_n\})S_i}{\sum_{j=1}^n S_j^\prime (\Lambda_j {\rm Var}\{\Sh|\mathcal G_n\})S_j}.
\end{equation}
Moreover, if in addition Assumption \ref{ass:heteroakmextra} holds, then for $n$ large $\omega_{i,n}$ is positive for all $i$ with probability one (over $\mathcal G_n$) if and only if the distribution of $\Lambda_i{\rm Var}\{\Sh|\mathcal G_n\}$ satisfies
\begin{equation}\label{lm:heteroakmdisp2}
P\left(\sup_{s\in \R^p_+} s^\prime (\Lambda_i{\rm Var}\{\Sh|\mathcal G_n\}) s \leq 0\right) = 1  \text{ or } P\left(\inf_{s\in \R^p_+} s^\prime (\Lambda_i{\rm Var}\{\Sh|\mathcal G_n\}) s \geq 0\right) = 1.
\end{equation}
\end{proposition}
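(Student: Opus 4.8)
The plan is to first establish the weight formula \eqref{prop:heteroakmdisp} by a direct computation from the decomposition in Lemma \ref{lm:decomp} (equation \eqref{model:eq16}), and then to study the sign of the resulting quadratic forms. Since the conditioning set $\mathcal G_n = \{S_i,W_i,\Lambda_i,\beta_i,\varepsilon_i,\eta_i\}_{i=1}^n$ contains $S_i$, $W_i$, and $\Lambda_i$, every expectation of the form $E[\cdot|\Lambda_i,W_i,\mathcal G_n]$ in \eqref{model:eq16} collapses to $E[\cdot|\mathcal G_n]$, and $\pi_n$ from \eqref{model:eq4} becomes the in-sample least-squares coefficient from projecting $E[Z_i|\mathcal G_n] = S_i^\prime E[\Sh|\mathcal G_n]$ on $W_i$. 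Using Assumption \ref{ass:heteroakm}(ii) --- which, because $W_i\in\mathcal G_n$, reads $W_i^\prime\pi_n = E[Z_i|\mathcal G_n] = S_i^\prime E[\Sh|\mathcal G_n]$ --- I would rewrite the residualized instrument as $\dot Z_{i,n} = S_i^\prime(\Sh - E[\Sh|\mathcal G_n])$, which has conditional mean zero given $\mathcal G_n$.

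Next I would substitute the first stage \eqref{model:eq14} into $E[X_i\dot Z_{i,n}|\mathcal G_n]$. The contributions of $W_i^\prime\gamma_{\mathtt f}$ and $\eta_i$ vanish because these are $\mathcal G_n$-measurable and multiply $\dot Z_{i,n}$, whose conditional mean is zero; only $\Sh^\prime\Lambda_i S_i = S_i^\prime\Lambda_i\Sh$ survives. Writing $\Sh = (\Sh - E[\Sh|\mathcal G_n]) + E[\Sh|\mathcal G_n]$ and using that $\Lambda_i$ is diagonal (hence symmetric) and $\mathcal G_n$-measurable, the cross term drops out and $E[X_i\dot Z_{i,n}|\mathcal G_n] = S_i^\prime(\Lambda_i{\rm Var}\{\Sh|\mathcal G_n\})S_i$. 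Summing over $i$ yields the denominator, so \eqref{prop:heteroakmdisp} follows.

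For the positivity claim, note that $\{\omega_{i,n}\}_{i=1}^n$ are all (weakly) positive precisely when the scalars $S_i^\prime(\Lambda_i{\rm Var}\{\Sh|\mathcal G_n\})S_i$ all carry the sign of their sum. For the ``if'' direction, suppose the second alternative in \eqref{lm:heteroakmdisp2} holds, so $\Lambda_i{\rm Var}\{\Sh|\mathcal G_n\}$ is copositive almost surely; since shares lie in $\R^p_+$, each numerator is a.s.\ nonnegative, and the regularity conditions in Assumption \ref{ass:heteroakmextra} rule out the degenerate configuration in which the denominator vanishes, yielding $\omega_{i,n}\geq 0$ for all $i$ with probability one (the first alternative is handled symmetrically). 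For the ``only if'' direction I would argue by contraposition: if both alternatives in \eqref{lm:heteroakmdisp2} fail, then with positive probability there is $s\in\R^p_+$ with $s^\prime(\Lambda_i{\rm Var}\{\Sh|\mathcal G_n\})s<0$, and with positive probability there is $s\in\R^p_+$ with $s^\prime(\Lambda_i{\rm Var}\{\Sh|\mathcal G_n\})s>0$. Invoking the richness of the support of $S_i$ posited in Assumption \ref{ass:heteroakmextra}, these give $P(S_i^\prime(\Lambda_i{\rm Var}\{\Sh|\mathcal G_n\})S_i<0)>0$ and $P(S_i^\prime(\Lambda_i{\rm Var}\{\Sh|\mathcal G_n\})S_i>0)>0$; combined with cross-sectional independence across units and $n$ large enough, this produces a positive-probability event on which two numerators have strictly opposite signs, forcing some $\omega_{i,n}$ to be negative.

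I expect the ``only if'' direction to be the main obstacle. The delicate step is passing from the geometric statement that $\Lambda_i{\rm Var}\{\Sh|\mathcal G_n\}$ fails to be copositive (resp.\ co-negative semidefinite) on the orthant to the probabilistic statement that the realized quadratic form $S_i^\prime(\Lambda_i{\rm Var}\{\Sh|\mathcal G_n\})S_i$ is negative (resp.\ positive) with positive probability, and then arranging for a negative and a positive numerator to co-occur; both rely on precisely which support and independence conditions are collected in Assumption \ref{ass:heteroakmextra}. By contrast, the computation behind \eqref{prop:heteroakmdisp} is routine once the conditioning sets are simplified, and closely parallels the analogous step in Proposition \ref{prop:heterogss}.
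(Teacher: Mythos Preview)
Your proposal is correct and follows essentially the same route as the paper. The derivation of \eqref{prop:heteroakmdisp} via Lemma \ref{lm:decomp}, the observation that $\dot Z_{i,n}=S_i'(\Sh-E[\Sh|\mathcal G_n])$ under Assumption \ref{ass:heteroakm}(ii), and the substitution of the first stage are exactly what the paper does; the paper then defers the positivity characterization to Lemma \ref{lm:signakm}, whose proof is your contrapositive argument (conditional i.i.d.\ powers plus support richness). The one ingredient you do not name explicitly is an intermediate step the paper isolates as Lemma \ref{lm:auxsign}: to pass from ``$\Gamma_{i,n}\in O_+\cup O_-$ a.s.'' to ``$\Gamma_{i,n}\in O_+$ a.s.\ or $\Gamma_{i,n}\in O_-$ a.s.'' one first shows $P(\Lambda_i\geq 0)=1$ or $P(\Lambda_i\leq 0)=1$, which is precisely what resolves the issue you flag about arranging a positive and a negative numerator to co-occur at a common realization of $\mathcal W_n$.
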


The first part of Proposition \ref{prop:heteroakm} establishes that the weights $\{\omega_{i,n}\}_{i=1}^n$ equal a quadratic form in the shares $S_i$.
Intuitively, the expression for the weights as a quadratic form in the shares implies that the support of the random matrix $\Lambda_i \text{Var}\{\Sh|\mathcal G_n\}$ must be restricted in order for the weights to be positive for any possible realization of the shares.
The second part of Proposition \ref{prop:heteroakm} formalizes this intuition under a requirement that the support of the shares is suitably rich; see Assumption \ref{ass:heteroakmextra} for a formal statement.

Because, unlike the shocks, the shares are always positive, the necessary and sufficient condition for TSLS to have causal interpretation (i.e.\ \eqref{lm:heteroakmdisp2}) is weaker than requiring the random matrix $\Lambda_i\text{Var}\{\Sh|\mathcal G_n\}$ to be positive (or negative) semidefinite with probability one.
The conditions derived by Proposition \ref{prop:heteroakm} are nonetheless restrictive and testable.
Our next corollary, for instance, employs Proposition \ref{prop:heteroakm} to establish that TSLS fails to have a causal interpretation whenever shocks are negatively correlated.

\begin{corollary}\label{cor:heteroimpakm}
Let the conditions of Proposition \ref{prop:heteroakm} hold and suppose $P(\lambda_{ij}=0)=0$ for all $j$
and the support of $\lambda_{ij}/\lambda_{ik}$ is unbounded for any $j\neq k$.
If in addition
\begin{equation}\label{cor:heteroimpakmdisp}
P\left({\rm Cov}\{\Sh_j,\Sh_k|\mathcal G_n\} < 0\right) > 0
\end{equation}
for some $j\neq k$, then the weights $\{\omega_{i,n}\}_{i=1}^n$ are negative with positive probability.
\end{corollary}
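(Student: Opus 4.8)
Here $\Sigma\equiv\mathrm{Var}\{\Sh\mid\mathcal G_n\}$ denotes the (symmetric, positive semidefinite) $p\times p$ conditional variance of the shocks and $M_i\equiv\Lambda_i\Sigma$. The plan is to derive the result from Proposition \ref{prop:heteroakm}: since the weights $\{\omega_{i,n}\}_{i=1}^n$ fail to be positive for all $i$ with probability one precisely when condition \eqref{lm:heteroakmdisp2} is violated, it suffices to produce a single event of positive probability on which \emph{both} $\sup_{s\in\R^p_+}s^\prime M_i s>0$ and $\inf_{s\in\R^p_+}s^\prime M_i s<0$. By hypothesis \eqref{cor:heteroimpakmdisp}, the event $E\equiv\{\Sigma_{jk}<0\}$ has positive probability, and on $E$ positive semidefiniteness of $\Sigma$ forces $\Sigma_{jj}>0$ and $\Sigma_{kk}>0$ (a zero diagonal entry of a positive semidefinite matrix would force the corresponding row and column to vanish). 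I will carry out the construction on (a positive-probability subset of) $E$.

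First I would reduce to the coordinates $j,k$: restricting $s$ to lie in the span of the standard basis vectors $e_j,e_k$ with nonnegative coefficients only lowers $\sup_{s\in\R^p_+}s^\prime M_i s$ and raises $\inf_{s\in\R^p_+}s^\prime M_i s$, and for $s=a\,e_j+b\,e_k$ one has $s^\prime M_i s=q(a,b)\equiv\lambda_{ij}\Sigma_{jj}\,a^2+\lambda_{ik}\Sigma_{kk}\,b^2+(\lambda_{ij}+\lambda_{ik})\Sigma_{jk}\,ab$. The next step is a short completion-of-the-square applied to $t\mapsto q(1,t)$ on $t\ge 0$, distinguishing three cases. (i) If $\lambda_{ij}\lambda_{ik}<0$, then $q(e_j)=\lambda_{ij}\Sigma_{jj}$ and $q(e_k)=\lambda_{ik}\Sigma_{kk}$ already have opposite (nonzero) signs, so $q$ — and hence $s^\prime M_i s$ over $\R^p_+$ — takes both signs with no further condition. (ii)--(iii) If $\lambda_{ij}\lambda_{ik}>0$, then on $E$ the vertex of $t\mapsto q(1,t)$ sits at a strictly positive $t^\star$, and the extreme value there has sign opposite to $q(e_j)$ exactly when $4\,\lambda_{ij}\lambda_{ik}\,\Sigma_{jj}\Sigma_{kk}<(\lambda_{ij}+\lambda_{ik})^2\Sigma_{jk}^2$; writing $\rho^2\equiv\Sigma_{jk}^2/(\Sigma_{jj}\Sigma_{kk})\in(0,1]$, $u\equiv\sqrt{1-\rho^2}$, and $r\equiv\lambda_{ij}/\lambda_{ik}>0$, this is the same as $r+r^{-1}+2>4/\rho^2$, i.e. $r\notin\big[(1-u)/(1+u),\,(1+u)/(1-u)\big]$. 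On that regime $q(1,0)$ and $q(1,t^\star)$ have opposite signs, so again $s^\prime M_i s$ takes both signs over $\R^p_+$.

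Finally I would combine this with the hypotheses on $\Lambda_i$. Since $P(\lambda_{ij}=0)=0$ for all $j$, the ratio $R\equiv\lambda_{ij}/\lambda_{ik}$ is well defined almost surely and, by assumption, has unbounded support; consequently $R$ is negative, or lies outside the bounded interval $[(1-u)/(1+u),(1+u)/(1-u)]$, with positive probability, and I would argue that this occurs jointly with $E$ on an event of positive probability. On that event both $\sup_{s\in\R^p_+}s^\prime M_i s>0$ and $\inf_{s\in\R^p_+}s^\prime M_i s<0$, so condition \eqref{lm:heteroakmdisp2} fails and Proposition \ref{prop:heteroakm} yields that $\{\omega_{i,n}\}_{i=1}^n$ is negative with positive probability, as claimed.

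I expect the genuine difficulty to be the last step: the statement ``the support of $\lambda_{ij}/\lambda_{ik}$ is unbounded'' must be made to interact with conditioning on $E=\{\Sigma_{jk}<0\}$, yet $\Sigma$ is itself $\mathcal G_n$-measurable and, since $\Lambda_i\in\mathcal G_n$, may be dependent on $\Lambda_i$; moreover the exceptional interval for $R$ depends on $\Sigma$ through $\rho$ and blows up as $\rho\downarrow 0$. This is precisely where the auxiliary regularity conditions of Assumption \ref{ass:heteroakmextra} enter — one would invoke them either to render $R$ independent of the shock distribution (so that its unbounded support is inherited conditionally on $E$) or to bound $\rho$ away from $0$ on $E$ (so that the exceptional interval is uniformly bounded and unbounded support of $R$ suffices). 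A secondary, minor point is upgrading ``the weights are not all positive'' — the literal contrapositive of Proposition \ref{prop:heteroakm} — to ``some weight is strictly negative with positive probability''; this is absorbed into Proposition \ref{prop:heteroakm} itself via the richness of the support of the shares.
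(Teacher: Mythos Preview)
Your proposal is essentially correct and follows the same route as the paper. Both arguments reduce to the $2\times 2$ block in coordinates $(j,k)$, complete the square in the induced quadratic, and obtain the identical threshold condition $r+r^{-1}+2>4/\rho^2$ (equivalently, the paper's $1<\tfrac{\rho^2}{4}(\lambda_j/\lambda_k+\lambda_k/\lambda_j+2)$), which then fails on a set of positive probability because the ratio $r=\lambda_{ij}/\lambda_{ik}$ has unbounded support.

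Two organizational differences are worth noting. First, the paper opens by observing that Proposition \ref{prop:heteroakm} together with Assumption \ref{ass:heteroakmextra}(i) already forces $P(\Lambda_i\ge 0)=1$ or $P(\Lambda_i\le 0)=1$ (take $s=e_j$ in \eqref{lm:heteroakmdisp2} and use ${\rm Var}\{\Sh_j\mid\mathcal W_n\}>0$); this eliminates your case (i) with $\lambda_{ij}\lambda_{ik}<0$ at the outset and lets the paper argue only the ``same sign'' branch. Your direct case split is fine but slightly longer. Second, and more importantly, you correctly flag the dependence issue between $\Sigma$ and $\Lambda_i$ as the crux, and the paper resolves it exactly via your first suggested route: Assumption \ref{ass:heteroakmextra}(i) makes $\Sigma={\rm Var}\{\Sh\mid\mathcal W_n\}$ a function of $\mathcal W_n$ alone, and Assumption \ref{ass:heteroakmextra}(iii) gives the product support ${\rm supp}\{(\Lambda_i,S_i,\mathcal W_n)\}={\rm supp}\{\Lambda_i\}\times{\rm supp}\{S_i\}\times{\rm supp}\{\mathcal W_n\}$, so the unbounded support of $r$ is inherited on the $\mathcal W_n$-measurable event $\{\rho_{jk}<0\}$. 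Your alternative route (bounding $\rho$ away from zero on $E$) is not what the paper uses and would need extra hypotheses, so in a clean write-up you should commit to the product-support argument.
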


Traditionally, the literature has assumed shocks to either be uncorrelated or clustered for asymptotic purposes \citep{adao2019shift,borusyak2022quasi}.
Corollary \ref{cor:heteroimpakm} highlights that the covariance structure of the shocks is important not only for delivering asymptotic approximations and computing standard errors, but also for assigning TSLS a causal interpretation.
Corollary \ref{cor:heteroimpakm} additionally has important implications for applications in which standard errors are clustered.
In particular, since clustered standard errors reflect a concern that shocks within a cluster are correlated, Corollary \ref{cor:heteroimpakm} implies that TSLS can only retain a causal interpretation if in fact all correlations within a cluster are positive conditionally on $\mathcal G_n$.
Empirical researchers clustering standard errors should therefore argue that all correlations within a cluster are positive or, alternatively, rely on a homogeneous effects model.

\begin{remark}\label{rm:heteroakmtest} \rm
Because all entries of $\text{Var}\{\Sh|\mathcal G_n\}$ must be (weakly) positive and the sign of $\Lambda_i$ must be constant  for TSLS to possess a causal interpretation, it follows that
\begin{equation}\label{rm:heteroakmtest1}
\text{sign}\left\{\sum_{i=1}^n E[X_i\dot Z_{i,n}|\mathcal G_n]\right\} = \text{sign}\left\{\sum_{i=1}^n E\left[X_i(\Sh - E[\Sh|\mathcal G_n])^\prime f(S_i,W_i)|\mathcal G_n \right]\right\}
\end{equation}
for any positive $f(S_i,W_i)\in \R^p$ and $\mathcal G_n$ as in Corollary \ref{cor:heteroimpakm}.
In parallel to Remark \ref{rm:heterotest}, we may interpret the restriction in \eqref{rm:heteroakmtest1} as demanding that the sign of the first stage be constant under certain alternative choices of instrument.
Empirically evaluating this restriction can require an estimator for the conditional mean of the shocks.
For this purpose, we note that multiple such estimators have been proposed in the literature as they are also needed to compute the standard errors of the TSLS estimator when conditioning on $\mathcal G_n$ \citep{adao2019shift,borusyak2022quasi}. \qed
\end{remark}

\subsection{Long Panel} \label{sec:longpanel}

Our cross sectional analysis has so far conditioned on either the aggregate shocks or the shares to obtain identification.
Both approaches yield overidentifying restrictions for homogeneous effects models that carry over to short panel settings (i.e.\ $T$ fixed); see Remarks \ref{rm:panelGSS} and \ref{rm:panelakm}.
In this section, we conclude by exploring the implications of employing \emph{unconditional} moment restrictions for identification instead.

For illustrative purposes, we consider a homogeneous effects models with no controls and a scalar endogenous variable.
We assume that at each time period $1\leq t \leq T$ we observe an aggregate shock $\Sh_t$ and variables $\{Y_{it},X_{it},Z_{it}, S_{it}\}_{i=1}^n$ satisfying
\begin{equation*}
Y_{it} = X_{it}\beta + \varepsilon_{it} \hspace{0.5 in}  Z_{it} = S_{it}^\prime \Sh_t.
\end{equation*}
As an identifying assumption, we now employ the just identified moment restriction
\begin{equation}\label{eq:long2}
\frac{1}{nT} \sum_{i=1}^n\sum_{t=1}^T E[\varepsilon_{it} Z_{it}] = 0.
\end{equation}
Crucially, the expectation in \eqref{eq:long2} is taken over both the cross section and the time series -- e.g., over both $\Sh_t$ and $(S_{it},\varepsilon_{it})$.
In particular, the moment restriction in \eqref{eq:long2} contrasts with \cite{goldsmith2020bartik} and \cite{adao2019shift} who instead consider conditional expectations given $\Sh_t$ and $(S_{it},\varepsilon_{it})$ respectively; see Section \ref{sec:overid}.

The natural estimator for $\beta$ continues to be the same TSLS estimator that we have considered so far.
However, relying on the unconditional moment restriction in \eqref{eq:long2} for identification now requires us to let both $n$ and $T$ grow so that we may approximate expectations over both the cross section and the time series.
As a result, when employing the just identified moment restriction in \eqref{eq:long2} to obtain identification, we need to rely on ``long panel" asymptotic approximations and their corresponding standard errors.
The main message of this section is that the long panel standard errors for TSLS decrease at a rate of $1/\sqrt T$.
Therefore, long panel standard errors, and hence the identification restriction in \eqref{eq:long2}, are unlikely to prove informative in applications in which $T$ is small.

In the rest of the section, we provide a summary of the long panel asymptotic properties of the TSLS estimator.
Due to the technical nature of the analysis, we relegate formal statements to the appendix and focus instead on providing intuition for the results.
To this end, we begin by noting that the TSLS estimator $\hat \beta_n$ satisfies
\begin{equation}\label{eq:long3}
\hat{\beta}_n  -\beta = \left(\frac{1}{nT} \sum_{t=1}^T\sum_{i=1}^n X_{it}Z_{it}\right)^{-1} \left(\frac{1}{nT} \sum_{i=1}^n \sum_{t=1}^T \Sh_t^\prime S_{it}\varepsilon_{it}\right).
\end{equation}
As the cross section and time series grow, the denominator in \eqref{eq:long3} converge in probability to some constant ${\rm D}\neq 0$ under standard conditions; i.e., the denominator satisfies
\begin{equation}\label{eq:long4}
\frac{1}{nT} \sum_{t=1}^T\sum_{i=1}^n X_{it}Z_{it} \stackrel{p}{\rightarrow} {\rm D}.
\end{equation}
The asymptotic distribution of the TSLS estimator is therefore governed by the numerator in \eqref{eq:long3}.
In order to derive this asymptotic distribution, we let $\mathcal C_t$ denote all aggregate shocks at time $t$ (which includes $\Sh_t$) and assume that $\{S_{it}\varepsilon_{it}\}_{i=1}^n$ are i.i.d.\ across $i$ (but not $t$) conditionally on the aggregate shocks $\mathcal C_t$.
Defining the variables
\begin{equation*}
\zeta_t \equiv E\left[S_{it}\varepsilon_{it}|\mathcal C_t\right], \hspace{0.4 in}  \nu_{it} \equiv S_{it}\varepsilon_{it} - \zeta_t,
\end{equation*}
we then obtain a decomposition into a time series and a panel process by noting that
\begin{equation}\label{eq:long6}
\frac{1}{nT} \sum_{i=1}^n \sum_{t=1}^T \Sh_t^\prime S_{it} \varepsilon_{it} = \frac{1}{T}\sum_{t=1}^T \Sh_t^\prime \zeta_t + \frac{1}{nT}\sum_{t=1}^T \sum_{i=1}^n \Sh_t^\prime \nu_{it} .
\end{equation}
Crucially, the time series process $\{\Sh_t^\prime \zeta_t\}_{t=1}^T$ has mean zero because of the moment restriction in \eqref{eq:long2}, while the panel process $\{\Sh_t^\prime \nu_{it}\}_{i,t=1}^{n,T}$ has mean zero by construction.
Thus, under appropriate restrictions, we should expect both the time series and panel processes to be asymptotically normally distributed.

The preceding discussion is formalized in the appendix, where we establish:
\begin{proposition}\label{prop:long-panel}
Under the technical conditions presented in Appendix \ref{App:sec:longpanel} we have
\begin{equation}
\left(  \frac{1}{\sqrt{T}}\sum_{t=1}^{T}\mathcal{Z}_{t}^{\prime}\zeta
_{t},\frac{1}{\sqrt{nT}}\sum_{t=1}^{T}\sum_{i=1}^{n}\mathcal{Z}_{t}^{\prime
}\nu_{i,t}\right)  \stackrel{d}{\rightarrow} \left( \mathbb G_{\zeta}, \mathbb G_{\nu}\right) \label{vector-of-interest}%
\end{equation}
as $n,T \to \infty$, with $\mathbb G_{\zeta}$ and $\mathbb G_{\nu}$ independent Gaussian random variables.
\end{proposition}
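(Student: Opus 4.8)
\textbf{Overall strategy.} The plan is to establish joint asymptotic normality of the pair $\left(\frac{1}{\sqrt T}\sum_t \mathcal Z_t' \zeta_t, \frac{1}{\sqrt{nT}}\sum_t\sum_i \mathcal Z_t' \nu_{i,t}\right)$ via a Cramér--Wold device: for arbitrary scalars $(a,b)$, show that $\frac{a}{\sqrt T}\sum_t \mathcal Z_t'\zeta_t + \frac{b}{\sqrt{nT}}\sum_t\sum_i \mathcal Z_t'\nu_{i,t}$ converges to a mean-zero normal whose variance is $a^2\sigma_\zeta^2 + b^2\sigma_\nu^2$ (no cross term), which simultaneously delivers the marginal CLTs, the joint normality, and the asymptotic independence of $\mathbb G_\zeta$ and $\mathbb G_\nu$. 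The key structural fact to exploit is the conditional-on-$\mathcal C_t$ i.i.d.\ structure across $i$: conditionally on the aggregate shocks $\{\mathcal C_t\}_{t=1}^T$, the double sum $\frac{1}{\sqrt{nT}}\sum_t\sum_i \mathcal Z_t'\nu_{i,t}$ is a sum of (row-wise) independent mean-zero terms, since $E[\nu_{i,t}\mid\mathcal C_t]=0$ by the very definition $\nu_{i,t}=S_{it}\varepsilon_{it}-\zeta_t$ with $\zeta_t=E[S_{it}\varepsilon_{it}\mid\mathcal C_t]$.

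\textbf{Step 1: the panel term, conditionally on aggregate shocks.} I would first condition on $\mathcal F \equiv \sigma(\{\mathcal C_t\}_{t=1}^T)$ and analyze $\frac{1}{\sqrt{nT}}\sum_t\sum_i \mathcal Z_t'\nu_{i,t}$. Within each $t$, $\{\mathcal Z_t'\nu_{i,t}\}_{i=1}^n$ are i.i.d.\ (across $i$) with mean zero given $\mathcal C_t$, and across $t$ the blocks are independent given $\mathcal F$ (or can be handled by a martingale-array argument in $t$). Applying a triangular-array Lindeberg CLT (e.g.\ a double-indexed Lindeberg--Feller argument, or a martingale CLT with the natural filtration) shows that, conditionally on $\mathcal F$, this term is asymptotically $N(0,\sigma_\nu^2)$ where $\sigma_\nu^2$ is the probability limit of $\frac{1}{nT}\sum_t\sum_i E[(\mathcal Z_t'\nu_{i,t})^2\mid\mathcal C_t]$. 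The technical conditions in Appendix \ref{App:sec:longpanel} should supply (i) a Lindeberg/uniform-integrability bound (likely via a uniform moment bound of order $2+\delta$ on $\mathcal Z_t'\nu_{i,t}$), and (ii) a law of large numbers ensuring the conditional variance converges to a constant $\sigma_\nu^2$ that does \emph{not} depend on the realization of $\mathcal F$ in the limit. The upshot is that the conditional law converges weakly to a \emph{fixed} Gaussian, hence the convergence also holds unconditionally and, more importantly, jointly with any $\mathcal F$-measurable quantity.

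\textbf{Step 2: the time-series term and assembling the joint limit.} The term $\frac{1}{\sqrt T}\sum_t \mathcal Z_t'\zeta_t$ is $\mathcal F$-measurable (both $\mathcal Z_t$ and $\zeta_t=E[S_{it}\varepsilon_{it}\mid\mathcal C_t]$ are functions of aggregate shocks), has mean zero by the unconditional moment restriction \eqref{eq:long2}, and under the mixing/stationarity-type conditions of the appendix satisfies a time-series CLT, $\frac{1}{\sqrt T}\sum_t \mathcal Z_t'\zeta_t \stackrel{d}{\to} N(0,\sigma_\zeta^2)$. Now combine: since Step 1 gives $\frac{1}{\sqrt{nT}}\sum_t\sum_i \mathcal Z_t'\nu_{i,t}\mid\mathcal F \stackrel{d}{\to} N(0,\sigma_\nu^2)$ with a \emph{deterministic} limit variance, a standard conditioning argument (e.g.\ showing $E[\exp(\mathrm{i} a U_T + \mathrm{i} b V_T)] = E[\exp(\mathrm{i} a U_T)\, E[\exp(\mathrm{i} b V_T)\mid\mathcal F]] \to E[\exp(\mathrm{i} a U_T)]\exp(-\tfrac12 b^2\sigma_\nu^2) \to \exp(-\tfrac12 a^2\sigma_\zeta^2)\exp(-\tfrac12 b^2\sigma_\nu^2)$, where $U_T$ is the time-series term and $V_T$ the panel term) yields joint convergence to independent Gaussians $\mathbb G_\zeta\sim N(0,\sigma_\zeta^2)$ and $\mathbb G_\nu\sim N(0,\sigma_\nu^2)$. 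The asymptotic independence is automatic because one term is $\mathcal F$-measurable and the other is, conditionally on $\mathcal F$, asymptotically Gaussian with a limit that is constant in $\mathcal F$.

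\textbf{Main obstacle.} The delicate point is the simultaneous double-limit $n,T\to\infty$: I must ensure the conditional CLT in Step 1 holds \emph{uniformly enough in $T$} that passing $n\to\infty$ and $T\to\infty$ together (rather than sequentially) is legitimate, and that the conditional variance $\frac{1}{nT}\sum_t\sum_i E[(\mathcal Z_t'\nu_{i,t})^2\mid\mathcal C_t]$ converges to a nonrandom $\sigma_\nu^2$ despite the $t$-dependence in the aggregate shocks — this is where a LLN for the (possibly dependent across $t$) array of conditional second moments is needed, presumably supplied by a mixing condition on $\{\mathcal C_t\}$ in the appendix. A secondary subtlety is that $\mathcal Z_t$ itself is random and possibly heavy-tailed relative to $\nu_{i,t}$, so the Lindeberg condition must be verified with care, likely by factoring out $\mathcal Z_t$ and using a conditional-on-$\mathcal C_t$ moment bound on $\nu_{i,t}$ together with a moment bound on $\mathcal Z_t$; I expect the appendix conditions to be calibrated precisely so that Hölder's inequality closes this gap. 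Everything else — the mean-zero properties, the Cramér--Wold reduction, the conditioning-on-$\mathcal F$ characteristic-function manipulation — is routine once these two ingredients are in place.
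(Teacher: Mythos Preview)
Your overall architecture---condition on the aggregate-shock $\sigma$-field, obtain a conditional CLT for the panel term with a deterministic limit variance, apply a time-series CLT to the $\mathcal F$-measurable term, then combine via the factorization $E[e^{\iota a U_T+\iota b V_T}]=E[e^{\iota a U_T}\,\phi_{nT}(b\mid\mathcal C)]$---is exactly the paper's strategy (the outline in Appendix~\ref{sec-intuition}). Where you and the paper diverge is in the mechanics of Step~1. You posit that, conditionally on $\mathcal F$, the blocks $\{\mathcal Z_t'\nu_{i,t}\}_i$ are independent across $t$ (or at worst form a martingale array), so that a Lindeberg--Feller or martingale CLT applies directly. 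But in the paper's framework the unit-level processes $\{\nu_{it}\}_t$ for fixed $i$ may exhibit general serial dependence even after conditioning on $\mathcal C$: the technical conditions only impose conditional near-epoch dependence (Condition~\ref{Conditional_NED}) with $\alpha$-mixing innovations, not independence or a martingale structure. To accommodate this, the paper reindexes the double array via $s=(i-1)T+t$, builds a single-index filtration $\mathcal K_{L,s}$ that threads through units in sequence, shows that the reindexed array inherits the NED and mixing properties (Lemma~\ref{Cond_Joint_NED}), and then uses Eagleson's (1975) device to transport the problem to the conditional measure $Q_{\omega'}$, on which de~Jong's (1997) NED CLT yields the conditional normality (Theorem~\ref{Joint_Cond_CLT}). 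Your Lindeberg route is lighter and would suffice if cross-$t$ independence (given $\mathcal C$) were assumed, but it does not cover the paper's generality; the paper's reindexing-plus-Eagleson machinery is heavier precisely because it buys robustness to within-unit serial dependence in the joint $n,T\to\infty$ limit.
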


The long panel asymptotic properties of the TSLS estimator immediately follow from Proposition \ref{prop:long-panel}.
For instance, combining Proposition \ref{prop:long-panel} with results \eqref{eq:long3}, \eqref{eq:long4}, and \eqref{eq:long6} allows us to approximate the variance of $\hat \beta_n$ by the expression
\begin{equation}\label{eq:long7}
\text{Var}\left\{\hat \beta_n - \beta\right\} \approx \left(\frac{1}{{\rm D}}\right)^2\left\{ \frac{1}{T} \text{Var}\{\mathbb G_{\zeta}\} + \frac{1}{nT} \text{Var}\{\mathbb G_\nu\}\right\}.
\end{equation}
In particular, it follows that the long panel standard errors for TSLS decrease at a rate of $1/\sqrt T$.
Intuitively, this dependence on the time dimension results from the dependence of the moment restriction on the distribution of the time series.
One important exception to this phenomenon arises when the time series process has zero variance (i.e.\ $\text{Var}\{\mathbb G_{\zeta}\} = 0$).
This exception corresponds to the case in which $\zeta_t = 0$ or, equivalently,
\begin{equation}\label{eq:long8}
 E[S_{it}\varepsilon_{it}|\mathcal C_t] = 0.
\end{equation}
However, as we discussed in Section \ref{sec:homoggss}, the conditional moment restriction in \eqref{eq:long8} is exactly the identifying assumption made by \cite{goldsmith2020bartik} that renders the model overidentified.
In summary, the long panel analysis suggests that practitioners should either report standard errors based on \eqref{eq:long7} or rely on the short panel identification strategies discussed in Section \ref{sec:overid} and report the corresponding overidentification tests instead.

\section{Overidentification Tests}\label{sec:test}

We next build on our results by developing overidentification tests for the homogeneous effects model studied in Section \ref{sec:overid}.
While we focus on homogeneous effects models due to their demonstrated importance in shift-share designs, we note that it is also possible to test the overidentifying restrictions for heterogeneous effects models derived in Remarks \ref{rm:heterotest} and \ref{rm:heteroakmtest} by adapting the insights in, e.g.,  \cite{bai:2step}.

\subsection{The Approach}\label{sec:clt}

The identification strategies commonly employed for homogeneous effects models yield testable implications in the form of moment equality restrictions.
In this section, we present a general inference approach that is portable across the moment restrictions yielded by different identification strategies.

In what follows, we let $V_{i}\equiv (Y_{i},X_{i},W_{i},S_{i},\mathcal{Z})$ for notational simplicity.
The overidentifying moment restrictions we derived in Section \ref{sec:overid} have the structure
\begin{equation*}
E[\left.  f\left(  V_{i},\theta\right)  \right\vert \mathcal{G}_{n}]=0,
\end{equation*}
where $\theta$ represent some unknown parameter (e.g., $\beta$) and $f(V_{i},\theta)\in\mathbb{R}^{q}$ is a vector valued function.
In many applications, the number of moment equality restrictions is high dimensional in the sense that it is ``large" relative to the sample size.
Because the distributional approximations implicit in chi-squared overidentification tests can be unreliable in high dimensional problems, we instead employ test statistics based on the high dimensional central limit theorem of \cite{chernozhuokov2022improved}.
Specifically, for an estimator $\hat \theta_n$ of $\theta$ and $f_j(V_i,\hat \theta_n)$ denoting the $j^{th}$ coordinate of $f(V_i,\hat \theta_n)\in \R^q$ we set
\begin{equation}\label{test:eq2}
T_n \equiv \max_{1\leq j \leq q} \left|\sum_{i=1}^n f_j\left(V_i,\hat \theta_n\right)\right|
\end{equation}
as the test statistic for the null hypothesis that the moment restrictions hold.

The main assumption we impose to construct a test is that the sample moments be asymptotically linear; see Appendix \ref{section-proof-test} for a formal statement of the assumptions and results for this section.
Intuitively, we require that for some collection of independent mean zero random variables $\{\psi_i\}_{i=1}^{b_n}\subset \R^q$ the test statistic approximately equals
\begin{equation*}
T_n \approx \max_{1\leq j \leq q} \left|\sum_{i=1}^{b_n} \psi_{ij}\right|,
\end{equation*}
where $\psi_{ij}$ denotes the $j^{th}$ coordinate of $\psi_i$.
We refer to the number $b_n$ of random variables $\psi_i$ as the ``effective number of observations," and note that it need not equal the sample size.
Such a distinction is useful, for example, when the data is clustered (in which case $b_n$ equals the number of clusters) or when relying in the asymptotic framework in \cite{adao2019shift} (in which case $b_n$ equals the number of shocks $p$).

We obtain critical values for our test  by relying on a bootstrap approximation.
Specifically, given estimates $\{\hat \psi_i\}_{i=1}^{b_n}$ for $\{\psi_i\}_{i=1}^{b_n}$, we define a bootstrap statistic
\begin{equation*}
T_{n}^{\star}\equiv\max_{1\leq j \leq q}\left| \sum_{i=1}^{b_{n}}\omega_{i}\left(  \hat{\psi
}_{ij}-\frac{1}{b_{n}}\sum_{k=1}^{b_{n}}\hat{\psi}_{kj}\right)  \right|,
\end{equation*}
where $\hat \psi_{ij}$ denotes the $j^{th}$ coordinate of $\hat \psi_i \in \R^q$ and the random weights $\{\omega_i\}_{i=1}^{b_n}$ are drawn independently of the data $\{V_i\}_{i=1}^n$.
For instance, we may set the bootstrap weights $\{\omega_i\}_{i=1}^{b_n}$ to be i.i.d.\ and drawn from a standard normal or Rademacher distribution.\footnote{The Rademacher distirbution corresponds to setting $\omega_i$ to satisfy $P(\omega_i = 1) = P(\omega_i=-1) = 1/2$.}
The bootstrap critical value for a level $\alpha$ test is then given by the bootstrap quantile
\begin{equation*}
\hat{c}_{n}\equiv\inf\{c:P\left(  \left.  T_{n}^{\ast}\leq c\right\vert
\{V_{i}\}_{i=1}^{n}\right)  \geq1-\alpha\};
\end{equation*}
i.e.\ we employ the $1-\alpha$ quantile of $T_n^*$ conditional on the data $\{V_i\}_{i=1}^n$ but unconditionally on the bootstrap weights $\{\omega_i\}_{i=1}^n$.
The critical value $\hat c_n$ can as usual be obtained through simulation by employing multiple draws of the bootstrap weights $\{\omega_i\}_{i=1}^n$ to approximate the distribution of $T_n^*$ conditionally on the data.

Our next result establishes that a test that rejects whenever $T_n$ exceeds the critical value $\hat c_n$ has the correct asymptotic size.
We defer the statement of the relevant regularity conditions to Appendix \ref{section-proof-test}, though note that they allow for the number of moment restrictions $q$ to grow together with the effective sample size $b_n$.

\begin{proposition}\label{prop-clt}
Under the regularity conditions discussed in Appendix \ref{section-proof-test}, we have
$$\lim_{n\to \infty} P(T_{n}>\hat{c}_{n})=\alpha.$$
\end{proposition}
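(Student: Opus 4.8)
The plan is to show that both the test statistic $T_n$ and, conditionally on the data, the bootstrap statistic $T_n^\star$ are asymptotically distributed as the maximum of the absolute values of a common mean zero Gaussian vector, and then to invoke anti-concentration of Gaussian maxima to deduce that the bootstrap critical value $\hat c_n$ consistently estimates the appropriate quantile. All four displays in the argument below are at the level of ``approximations in distribution that are uniform over the $q$ coordinates,'' and the regularity conditions of Appendix \ref{section-proof-test} are precisely what is needed to make each one rigorous while allowing $q$ to grow with the effective sample size $b_n$.

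First I would use the asymptotic linearity condition to pass from $T_n=\max_{1\le j\le q}|\sum_{i=1}^n f_j(V_i,\hat\theta_n)|$ to $\tilde T_n\equiv\max_{1\le j\le q}|\sum_{i=1}^{b_n}\psi_{ij}|$, controlling the linearization remainder uniformly over $j$ so that, once combined with the anti-concentration bound invoked at the end, its contribution to the distribution of $T_n$ is negligible. I would then apply the high dimensional central limit theorem of \cite{chernozhuokov2022improved} to the independent mean zero array $\{\psi_i\}_{i=1}^{b_n}$: under the moment bounds and the growth restrictions on $q$ relative to $b_n$ imposed in Appendix \ref{section-proof-test}, there is a Gaussian vector $\mathcal N$ with covariance $\Sigma_n\equiv\sum_{i=1}^{b_n}{\rm Var}(\psi_i)$ such that $\sup_t|P(\tilde T_n\le t)-P(\max_j|\mathcal N_j|\le t)|\to 0$.

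Next I would treat the bootstrap. Conditionally on $\{V_i\}_{i=1}^n$, the statistic $T_n^\star=\max_j|\sum_{i=1}^{b_n}\omega_i(\hat\psi_{ij}-\tfrac1{b_n}\sum_k\hat\psi_{kj})|$ is the maximum of the absolute values of a mean zero Gaussian vector with covariance $\hat\Sigma_n\equiv\sum_{i=1}^{b_n}(\hat\psi_i-\bar{\hat\psi})(\hat\psi_i-\bar{\hat\psi})^\prime$ when the weights $\omega_i$ are standard normal, and is well approximated by such a maximum through a further Gaussian approximation when the weights are Rademacher. The key step here is to show that $\hat\Sigma_n$ is consistent for $\Sigma_n$ in the entrywise sup norm -- this is where the estimates $\hat\psi_i$, which inherit the estimation error in $\hat\theta_n$, must be shown to be accurate uniformly over the $q$ coordinates -- so that a Gaussian comparison inequality, again from \cite{chernozhuokov2022improved}, yields $\sup_t|P(T_n^\star\le t\mid\{V_i\}_{i=1}^n)-P(\max_j|\mathcal N_j|\le t)|=o_p(1)$.

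Finally, combining the previous two displays with Nazarov-type anti-concentration for maxima of Gaussian vectors, the conditional $(1-\alpha)$ quantile $\hat c_n$ of $T_n^\star$ converges in probability to the $(1-\alpha)$ quantile $q_{1-\alpha}$ of $\max_j|\mathcal N_j|$, and hence $P(T_n>\hat c_n)\to P(\max_j|\mathcal N_j|>q_{1-\alpha})=\alpha$. The main obstacle I anticipate is the uniform-in-$j$ control of the two remainder terms, namely the linearization error $\max_j|\sum_i f_j(V_i,\hat\theta_n)-\sum_i\psi_{ij}|$ and the covariance estimation error $\|\hat\Sigma_n-\Sigma_n\|_\infty$: both must be shown to be small relative to $1/\sqrt{b_n}$ up to logarithmic-in-$q$ factors for the Gaussian approximation and anti-concentration machinery to apply, and correctly accounting for the clustering or shock-based notion of ``effective sample size'' $b_n$ inside these bounds is the delicate part of the argument.
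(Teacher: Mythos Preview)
Your overall strategy---high-dimensional CLT for $\tilde T_n$, a multiplier-bootstrap Gaussian comparison for $T_n^\star$, and Nazarov anti-concentration to glue them---is the same as the paper's, and the two remainders you single out are indeed the ones the regularity assumptions control. Two points of comparison are worth recording.

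First, the paper works \emph{conditionally on $\mathcal G_n$} throughout: the $\{\psi_i\}$ are only assumed independent given $\mathcal G_n$, the CLT bound is $\sup_t|P(L_n\le t\mid\mathcal G_n)-P(\mathbb T_n\le t\mid\mathcal G_n)|\le C_1\delta_n(\mathcal G_n)$, and the law of iterated expectations is used at the end to deduce the unconditional statement. Your sketch treats everything unconditionally; this is not wrong, but the conditioning is what makes the independence structure hold and is essential in the shock-based framework where $\mathcal G_n=\{S_i,W_i,\varepsilon_i\}$.

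Second, for the bootstrap the paper does \emph{not} argue via $\|\hat\Sigma_n-\Sigma_n\|_\infty\to 0$. Instead it linearizes $T_n^\star$ to $L_n^\star\equiv\max_j|b_n^{-1/2}\sum_i\omega_i(\psi_{ij}-\bar\psi_j)|$ built from the \emph{true} $\psi_i$ (this is the content of Assumption~\ref{ass:cltcond4}), and then invokes the multiplier-bootstrap Lemmas 4.5--4.6 of \cite{chernozhuokov2022improved} to couple $L_n^\star$ with $L_n$ directly. Your covariance-consistency route would also work, but it requires second-moment control of $\hat\psi_i-\psi_i$ uniformly in $j,k$, whereas the paper's assumption is a first-order bound on the weighted difference itself.

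Finally, your closing sentence ``$\hat c_n$ converges in probability to the $(1-\alpha)$ quantile $q_{1-\alpha}$ of $\max_j|\mathcal N_j|$'' is imprecise when $q\to\infty$: there is no fixed limiting Gaussian, since $\Sigma_n$ and hence $q_{1-\alpha}$ vary with $n$. The paper avoids this by never passing to a limit: it chains non-asymptotic inequalities to show $P(T_n\le\hat c_n\mid\mathcal G_n)\ge 1-\alpha-s_n(\eta,\mathcal D_n)$ on a high-probability event, with $s_n$ an explicit remainder, and then lets the remainder vanish. Your argument can be rephrased in the same non-asymptotic style, but as written the ``convergence to a fixed quantile'' step does not literally make sense.
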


\subsection{Conditioning on Shocks}\label{test:gss}

The general approach discussed in the preceding section can readily be specialized to develop an overidentification test for application that implicitly condition on aggregate shocks in their analysis.
Recall that such applications require the restriction
\begin{equation*}
E\left[ S_i\varepsilon_i|\mathcal G_n\right] = 0,
\end{equation*}
where $\mathcal G_n$ is understood to contain all aggregate variables, including $\Sh$ \citep{goldsmith2020bartik}.
As our test statistic we therefore employ
\begin{equation}\label{test:gss2}
T_n = \max_{1\leq j \leq p} \left|\frac{1}{\hat \sigma_j} \sum_{i=1}^n S_{ij}\hat \varepsilon_i\right|,
\end{equation}
where $S_{ij}$ denotes the $j^{th}$ coordinate of $S_i\in \R^p$, $\{\hat \varepsilon_i\}_{i=1}^n$ are the residuals from the estimated model, and $\hat \sigma_j$ is an estimated weight that allows us to ensure that all $p$ moments are on a comparable scale.

In order to map this setting into the overidentification test of the preceding section, we simply set $f_j(V_i,\hat \theta_n) = S_{ij}\hat \varepsilon_i/\hat \sigma_j$ so that the test statistic in \eqref{test:gss2} becomes a special case of \eqref{test:eq2}.
Setting $A_i \equiv (Z_i,W_i^\prime)^\prime$, $\sigma_j$ to be the probability limit of $\hat \sigma_j$, and defining
\begin{equation*}
\psi_{ij} = \frac{U_{ij}}{\sigma_j} \hspace{0.3 in} U_{ij} \equiv \left(S_{ij}\varepsilon_{i}-E[\left.  S_{ij}\left(  X_{i},W_{i}^{\prime}\right)  \right\vert \mathcal{G}_{n}]\left(  E[A_{i}\left(
X_{i},W_{i}^{\prime}\right)  |\mathcal{G}_{n}]\right)  ^{-1}A_{i}\varepsilon
_{i}\right),
\end{equation*}
it is then possible to show that, under the null hypothesis, $T_n$ approximately equals
\begin{equation*}
T_n \approx \max_{1\leq j \leq p} \left|\sum_{i=1}^n \psi_{ij}\right|
\end{equation*}
provided the observations $\{Y_i,S_i,X_i\}_{i=1}^n$ are i.i.d.\ conditionally on $\mathcal G_n$.\footnote{
See the supplemental appendix for calculations supporting the claims in this section.}
Our inference approach requires an estimator for the influence function $\psi_i$ and to this end we set
\begin{equation*}
\hat{\psi}_{ij}    \equiv \frac{\hat U_{ij}}{\hat\sigma_j} \hspace{0.3 in} \hat U_{ij} \equiv \left( S_{ij}\hat{\varepsilon}_{i}-\left(  \frac{1}{n}%
\sum_{k=1}^{n}S_{kj}\left(  X_{k},W_{k}^{\prime}\right)  \right)  \left(
\frac{1}{n}\sum_{k=1}^{n}A_{k}\left(  X_{k},W_{k}^{\prime}\right)  \right)
^{-1}A_{i}\hat{\varepsilon}_{i}\right).
\end{equation*}
We also let $\hat \sigma_j^2$ be an estimator of the asymptotic variance of the $j^{th}$ moment by setting
\begin{equation*}
\hat \sigma_j^2 = \frac{1}{n}\sum_{i=1}^n \left(\hat U_{ij} - \frac{1}{n}\sum_{k=1}^n \hat U_{kj}\right)^2.
\end{equation*}

Following the approach in the preceding section, we obtain critical values for our test statistic by: (i) Drawing $b\in \{1,\ldots, B\}$ samples $\{\omega_i^{(b)}\}_{i=1}^n$ of standard normal random variables independent of the data; (ii) For each drawn sample $\{\omega_i^{(b)}\}_{i=1}^n$ computing
\begin{equation*}
T_n^{*(b)} \equiv \max_{1\leq j \leq p} \left|\sum_{i=1}^n \omega_i^{(b)} \left(\hat \psi_{ij} - \frac{1}{n}\sum_{k=1}^{n} \hat \psi_{kj}\right)\right|;
\end{equation*}
and (iii) For a level $\alpha$ test, letting $\hat c_{1-\alpha}$ denote the $1-\alpha$ quantile of $\{T_n^{*(b)}\}_{b=1}^B$.
Our overidentification test then rejects whenever $T_n$ exceeds the critical value $\hat c_{1-\alpha}$.

\begin{remark}\label{rm:gsscluter} \rm
The proposed test can easily be adapted to account for clustered data.
Specifically, suppose observations are divided into clusters $\{c_1,\ldots, c_{|C|}\}\equiv C$ -- e.g., $C$ may represent all states, $|C|$ the total number of states, and each cluster $c\in C$ a specific state.
While the test statistic remains the same, for our bootstrap implementation we now set $\hat \psi_{cj} \equiv \sum_{i\in c} \hat U_{ij}/\hat\sigma_j$ for $\hat \sigma_j^2$ the cluter robust sample variance of $\{\hat U_{ij}\}_{i=1}^n$, and let
\begin{equation*}
T_n^* \equiv \max_{1\leq j \leq p} \left|\sum_{c \in C} \omega_c \left(\hat \psi_{cj} - \frac{1}{|C|} \sum_{\tilde c \in C} \hat \psi_{\tilde c j}\right)\right|,
\end{equation*}
with $\{\omega_i\}_{c\in C}$ an i.i.d.\ sample of standard normal random variables drawn independently of the data -- i.e., the bootstrap procedure is modified by simply employing the same draw of $\omega_c$ for all observations in the same cluster.
We note that this procedure maps into the notation of Section \ref{sec:clt} by letting $b_n$ equal the number of clusters.  \qed
\end{remark}

\subsection{Identification Through Shocks}\label{test:akm}

We conclude our discussion of overidentification tests by specializing our approach to applications in which the exogeneity of the Bartik instrument originates from the exogeneity of the shocks.
Recall from Section \ref{sec:homogakm} that the asymptotic framework designed for such applications relies on setting $\mathcal G_n \equiv \{S_i,W_i,\varepsilon_i\}_{i=1}^n$, which yields the restriction
\begin{equation*}
E[Z_i|\{S_i,W_i,\varepsilon_i\}_{i=1}^n] = W_i^\prime \pi_n
\end{equation*}
for $\pi_n$  the population regression coefficient from regressing $\{Z_i\}_{i=1}^n$ on $\{W_i\}_{i=1}^n$ conditionally on $\mathcal G_n$ (as in \eqref{model:eq4}).
As our overidentification test statistic we therefore employ
\begin{equation}\label{test:akm2}
T_n = \max_{1\leq j \leq q} \left|\frac{1}{\hat \sigma_j} \sum_{i=1}^n g_j(\hat \varepsilon_i,W_i,S_i)(Z_i - W_i^\prime \hat \pi_n)\right|,
\end{equation}
where $\{\hat \varepsilon_i\}_{i=1}^n$ are the residuals from the estimated model, $\hat \pi_n$ is the regression coefficient from regressing $\{Z_i\}_{i=1}^n$ on $\{W_i\}_{i=1}^n$, $\hat \sigma_j$ is again an estimated weight, and $g_j(\hat \varepsilon_i,W_i,S_i)$ denotes the $j^{th}$ coordinate of the vector valued function $g(\hat \varepsilon_i,W_i,S_i)\in \R^q$.

In order to obtain a suitable critical value, we first need a characterization of the influence function for each sample moment.
To this end, we define
\begin{align*}
\delta_j & \equiv \left(\sum_{i=1}^n W_iW_i^\prime\right)^{-1}\sum_{i=1}^n W_{i} g_j(\varepsilon_i,W_i,S_i) \notag \\
\kappa_j & \equiv \left(\sum_{i=1}^n E\left[S_i^\prime \mathcal E X_i|\mathcal G_n\right]\right)^{-1} \left(\sum_{i=1}^n E\left[S_i^\prime \mathcal E X_i|\mathcal G_n\right]\frac{\partial} {\partial \varepsilon}g_j(\varepsilon_i,W_i,S_i)\right)
\end{align*}
with $\mathcal E \equiv \Sh - E[\Sh|\mathcal G_n]$, and for $\sigma_j$ the probability limit of $\hat \sigma_j$ we let $\psi_{ij}$ be given by
\begin{equation*}
\psi_{ij} \equiv \frac{U_{ij}}{\sigma_j} \hspace{0.3 in} U_{ij} \equiv  \mathcal E_i \times \sum_{k=1}^n S_{ki}(g_j(\varepsilon_k,W_k,S_k) - W_k^\prime \delta_j - \varepsilon_k \kappa_j).
\end{equation*}
Under conditions similar to those employed by \cite{adao2019shift}, it is then possible to show that under the null hypothesis the statistic $T_n$ satisfies the approximation\footnote{See the supplemental appendix for calculations supporting the claims in this section.}
\begin{equation*}
T_n \approx \max_{1\leq j \leq q} \left|\sum_{i=1}^p \psi_{ij}\right|.
\end{equation*}
We note that, because identification is driven by the exogeneity of the shocks, the effective number of observations equals $p$ and not $n$ -- i.e.\ the sample $\{\psi_{i}\}$ is of size $p$.

Our bootstrap critical value also relies on an estimate of $\psi_i$, and to this end we define
\begin{align*}
\hat \delta_j & \equiv \left(\sum_{i=1}^n W_iW_i^\prime\right)^{-1}\sum_{i=1}^n W_i g_j(\hat \varepsilon_i,W_i,S_i)\\
\hat \kappa_j & \equiv \left(\sum_{i=1}^n (Z_i - W_i^\prime \hat \pi_n)X_i\right)^{-1}\left(\sum_{i=1}^n (Z_i - W_i^\prime \hat \pi_n)X_i \frac{\partial}{\partial \varepsilon} g_j(\hat \varepsilon_i,W_i,S_i)\right),
\end{align*}
and let $\hat {\mathcal E}\in \R^p$ denote a suitable estimator for $\mathcal E\equiv \mathcal Z - E[\mathcal Z|\mathcal G_n]$; see Remark \ref{rm:akmest} for possible choices of $\hat {\mathcal E}$.
As our estimator for the influence function $\psi_{ij}$ we then set
\begin{equation*}
\hat \psi_{ij}\equiv \frac{\hat U_{ij}}{\hat \sigma_j} \hspace{0.3 in} \hat U_{ij} \equiv \hat {\mathcal E}_i \times \sum_{k=1}^n S_{ki}(g_j(\hat \varepsilon_k,W_i,S_i) - W_k^\prime \hat \delta_j - \hat \varepsilon_k \hat \kappa_j).
\end{equation*}
Finally, for the purposes of studentizing the sample moments we set $\hat \sigma_j^2$ to be given by
\begin{equation*}
\hat \sigma_j^2 \equiv \frac{1}{p}\sum_{i=1}^p \left(\hat U_{ij}-\frac{1}{p}\sum_{k=1}^p\hat U_{kj}\right)^2.
\end{equation*}

Given the estimates $\{\hat \psi_{i}\}$, we can obtain asymptotically valid critical values by following the same procedure as in the preceding section.
Specifically, we: (i) Draw $b\in \{1,\ldots, B\}$ samples $\{\omega_i^{(b)}\}_{i=1}^p$ of standard normal random variables independent of the data; (ii) For each drawn sample $\{\omega_i^{(b)}\}_{i=1}^p$ we compute the bootstrap statistic
\begin{equation*}
T_n^{*(b)} \equiv \max_{1\leq j \leq q} \left|\sum_{i=1}^p \omega_i^{(b)} \left(\hat \psi_{ij} - \frac{1}{p}\sum_{k=1}^{p} \hat \psi_{kj}\right)\right|;
\end{equation*}
and (iii) For a level $\alpha$ test, we let $\hat c_{1-\alpha}$ denote the $1-\alpha$ quantile of $\{T_n^{*(b)}\}_{b=1}^B$.
Our overidentification test then rejects whenever $T_n$ exceeds the critical value $\hat c_{1-\alpha}$.

\begin{remark}\label{rm:akmest} \rm
Multiple estimates $\hat {\mathcal E}$ for $\mathcal E = \mathcal Z - E[\mathcal Z|\mathcal G_n]$ have been previously studied in the literature.
\cite{adao2019shift}, for instance, propose the estimator
\begin{equation}\label{rm:akmtest1}
\hat {\mathcal E} \equiv \left(\sum_{i=1}^n S_{i}S_{i}^\prime\right)^{-1}\sum_{i=1}^n S_{it}(Z_{it}-W_{it}^\prime \hat \pi_n).
\end{equation}
Alternatively, in applications in which $W_i$ contains variables equal to $\mathcal Q^\prime S_i$ for some $p\times k$ observable matrix of aggregate variables $\mathcal Q$, \cite{borusyak2022quasi} advocate setting
\begin{equation}\label{rm:akmtest2}
\hat {\mathcal E} \equiv \Sh - \mathcal Q(\mathcal Q^\prime \mathcal Q)^{-1}\mathcal Q^\prime \Sh
\end{equation}
instead.
In our empirical analysis of \cite{david2013china} we employ a ridge regression version of \eqref{rm:akmtest1} because: (i) The matrix $\sum_i S_{i}S_{i}^\prime$ is ill-conditioned in that application,\footnote{\cite{adao2019shift} instead address this ill-conditioning by dropping sectors from the shares vector.} which prevents the use of \eqref{rm:akmtest1}; and (ii) We focus on the specification employed in \cite{david2013china}, which does not contain controls with the structure required by \eqref{rm:akmtest2}. \qed
\end{remark}

\section{Empirical Application: The China Syndrome}\label{sec:china}

We illustrate the empirical relevance of our analysis by revisiting the seminal work of \cite{david2013china} examining the impact of rising Chinese import competition on local US markets.
We focus on the main specification of \cite{david2013china}, in which
\begin{equation}\label{sec:china1}
Y_{it} = X_{it}\beta + W_{it}^\prime \gamma_{\mathtt s} + \varepsilon_{it}
\end{equation}
with $i$ indexing a commuting zone (CZ), $t$ indexing one of two time periods, $Y_{it}$ and $X_{it}$ denoting the change in employment manufacturing and a measure of import exposure, and $W_{it}$ denoting a set of CZ level characteristics.\footnote{Specifically, we employ the specification in column (6) of Table 3 in \cite{david2013china}.}
\cite{david2013china} estimate the model in \eqref{sec:china1} by TSLS based on a Bartik instrument $Z_{it} = \Sh_t^\prime S_{it}$ in which $S_{it}$ and $\Sh_t$ consist of a vector of lagged industry shares and a vector of Chinese imports growth to other countries for different industries.
Their analysis also employs four digit SIC codes to define sectors, leading to a total of $p = 397$ sectors.

The standard errors in \cite{david2013china} rely on an asymptotic approximation that implicitly conditions on the shocks.
As we argued in Section \ref{sec:homoggss}, such standard errors corresponds to an identification strategy in which shares are viewed as exogeneous; i.e.\
\begin{equation}\label{sec:china2}
E[S_{it}\varepsilon_{it}]  = 0 \text{ for all } t.
\end{equation}
We therefore begin by examining the viability of a constant treatment effects model in this application by employing the overidentification test of Section \ref{test:gss}.
Following \cite{david2013china}, we conduct inference by clustering at the state level leading to only 48 effective observations -- a number much smaller than the $p\times T = 794$ number of moment restrictions represented by \eqref{sec:china2}.
In a Monte Carlo study based on this dataset we found the finite sample rejection probabilities of our test to be below the nominal level when employing all 794 restrictions; see the Supplemental Appendix.
We therefore also test the validity of moment restrictions implied by \eqref{sec:china2} by aggregating shares to higher level SIC codes and focusing on particular time periods.

\begin{table}[t!]
\begin{center}
\caption{Overidentification tests corresponding to conditioning on shocks}													\label{tab:chinagss}
\begin{tabular}{lc c c c}
\hline \hline
\multicolumn{1}{c}{Moment Restrictions} & & $\#$ Moments & & p-value \\ \hline
Four Digit SIC $\&$ all time periods    & & 794          & & 0.127  \\
Four Digit SIC $\&$ $t=1$               & & 397          & & 0.115  \\
Four Digit SIC $\&$ $t=2$               & & 397          & & 0.098  \\ \\

Three Digit SIC $\&$ all time periods    & & 272          & & 0.049  \\
Three Digit SIC $\&$ $t=1$               & & 136          & & 0.009  \\
Three Digit SIC $\&$ $t=2$               & & 136          & & 0.072  \\ \\

Two Digit SIC $\&$ all time periods    & & 40          & & 0.005  \\
Two Digit SIC $\&$ $t=1$               & & 20          & & 0.002  \\
Two Digit SIC $\&$ $t=2$               & & 20          & & 0.366  \\
\hline
\hline
\end{tabular}
\end{center}
\begin{flushleft}
\scriptsize{Table reports p-values for overidentification tests of constant treatment effects models identified by implicitly conditioning on the realization of the shocks; see Section \ref{test:gss}.
All tests are implemented by clustering at the state level and weighting commuting zones by their start of period population as in \cite{david2013china}.
}		
\end{flushleft}\vspace{-0.3 in}
\end{table}

Table \ref{tab:chinagss} reports the p-values for our overidentification tests under different choices of moment restrictions.
Overall, our analysis yields evidence against the validity of the constant effects model under an identification strategy that implicitly conditions on shocks.
\cite{goldsmith2020bartik} reach a similar conclusion, though we note that the overidentification tests they employ do not cluster at the state level and are not robust to a ``large" number of moment restrictions.
Given the implausibility of the constant treatment effects model, it is natural to ask whether TSLS can be attributed a causal interpretation through a heterogeneous treatment effects model instead.
Unfortunately, Corollary \ref{cor:heteroimp} establishes that such an interpretation necessitates that shares be uncorrelated across sectors -- a requirement readily contradicted by the data, which exhibits correlations for some sectors in excess of 0.9.
We note, however, that as advocated by \cite{goldsmith2020bartik} it may still be possible to proceed by employing the shares as instruments without combining them into a Bartik instrument.

We next examine the viability of a constant effects model under an identification strategy in which the exogeneity of the Bartik instrument is due to the exogeneity of the shocks \citep{adao2019shift, borusyak2022quasi}.
To this end, we implement the overidentification test of Section \ref{test:akm}, which requires us to select a vector of moments ($\{g_j\}$ in \eqref{test:akm2}) and an estimator for $\mathcal E \equiv \mathcal Z- E[\mathcal Z|\mathcal G_n]$ (as in Remark \ref{rm:akmest}).
For the former, we select a total of twenty moments and let the functions $\{g_j\}$ depend on $\varepsilon$ only.
In particular, we select one function $g_j$ to simply equal $\varepsilon^2$, and the other nineteen to equal the Logit pdf centered at equispaced points in the support of $\varepsilon$; see the Supplemental Appendix for details.
Finally, as discussed in Remark \ref{rm:akmest}, we estimate $\mathcal E$ through Ridge regression and follow \cite{borusyak2022quasi} in clustering at the three level SIC code.

\begin{table}[t!]
\begin{center}
\caption{Overidentification tests corresponding to identification through shocks}\label{tab:chinaakm}													
\begin{tabular}{l  c  cccc}
\hline \hline
            & & \multicolumn{4}{c}{Ridge Penalty}  \\ \cline{3-6}
            & & 1e-3    & 1e-4      & 1e-5      & 1e-6  \\ \hline
p-value     & & 0.0012  & 0.0074    & 0.0368    & 0.065 \\ \hline \hline
\end{tabular}
\end{center}
\begin{flushleft}
\scriptsize{Table reports p-values for overidentification tests of constant treatment effects models identified through the exogeneity of shocks; see Section \ref{test:akm}.
All tests are implemented by clustering at the three digit SIC level and weighting commuting zones by their start of period population as in \cite{david2013china}.}		
\end{flushleft}\vspace{-0.3 in}
\end{table}

Table \ref{tab:chinaakm} reports the p-values for our test corresponding to different choice of Ridge penalties.
Our results show some sensitivity of the p-value to the choice of Ridge penalty.
This conclusion echoes \cite{adao2019shift} who similarly find that the manner in which they address the singularity of the shares design matrix affects their standard errors.\footnote{\cite{adao2019shift} addressed the singularity of $\sum_i S_iS_i^\prime$ by dropping sectors from their analysis. As stated by the authors, their standard errors are affected by what sectors are dropped; see page 3 in https://github.com/kolesarm/ShiftShareSE/blob/master/doc/ShiftShareSE.pdf}
Overall, we interpret Table \ref{tab:chinaakm} as providing evidence against the plausibility of the constant effects model under the identification strategy that attributes the exogeneity of the Bartik instruments to the shocks.
Our preferred choice of penalty for this application is 1e-5, yielding a p-value of 0.0368, which in a Monte Carlo study based on this dataset yielded finite sample rejection probabilities closest to a $5\%$ nominal level; see the Supplemental Appendix.
It is also worth emphasizing that clustering at the three digit SIC level has important implications for whether TSLS can be attributed a causal interpretation in a heterogeneous effects model.
For example, Corollary \ref{cor:heteroimpakm} implies that such a causal interpretation necessitates all industries within the same 3-digit SIC code to be (weakly) positively correlated.
It is important to note, however, that the latter requirement is necessary but not sufficient for TSLS to have a causal interpretation; see Remark \ref{rm:heteroakmtest} for additional restrictions.

Finally, we note that the ``just identified" long panel identification strategy of Section \ref{rm:heteroakmtest} is not viable in this application.
Specifically, because there are only two time periods in \cite{david2013china}, it is not possible to learn the time series distribution of the shocks as required by such an approach.
In other words, it would be foolhardy to invoke an asymptotic approximation based on $T$ diverging to infinity when $T$ equals two.

\section{Summary and Recommendations} \label{sec:recs}

In this paper, we examined the testable implications of identifying restrictions employed to assign a causal interpretation to TSLS based on Bartik instruments.
We developed specification tests for homogeneous effects models that are robust to heteroskedasticity, clustering, and weighting.
Because our tests are based on the high dimensional central limit theorem, we expect them to be more robust than their alternatives to the ``high" degree of overidentification present in shift-share designs.
Finally, we argued that our overidentification tests are of central importance due to the potentially limited empirical scope of the natural alternatives to the homogeneous effects models.

Our analysis has a number of important implications for applied work:
\begin{itemize}
    \item We recommend that practitioners report overidentification tests corresponding to the identification strategy of \cite{goldsmith2020bartik} as well as that of \cite{adao2019shift} and \cite{borusyak2022quasi}.
    These tests can provide evidence as to which identification strategy is more compatible with the data.

    \item Practitioners relying on the identification strategy of \cite{goldsmith2020bartik} should be wary of assigning a causal interpretation to TSLS by relying on a heterogenous effects model.
    Such an interpretation necessitates that shares be uncorrelated across sectors, which is empirically untenable.
    On the other hand, as advocated by \cite{goldsmith2020bartik}, it may still be possible to employ the shares for causal inference by not combining them into a Bartik instrument.

    \item Practitioners should likewise be careful in assigning TSLS a causal interpretation by relying on a heterogeneous effects model under the identification strategy of \cite{adao2019shift} and \cite{borusyak2022quasi}.
    A necessary condition for TSLS to be causal is that shocks to different sectors not be negatively correlated.
    Reasoning through the correlation structure of shocks is thus important not only for the purposes of obtaining standard errors, but also for interpreting the TSLS estimand.

    \item In applications with long panels, it is possible to rely on an alternative identification strategy in which the randomness of the instrument is understood jointly over the cross sectional distribution of shares and the time series distribution of shocks.
    Practitioners relying on this approach, however, need to employ different standard errors that are governed by the length of the panel.
\end{itemize}

\newpage


\begin{center}
    {\LARGE  {\bf Appendix}}
\end{center}

\renewcommand{\thesection}{A.\arabic{section}}
\renewcommand{\theequation}{A.\arabic{equation}}
\renewcommand{\thecorollary}{A.\arabic{corollary}}
\setcounter{theorem}{0}
\setcounter{corollary}{0}
\setcounter{equation}{0}
\setcounter{remark}{0}
\setcounter{section}{0}

This Appendix contains the proofs of all the results in the main text.

\section{Proofs for Section \ref{sec:overid}}


\noindent \emph{Proof of Proposition \ref{prop:homoggss}.} First note that since $\{S_i,W_i,\varepsilon_i\}_{i=1}^n$ are i.i.d.\ and independent of $\mathcal G_n$ and $\Sh$ is measurable with respect to $\mathcal G_n$ it follows $E[W_i\varepsilon_i] = 0$ that
\begin{equation*}
\frac{1}{n}\sum_{i=1}^n E\left[\varepsilon_i(Z_i-W_i^\prime \pi_n)|\mathcal G_n\right]  = \Sh^\prime [ S_i \varepsilon_i].
\end{equation*}
It is therefore immediate that $E[S_i\varepsilon_i] = 0$ is a sufficient condition for \eqref{prop:homoggssdisp} to hold with probability one over $\Sh$.
To see that this condition is also necessary, suppose by way of contradiction that $E[S_i\varepsilon_i]\neq 0$.
By Lemma \ref{lm:obvious}, it then follows that there is a $z^*$ in the support of $\Sh$ satisfying $(z^*)^\prime E[S_i\varepsilon_i]\neq 0$.
Assuming without loss of generality that $(z^*)^\prime E[S_i\varepsilon_i] > 0$ and setting $B_\delta(z^*)\equiv \{z\in \R^p : \|z-z^*\|_2^2 < \delta\}$, we then obtain that $(E[S_i\varepsilon_i])^\prime z > 0$ for all $z\in B_\delta(z^*)$ provided that $\delta > 0$ is chosen to be sufficiently small.
Hence, employing that $z^*$ is in the support of $\Sh$ yields that
\begin{equation*}
P\left(\Sh^\prime E[S_i\varepsilon_i]\neq 0\right) \geq P\left(\Sh \in B_\delta(z^*)\right) > 0,
\end{equation*}
which implies that for \eqref{prop:homoggssdisp} to hold with probability one we must have $E[S_i\varepsilon_i] = 0$. \qed

\noindent \emph{Proof of Proposition \ref{prop:homogakm}.} 
First note that since $\mathcal G_n = \{S_i,W_i,\varepsilon_i\}_{i=1}^n$ and $\Sh$ is independent of $\mathcal G_n$ by hypothesis, it follows from $Z_i = S_i^\prime \Sh$ by definition that we have
\begin{equation*}
\frac{1}{n}\sum_{i=1}^nE \left[\varepsilon_i\left(Z_i - W_i^\prime \pi_n\right)|\mathcal G_n\right] = \frac{1}{n}\sum_{i=1}^n \varepsilon_i\left(S_i^\prime E[\Sh] - W_i^\prime \pi_n\right).    
\end{equation*}
Since $E[Z_i|\{S_i,W_i,\varepsilon_i\}_{i=1}^n] = S_i^\prime E[\Sh]$, we can conclude that $E[Z_i|\{S_i,W_i,\varepsilon_i\}_{i=1}^n] = W_i^\prime \pi_n$ is a sufficient condition for \eqref{prop:homogakmdisp} to hold with probability one.
In order to establish the converse direction, we next define the event $A_n$ to be given by 
\begin{equation*}
A_n \equiv \left\{ S_i^\prime E[\Sh] \neq W_i^\prime \pi_n \text{ for some } 1\leq i \leq n \right\}
\end{equation*}
and note that $A_n$ is measurable with respect to $\{S_i,W_i\}$ by definition of $\pi_n$.
For any realization of $\{S_i,W_i\}_{i=1}^n$ in the event $A_n$, Lemma \ref{lm:obvious} allows us to conclude that there exists a $\{e_i^*\}_{i=1}^n$ in the support of $\{\varepsilon_i\}_{i=1}^n$ conditional on $\{S_i,W_i\}$ satisfying
\begin{equation*}
\frac{1}{n}\sum_{i=1}^n e_i^*\left(S_i^\prime E[\Sh] - W_i^\prime \pi_n\right) \neq 0.
\end{equation*}
Hence, defining $B_\delta(\{e_i^*\}_{i=1}^n) \equiv \{ \{e_i\}_{i=1}^n : \sum_i (e_i - e_i^*)^2 < \delta^2\}$ we obtain from $\{e_i^*\}_{i=1}^n$ being in the support of $\{\varepsilon_i\}_{i=1}^n$ conditional on $\{S_i,W_i\}_{i=1}^n$ that for $\delta$ sufficiently small 
\begin{multline}\label{prop:homogakm4}
P\left(\frac{1}{n}\sum_{i=1}^n \varepsilon_i\left(S_i^\prime E[\Sh] - W_i^\prime \pi_n\right)\neq 0 \Big|\{S_i,W_i\}_{i=1}^n\right) \\ \geq P\left(\{\varepsilon_i\}_{i=1}^n \in B_\delta(\{e_i^*\}_{i=1}^n)|\{S_i,W_i\}_{i=1}^n \right) > 0.
\end{multline}
In particular, since the conclusion in \eqref{prop:homogakm4} applies to any $\{S_i,W_i\}_{i=1}^n$ in the event $A_n$, it follows that $A_n$ must have probability zero in order for condition \eqref{prop:homogakmdisp} to hold with probability one.
Since $S_i^\prime E[\Sh] = E[Z_i|\{S_i,W_i,\varepsilon_n\}_{i=1}^n]$, the lemma follows. \qed

\begin{lemma}\label{lm:obvious}
If the linear span of a set $V\subseteq \R^k$ has dimension $k$, then it follows that for any $0\neq x\in \R^k$ there is a $v^*\in V$ such that $x^\prime v^*\neq 0$.
\end{lemma}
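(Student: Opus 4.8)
The plan is to argue by contraposition. Suppose that the conclusion fails for some $0\neq x\in\R^k$, i.e.\ that $x^\prime v=0$ for every $v\in V$. Then by definition $V$ is contained in the orthogonal complement $x^\perp\equiv\{y\in\R^k:x^\prime y=0\}$. Since $x\neq 0$, the set $x^\perp$ is a linear subspace of $\R^k$ of dimension exactly $k-1$ (it is the kernel of the surjective linear map $y\mapsto x^\prime y$ from $\R^k$ to $\R$).

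Next I would use that $x^\perp$, being a linear subspace, contains the linear span of any of its subsets; in particular it contains the span of $V$. Hence $\dim(\mathrm{span}\,V)\leq \dim(x^\perp)=k-1<k$, so the span of $V$ does not have dimension $k$. This is the contrapositive of the claim, which therefore holds. There is no real obstacle here: the only point that requires (trivial) care is the observation that the span of a subset of a subspace is again contained in that subspace, which is immediate from the minimality of the span among subspaces containing $V$.
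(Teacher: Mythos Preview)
Your contrapositive argument is correct. It differs from the paper's proof, which proceeds directly: since $\mathrm{span}(V)=\R^k$, one can write $x=\sum_j \alpha_j v_j$ with $v_j\in V$, and then $\|x\|_2^2=\sum_j \alpha_j (v_j^\prime x)>0$ forces $v_{j^*}^\prime x\neq 0$ for some $j^*$. Your route via the orthogonal complement $x^\perp$ and a dimension count is equally short and perhaps more conceptual; the paper's route is slightly more constructive in that it actually exhibits a candidate $v^*$ among finitely many $v_j$ used to represent $x$. Either argument is entirely adequate for this elementary fact.
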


\noindent {\emph{Proof}.} Since the dimension of the linear span of $V$ equals $k$, it follows that there are vectors $\{v_j\}_{j=1}^k \subset \R^k$ and scalars $\{\alpha_j\}_{j=1}^k$ satisfying $x = \sum_j \alpha_j v_j$. 
In particular, since $x\neq 0$, we have $(\sum_j \alpha_j v_j)^\prime x = \|x\|_2^2 > 0$ and hence that $v_{j^*}^\prime x \neq 0$ for some $j^*$. \qed

\section{Proofs for Section \ref{sec:heterogss}}


The results in Section \ref{sec:heterogss} rely on the next set of regularity conditions. 
In the statement below, the notation $\text{supp}\{V\}$ refers to the support of a random variable $V$.

\begin{assumption}\label{ass:heterogssextra}
(i) $\{\Lambda_i,W_i\}_{i=1}^n$ are i.i.d.\ conditional on $\Sh$;
(ii) ${\rm supp}\{(\Lambda_i,W_i,\Sh)\} = {\rm supp}\{\Lambda_i\}\times {\rm supp}\{W_i\}\times {\rm supp}\{\Sh\}$;
(iii) ${\rm rank}\{{\rm Var}\{S_i|W_i\}\} \geq p-1$ and ${\rm Var}\{S_{ij}|W_i\} > 0$ for all $1\leq j \leq p$ with probability one;
(iv) $(\lambda_{i1},\ldots,\lambda_{ip})$ has a continuous density w.r.t.\ Lebesgue measure;
(v) ${\rm supp}\{\Sh\}$ includes a neighborhood of zero.
\end{assumption}

Assumption \ref{ass:heterogssextra}(i) allows for dependence across observations though it implies that the marginal distribution of $(\Lambda_i,W_i,\Sh)$ is the same for all observations.
Assumptions \ref{ass:heterogssextra}(ii)(iv) further impose support restrictions on $(\Lambda_i,W_i,\Sh)$, while Assumption \ref{ass:heterogssextra}(iii) states our restrictions on the covariance matrix of $S_i \in \R^p$ conditionally on $W_i$ (denoted $\text{Var}\{S_i|W_i\})$.
Assumption \ref{ass:heterogssextra}(iii) allows for $\text{Var}\{S_i|W_i\}$ to be rank-deficient to recognize that the shares may sum up to one.
Finally, Assumptions \ref{ass:heterogssextra}(iv) simply demands that the ``types" vector  $(\lambda_{i1},\ldots,\lambda_{ip})$ be continuously distributed.

\noindent \emph{Proof of Proposition \ref{prop:heterogss}.} First note that Assumption \ref{ass:heterogss}(ii) implies  $(\Lambda_i,\beta_i,\varepsilon_i,\eta_i)$ is independent of $(\Sh,S_i)$ conditionally on $(\Sh,W_i)$. Therefore, by Assumptions \ref{ass:heterogss}(i)(iii), we may apply Lemma \ref{lm:decomp} with $\mathcal G_n = \{\Sh\}$ to conclude that
\begin{multline}\label{prop:heterogss3}
E\left[X_i \dot Z_{i,n}|\Lambda_i, W_i,\Sh\right] =  E\left[(\Sh^\prime \Lambda_iS_i)\dot Z_{i,n}\Big|\Lambda_i,W_i,\Sh\right]    \\
= \Sh^\prime \Lambda_i E\left[ S_i(S_i^\prime - E[S_i^\prime|W_i,\Sh])|\Lambda_i,W_i,\Sh \right]\Sh = \Sh^\prime (\Lambda_i \text{Var}\{S_i|W_i\}) \Sh,
\end{multline}
where the second equality holds by $Z_i = S_i^\prime \Sh$ and Assumption \ref{ass:heterogss}(iii), and the final equality holds by Assumption \ref{ass:heterogss}(ii).
The claim of the proposition therefore follows from result \eqref{prop:heterogss3}, the definition of $\omega_{i,n}$ in \eqref{model:eq16}, and Lemma \ref{lm:sign}. \qed

\noindent \emph{Proof of Corollary \ref{cor:heteroimp}.} 
Suppose that condition \eqref{cor:heteroimpdisp} holds for some $j\neq k$ and define
\begin{equation*}
\sigma_{l}^2(W_i) \equiv \text{Var}\{S_{il}|W_i\} \hspace{0.5 in} \sigma_{jk}(W_i) = {\rm Cov}\{S_{ij},S_{ik}|W_i\}.
\end{equation*}
Next, note that by Proposition \ref{prop:heterogss} a necessary condition for $\omega_{i,n}$ to be positive for all $1\leq i \leq n$ with probability one is that the corresponding $2\times 2$ matrix 
\begin{equation*}
\left(\begin{array}{cc} \lambda_{ij} & 0 \\ 0 & \lambda_{ik}\end{array}\right) \left(\begin{array}{cc} \sigma_j^2(W_i) & \sigma_{jk}(W_i) \\ \sigma_{jk}(W_i) & \sigma_k^2(W_i)\end{array}\right) 
= \left(\begin{array}{cc} \lambda_{ij} \sigma^2_j(W_i) & \lambda_{ij}\sigma_{jk}(W_i)\\ \lambda_{ik}\sigma_{jk}(W_i) & \lambda_{ik}\sigma^2_k(W_i)\end{array}\right)
\end{equation*}
be either positive semi-definite or negative semi-definite with probability one.
Hence, setting $\bar \lambda_i = (\lambda_{ij} + \lambda_{ik})/2$ and noting that for any $2\times 2$ matrix $A$ and $a\in \R^2$ we have $a^\prime A a = a^\prime (A+A^\prime)a/2$, it follows that a necessary condition for $\omega_{i,n}$ to be positive for all $1 \leq i \leq n$ with probability one is that the matrices $\Omega_i$ defined by
\begin{equation*}
\Omega_i \equiv \left(\begin{array}{cc} \lambda_{ij} \sigma^2_j(W_i) & \bar \lambda_i \sigma_{jk}(W_i)\\ \bar \lambda_i \sigma_{jk}(W_i) & \lambda_{ik}\sigma^2_k(W_i)\end{array}\right)
\end{equation*}
be either positive semi-definite with probability one or negative semi-definite with probability one.
However, in order for $\Omega_i$ to be positive semi-definite or negative-semidefinite with probability one we must have $\lambda_{ij}\lambda_{ik}\geq 0$ due to $\min\{\sigma_j^2(W_i),\sigma_k^2(W_i)\} > 0$ by Assumption \ref{ass:heterogssextra}(iii). 
Hence, since the determinant of $\Omega_i$ equals the product of its eigenvalues, a necessary condition for $\omega_{i,n}$ to be positive for all $i$ with probability one is 
\begin{equation*}
1 = P(\text{det}\{\Omega_i\} \geq 0 \text{ and } \lambda_{ij}\lambda_{ik} \geq 0).
\end{equation*}
However, setting $\rho_{jk}(W_i) \equiv \sigma_{jk}(W_i)/\sigma_j(W_i)\sigma_k(W_i)$ we obtain by direct calculation that
\begin{multline*}
 P\left(\text{det}\{\Omega_i\} \geq 0 \text{ and } \lambda_{ij}\lambda_{ik} \geq 0\right) = 
P\left(\lambda_{ij}\lambda_{ik} - \bar \lambda_i^2 \rho_{jk}^2(W_i) \geq 0 \text{ and } \lambda_{ij}\lambda_{ik} \geq 0\right) \\
\leq  P\left(1 \geq \frac{\rho_{jk}^2(W_i)}{4}\left(\frac{\lambda_{ij}}{\lambda_{ik}} + \frac{\lambda_{ik}}{\lambda_{ij}} + 2\right)\right)  < 1,
\end{multline*}
where the first inequality follows from $\lambda_{ij}\lambda_{ik}\neq 0$ with probability one by Assumption \ref{ass:heterogssextra}(iv), and the final from Assumption \ref{ass:heterogssextra}(ii), the support of $\lambda_{ij}/\lambda_{ik}$ being unbounded, and $\rho^2_{jk}(W_i) \neq 0$ with positive probability due to condition \eqref{cor:heteroimpdisp}. \qed

\begin{lemma}\label{lm:decomp}
Suppose that $\Sh$ and $\{Y_i,X_i,W_i,S_i\}_{i=1}^n$ satisfy equations \eqref{model:eq13} and \eqref{model:eq14}, $E[Z_i|W_i,\mathcal G_n] = W_i^\prime \pi_n$, and $(\Lambda_i,\beta_i,\varepsilon_i,\eta_i)\indep (\Sh,S_i)$ conditionally on $(W_i,\mathcal G_n)$. Then:
\begin{equation}\label{lm:decomp:disp}
E\left[X_i\dot Z_{i,n}\Big|\Lambda_i,W_i,\mathcal G_n\right] = E[(\Sh^\prime \Lambda_iS_i)\dot Z_{i,n}|\Lambda_i,W_i,\mathcal G_n].
\end{equation}
If in addition $\sum_{i=1}^n E[X_i \dot Z_{i,n}|\mathcal G_n]\neq 0$, then it also follows that equation \eqref{model:eq16} holds.
\end{lemma}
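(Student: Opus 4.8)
The plan is to establish \eqref{lm:decomp:disp} by substituting the first stage \eqref{model:eq14} for $X_i$, and then to obtain \eqref{model:eq16} by substituting the second stage \eqref{model:eq13} for $Y_i$ into the moment equation defining $\beta_{0,n}$ and simplifying using \eqref{lm:decomp:disp}. The only probabilistic inputs are two ``toolbox'' facts. First, the conditional independence $(\Lambda_i,\beta_i,\varepsilon_i,\eta_i)\indep(\Sh,S_i)$ given $(W_i,\mathcal G_n)$, which by standard properties of conditional independence lets me drop $\Lambda_i$ (or any subcollection of the error block) from behind the conditioning bar at will, and in particular gives, e.g., $\eta_i\indep(\Sh,S_i)$ given $(\Lambda_i,W_i,\mathcal G_n)$ and $\beta_i\indep(\Sh,S_i)$ given $(\Lambda_i,W_i,\mathcal G_n)$. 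Second, the two consequences of the definition of $\pi_n$ in \eqref{model:eq4} together with the hypothesis $E[Z_i\mid W_i,\mathcal G_n]=W_i^\prime\pi_n$: namely $E[\dot Z_{i,n}\mid W_i,\mathcal G_n]=0$, and $\sum_{i=1}^n E[W_i\dot Z_{i,n}\mid\mathcal G_n]=0$ (the latter because $\sum_i E[W_iZ_i\mid\mathcal G_n]=(\sum_i E[W_iW_i^\prime\mid\mathcal G_n])\pi_n$). I will also repeatedly use that $\pi_n$ is $\mathcal G_n$-measurable, so that given $\mathcal G_n$ the variable $\dot Z_{i,n}$ is a deterministic function of $(\Sh,S_i,W_i)$ alone.

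For \eqref{lm:decomp:disp}, I write $X_i=\Sh^\prime\Lambda_iS_i+W_i^\prime\gamma_{\mathtt f}+\eta_i$, multiply by $\dot Z_{i,n}$, and condition on $(\Lambda_i,W_i,\mathcal G_n)$. The $\Sh^\prime\Lambda_iS_i$ term reproduces the right-hand side of \eqref{lm:decomp:disp}. The control term vanishes since $W_i^\prime\gamma_{\mathtt f}$ is held fixed and $E[\dot Z_{i,n}\mid\Lambda_i,W_i,\mathcal G_n]=E[Z_i\mid W_i,\mathcal G_n]-W_i^\prime\pi_n=0$, where I first used $(\Sh,S_i)\indep\Lambda_i$ given $(W_i,\mathcal G_n)$ to drop $\Lambda_i$ from the conditioning and then the hypothesis on $E[Z_i\mid W_i,\mathcal G_n]$. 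The $\eta_i$ term vanishes because $\eta_i\indep\dot Z_{i,n}$ given $(\Lambda_i,W_i,\mathcal G_n)$, so the conditional expectation factors as $E[\eta_i\mid\Lambda_i,W_i,\mathcal G_n]\,E[\dot Z_{i,n}\mid\Lambda_i,W_i,\mathcal G_n]=E[\eta_i\mid\Lambda_i,W_i,\mathcal G_n]\cdot 0$.

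For \eqref{model:eq16}, I start from $\frac1n\sum_i E[(Y_i-X_i\beta_{0,n})\dot Z_{i,n}\mid\mathcal G_n]=0$ and the hypothesis $\sum_i E[X_i\dot Z_{i,n}\mid\mathcal G_n]\neq0$ to write $\beta_{0,n}=\sum_i E[Y_i\dot Z_{i,n}\mid\mathcal G_n]/\sum_i E[X_i\dot Z_{i,n}\mid\mathcal G_n]$. Substituting \eqref{model:eq13}, the $W_i^\prime\gamma_{\mathtt s}$ contribution to the numerator is $\gamma_{\mathtt s}^\prime\sum_i E[W_i\dot Z_{i,n}\mid\mathcal G_n]=0$, and the $\varepsilon_i$ contribution is zero because, conditioning first on $(W_i,\mathcal G_n)$, one has $\varepsilon_i\indep\dot Z_{i,n}$ there and $E[\dot Z_{i,n}\mid W_i,\mathcal G_n]=0$. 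This leaves $\beta_{0,n}=\sum_i E[\beta_iX_i\dot Z_{i,n}\mid\mathcal G_n]/\sum_i E[X_i\dot Z_{i,n}\mid\mathcal G_n]$. For each $i$ I apply iterated expectations, conditioning on $(\Lambda_i,W_i,\mathcal G_n)$, and expand $X_i\dot Z_{i,n}$ via \eqref{model:eq14} once more: the $W_i^\prime\gamma_{\mathtt f}\dot Z_{i,n}$ and $\eta_i\dot Z_{i,n}$ pieces again vanish after multiplication by $\beta_i$, using $(\beta_i,\eta_i)\indep(\Sh,S_i)$ given $(\Lambda_i,W_i,\mathcal G_n)$ and $E[\dot Z_{i,n}\mid\Lambda_i,W_i,\mathcal G_n]=0$; on the $\Sh^\prime\Lambda_iS_i$ piece, $\beta_i\indep(\Sh,S_i)$ given $(\Lambda_i,W_i,\mathcal G_n)$ lets me pull out $E[\beta_i\mid\Lambda_i,W_i,\mathcal G_n]$, and the remaining factor equals $E[X_i\dot Z_{i,n}\mid\Lambda_i,W_i,\mathcal G_n]$ by \eqref{lm:decomp:disp}. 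Hence $E[\beta_iX_i\dot Z_{i,n}\mid\mathcal G_n]=E\!\left[E[\beta_i\mid\Lambda_i,W_i,\mathcal G_n]\,E[X_i\dot Z_{i,n}\mid\Lambda_i,W_i,\mathcal G_n]\,\middle|\,\mathcal G_n\right]$; summing over $i$, dividing by the $\mathcal G_n$-measurable denominator $\sum_j E[X_j\dot Z_{j,n}\mid\mathcal G_n]$, and pulling that denominator inside $E[\cdot\mid\mathcal G_n]$ produces exactly \eqref{model:eq16} with $\omega_{i,n}$ as defined there.

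None of these computations is deep; the step I would be most careful about is the repeated use of the conditional independence hypothesis to drop $\Lambda_i$ (or the full error block) from conditioning sets and to factor expectations of products. One must consistently keep track of which factors are $(\Sh,S_i)$-measurable given the conditioning information and which belong to the error block $(\Lambda_i,\beta_i,\varepsilon_i,\eta_i)$, and must exploit the $\mathcal G_n$-measurability of $\pi_n$ so that $\dot Z_{i,n}$ counts as $(\Sh,S_i,W_i)$-measurable; getting this bookkeeping right is essentially the whole content of the argument.
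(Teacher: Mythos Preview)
Your proposal is correct and follows essentially the same route as the paper: substitute the structural equations \eqref{model:eq13}--\eqref{model:eq14}, use the conditional independence hypothesis to factor mixed products of error-block and $(\Sh,S_i)$-measurable terms, and kill the residual pieces via $E[\dot Z_{i,n}\mid W_i,\mathcal G_n]=0$. The only cosmetic difference is that the paper carries out the factoring in \eqref{model:eq16} coordinate-by-coordinate after first conditioning on $(W_i,\mathcal G_n)$, whereas you condition directly on $(\Lambda_i,W_i,\mathcal G_n)$; both packagings rely on the same conditional-independence bookkeeping you flag in your final paragraph.
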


\noindent {\emph{Proof}.} 
First note that for any function $f$ of $(\Lambda_i,\beta_i,\varepsilon_i,\eta_i,W_i)$ it follows from the requirements that $(\Lambda_i,\beta_i,\varepsilon_i,\eta_i) \indep (\Sh,S_i)$ conditionally on $(\mathcal G_n,W_i)$ and $Z_i = \Sh^\prime S_i$ that
\begin{multline}\label{lm:decomp1}
E\left[\dot Z_{i,n}f(\Lambda_i,\beta_i,\varepsilon_i,\eta_i,W_i)\Big|\mathcal G_n\right] \\ =  E\left[E\left[\dot Z_{i,n}|W_i,\mathcal G_n\right]E\left[f(\Lambda_i,\beta_i,\varepsilon_i,\eta_i,W_i)|W_i, \mathcal G_n\right]\Big| \mathcal G_n\right] =0,   
\end{multline}
where the final equality follows from $E[Z_i|W_i,\mathcal G_n] = W_i^\prime \pi_n$.
Combining result \eqref{lm:decomp1} with the specifications of the first and second stages in \eqref{model:eq13} and \eqref{model:eq14} then yields that
\begin{equation}\label{lm:decomp2}
 E\left[Y_i \dot Z_{i,n} \Big|\mathcal G_n\right]  =  E\left[X_i \dot Z_{i,n}\beta_i\Big|\mathcal G_n\right] = E\left[(\Sh^\prime \Lambda_i S_i) \dot Z_{i,n}\beta_i\Big|\mathcal G_n\right].
\end{equation}
Letting $\Sh_j$ denote the $j^{th}$ coordinate of $\Sh$, $S_{ij}$ the $j^{th}$ coordinate of $S_i$, and recalling that $\Lambda_i = \text{diag}\{(\lambda_{i1},\ldots,\lambda_{ip})\}$ we may also employ \eqref{model:eq14} and \eqref{lm:decomp1} to obtain
\begin{align}\label{lm:decomp3}
E\left[X_i\dot Z_{i,n}|\Lambda_i,W_i,\mathcal G_n\right] & = E\left[(\Sh^\prime \Lambda_i S_i)\dot Z_{i,n}|\Lambda_i,W_i,\mathcal G_n\right] \notag     \\
& = \sum_{j=1}^p E\left[ (\Sh_j\lambda_{ij}S_{ij})\dot Z_{i,n}\Big|\Lambda_i,W_i,\mathcal G_n\right],
\end{align}
which verifies that \eqref{lm:decomp:disp} indeed holds. 
Finally, observe that $(\Lambda_i,\beta_i)$ being independent of $(\Sh,S_i)$ conditionally on $(W_i,\mathcal G_n)$ further allows us to conclude that
\begin{align}\label{lm:decomp4}
E[(\Sh^\prime \Lambda_iS_i)\dot Z_{i,n}\beta_i|\mathcal G_n] 
& = \sum_{j=1}^pE[E[\Sh_j  S_{ij}\dot Z_{i,n}|W_i,\mathcal G_n]E[\lambda_{ij}\beta_i|W_i,\mathcal G_n]|\mathcal G_n] \notag \\
& = \sum_{j=1}^pE[E[\Sh_j  S_{ij} \dot Z_{i,n}|W_i,\mathcal G_n]\lambda_{ij}E[\beta_i|\Lambda_i,W_i,\mathcal G_n]|\mathcal G_n]\notag \\
& = E[X_i \dot Z_{i,n}|
\Lambda_i,W_i,\mathcal G_n]E[\beta_i|\Lambda_i,W_i,\mathcal G_n]|\mathcal G_n],
\end{align}
where the final equality follows from result \eqref{lm:decomp3} and $\Lambda_i \indep (\Sh,S_i)$ conditionally on $(W_i,\mathcal G_n)$.
The claim that \eqref{model:eq16} holds therefore follows from the definition of $\beta_{0,n}$, the rank condition $\sum_{i=1}^n E[X_i\dot Z_{i,n}|\mathcal G_n]\neq 0$ and results \eqref{lm:decomp2}, \eqref{lm:decomp3}, and \eqref{lm:decomp4}. \qed

\begin{lemma}\label{lm:sign}
Let Assumption \ref{ass:heterogssextra} hold. Then, for $n$ sufficiently large, it follows 
\begin{equation}\label{lm:signdisp}
P\left(\min_{1\leq i \leq n} \Sh^\prime (\Lambda_i {\rm Var}\{S_i|W_i\})\Sh\} \geq 0 \text{ or } \max_{1\leq i \leq n} \Sh^\prime (\Lambda_i {\rm Var}\{S_i|W_i\})\Sh\} \leq 0 \right) = 1
\end{equation}
if and only if either $\Lambda_i {\rm Var}\{S_i|W_i\}$ is positive semi-definite with probability one for all $i$, or $\Lambda_i {\rm Var}\{S_i|W_i\}$ is negative semi-definite with probability one for all $i$.
\end{lemma}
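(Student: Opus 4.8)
\noindent\emph{Proof plan.}
The ``if'' direction is immediate. Writing $A_i\equiv\Lambda_i{\rm Var}\{S_i|W_i\}$, if each $A_i$ is positive semi-definite with probability one, then $\Sh^\prime A_i\Sh\geq 0$ with probability one for every $i$, so $\min_{1\leq i\leq n}\Sh^\prime A_i\Sh\geq 0$ with probability one, which is \eqref{lm:signdisp}; the negative semi-definite case is symmetric. The content is the converse, which I would establish by contraposition together with a conditioning argument.

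For the converse I would condition on $\Sh$. Since ${\rm Var}\{S_i|W_i\}$ is a deterministic function of $W_i$ and, by Assumption \ref{ass:heterogssextra}(i), $\{\Lambda_i,W_i\}_{i=1}^n$ are i.i.d.\ conditional on $\Sh$, the scalars $G_i\equiv\Sh^\prime A_i\Sh$ are i.i.d.\ across $i$ conditional on $\Sh$. The key elementary observation is that, for $n\geq 2$, the complement of the event in \eqref{lm:signdisp} is exactly the event that a strictly positive value and a strictly negative value both occur among $G_1,\ldots,G_n$, and the conditional probability of the latter is at least $P(G_1>0\,|\,\Sh)\,P(G_2<0\,|\,\Sh)=P(\Sh^\prime A_1\Sh>0\,|\,\Sh)\,P(\Sh^\prime A_1\Sh<0\,|\,\Sh)$ by conditional independence. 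Hence \eqref{lm:signdisp} forces $E[\,P(\Sh^\prime A_1\Sh>0\,|\,\Sh)\,P(\Sh^\prime A_1\Sh<0\,|\,\Sh)\,]=0$, so with probability one over $\Sh$ either $P(\Sh^\prime A_1\Sh>0\,|\,\Sh)=0$ or $P(\Sh^\prime A_1\Sh<0\,|\,\Sh)=0$; that is, at the realized $\Sh$ the quadratic form $z\mapsto z^\prime A_1 z$ has an almost surely constant sign when evaluated at $z=\Sh$.

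The remaining step is to promote this statement about the single realized $\Sh$ to a statement about $A_1$ itself. I would use that $z\mapsto z^\prime A_1z$ is homogeneous of degree two, that ${\rm supp}\{\Sh\}$ contains a neighborhood of the origin (Assumption \ref{ass:heterogssextra}(v)), the product-support structure of Assumption \ref{ass:heterogssextra}(ii), and lower semicontinuity of $z\mapsto P(z^\prime A_1z>0)$ (Fatou), to conclude that the cone $\{z\in\R^p:P(z^\prime A_1z>0)>0\text{ and }P(z^\prime A_1z<0)>0\}$ is empty; equivalently, for \emph{every} $z\in\R^p$ we have $z^\prime A_1z\geq 0$ almost surely or $z^\prime A_1z\leq 0$ almost surely. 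Finally I would run a connectedness argument on the unit sphere: the ``nonnegative'' and ``nonpositive'' sets of directions are closed cones that cover $\R^p$ and are antipodally symmetric because $z^\prime A_1z=(-z)^\prime A_1(-z)$; the rank condition ${\rm rank}\{{\rm Var}\{S_i|W_i\}\}\geq p-1$ with positive diagonals (Assumption \ref{ass:heterogssextra}(iii)) together with the continuous law of $(\lambda_{i1},\ldots,\lambda_{ip})$ (Assumption \ref{ass:heterogssextra}(iv)) keeps the overlap of the two cones small enough that connectedness of $S^{p-1}$ forces one cone to be all of $\R^p$, i.e.\ $A_1\succeq 0$ with probability one or $A_1\preceq 0$ with probability one. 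Since the $A_i$ are identically distributed, the same conclusion holds for every $i$.

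I expect the main obstacle to be precisely this last promotion step: the inclusion--exclusion/independence argument only controls the sign of $\Sh^\prime A_1\Sh$ at one random point, and turning that into definiteness of the matrix requires carefully exploiting the richness of ${\rm supp}\{\Sh\}$ near the origin, homogeneity, and semicontinuity, while using Assumptions \ref{ass:heterogssextra}(iii)--(iv) to rule out the degenerate directions in which $z^\prime A_1z$ vanishes identically. By contrast, the ``both a positive and a negative draw appear'' step and the ``if'' direction are short.
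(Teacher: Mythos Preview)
Your opening steps are sound and match the paper: the ``if'' direction is immediate, and conditioning on $\Sh$ together with the i.i.d.\ structure yields that, with probability one over $\Sh$, the quadratic form $\Sh'A_1\Sh$ has a constant sign in the law of $A_1$. Promoting this via product support and the neighborhood-of-zero/homogeneity argument to obtain $\R^p=S_+\cup S_-$ (with $S_\pm$ defined via the unconditional law of $A_1$) is also essentially what the paper does.

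The gap is in your final connectedness step. To conclude that one of $S_+,S_-$ is all of $\R^p$ from $S_+\cup S_-=\R^p$, you need $S_+\cap S_-$ to be thin enough not to disconnect $S^{p-1}$. You assert Assumptions~\ref{ass:heterogssextra}(iii)--(iv) accomplish this but do not say how. Concretely, $z\in S_+\cap S_-$ means $\sum_j\lambda_{1j}\,z_j(\Sigma_1 z)_j=0$ a.s.\ over $(\Lambda_1,W_1)$; to deduce that each coefficient $z_j(\Sigma_1 z)_j$ vanishes (and hence $z\in\ker\Sigma_1$, at most a line by (iii)) you would need $\Lambda_1$ to have a density \emph{conditionally} on $W_1$, which product support alone does not give. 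A deterministic indefinite $A_1$ already shows the connectedness shortcut fails without such control, so this is not a detail that can be waved through.

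The paper avoids this by inserting an intermediate step you skip: it first proves $P(A_1\succeq 0\text{ or }A_1\preceq 0)=1$ via a \emph{perturbation} argument. If some $\Gamma^*=\Lambda^*\Sigma^*$ in the support were indefinite, an auxiliary lemma produces a single $z_0$ near $0$ and two diagonal matrices $\Lambda_0,\tilde\Lambda_0$ arbitrarily close to $\Lambda^*$ with $z_0'\Lambda_0\Sigma^*z_0>0$ and $z_0'\tilde\Lambda_0\Sigma^*z_0<0$; product support then gives both signs positive probability at $z_0$, contradicting $z_0\in S_+\cup S_-$. Only \emph{after} each realization is known to be semidefinite does the paper show all realizations share a sign, using that $S_+$ or $S_-$ must contain $p$ linearly independent vectors and that, on a semidefinite matrix, $s'A_1s=0$ forces $s$ into the kernel of $A_1+A_1'$. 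This two-stage route uses only support-level information and so works under product support alone, whereas your direct connectedness route seems to require more.
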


\noindent \emph{Proof.} For notational simplicity we first set $\Gamma_i \equiv \Lambda_i \text{Var}\{S_i|W_i\}$ and use ``$\Gamma_i\geq 0$" and ``$\Gamma_i\leq 0$" to denote that $\Gamma_i$ is positive semi-defintie and negative-semidefinite respectively. 
Then note that if either $P(\Gamma_i \geq 0) = 1$ for all $i$  or $P(\Gamma_i \leq 0) = 1$ for all $i$, then  \eqref{lm:signdisp} immediately holds.
For the opposite direction, we next observe that
\begin{align}\label{lm:sign1}
P& \left( \min_{1\leq i \leq n} \Sh^\prime \Gamma_i \Sh \geq 0 \text{ or } \max_{1\leq i \leq n} \Sh^\prime \Gamma_i \Sh \leq 0\right) \notag \\
& = P\left(\min_{1\leq i \leq n} \Sh^\prime \Gamma_i \Sh \geq 0\right) + P\left(\max_{1\leq i \leq n} \Sh^\prime \Gamma_i \Sh \leq 0\right) - P\left(\Sh^\prime \Gamma_i \Sh = 0 \text{ for all }1\leq i \leq n\right) \notag \\
& = E\left[P^n\left(\Sh^\prime \Gamma_1 \Sh \geq 0|\Sh\right) + P^n\left(\Sh^\prime \Gamma_1 \Sh \leq 0|\Sh\right) - P^n\left(\Sh^\prime \Gamma_1 \Sh = 0 |\Sh\right)\right],
\end{align}
where the second equality follows from Assumption \ref{ass:heterogssextra}(i).
However, since $a^n = o(1) $ for any $a\in [0,1)$ as $n$ tends to infinity, result \eqref{lm:sign1} implies that in order for the claim in \eqref{lm:signdisp} to hold for $n$ sufficiently large we must have with probability one over $\Sh$ that
\begin{equation}\label{lm:sign2}
\max\{P(\Sh^\prime \Gamma_1\Sh \geq 0 |\Sh), P(\Sh^\prime \Gamma_1\Sh \leq 0|\Sh)\} = 1 .
\end{equation}
By Assumption \ref{ass:heterogssextra}(ii), $\supp\{(\Gamma_1,\Sh)\} = \supp\{\Gamma_1\}\times \supp\{\Sh\}$ and hence the distribution of $\Gamma_1$ is absolutely continuous with respect to the distribution of $\Gamma_1$ conditionally on $\Sh$. Defining the sets $S_+$ and $S_{-}$ to be given by
\begin{align}\label{lm:sign3}
    S_+ & \equiv \{z\in \R^p : P(z^\prime \Gamma_1 z \geq 0) = 1\} \notag\\
    S_- & \equiv \{z\in \R^p : P(z^\prime \Gamma_1 z \leq 0) = 1\} 
\end{align}
it therefore follows from result \eqref{lm:sign2} holding with probability one that we must have
\begin{equation}\label{lm:sign4}
P(\Sh \in S_+  \cup S_{-}) = 1.    
\end{equation}

Next, set $\Gamma^* \equiv \Lambda^* \Sigma^*$ with $\Sigma^*$ any point in the support of ${\rm Var}\{S_1|W_1\}$ and $\Lambda^* = \text{diag}\{\lambda_{1}^*,\ldots, \lambda_p^*\}$ for any point $(\lambda_1^*,\ldots, \lambda_p^*)$ at which the density of $(\lambda_{11},\ldots, \lambda_{1p})$ is strictly positive.
We next aim to show that $\Gamma^*$ is either positive semi-definite or negative semi-definite.
We proceed by contradiction: Suppose that $\Gamma^*$ is neither positive semi-definite nor negative semi-definite. 
Then, by Lemma \ref{aux:lin}, Assumption \ref{ass:heterogssextra}(iii), the density of $(\lambda_{11},\ldots,\lambda_{1p})$ being continuous and strictly positive at $(\lambda_1^*,\ldots, \lambda_p^*)$, and Assumption \ref{ass:heterogssextra}(v), there are $z_0 \in {\rm \supp}\{\Sh\}$  and $\Lambda_0,\tilde \Lambda_0 \in \supp\{\Lambda_1\}$ satisfying 
\begin{equation*}
z_0^\prime (\Lambda_0 \Sigma^*) z_0 > 0 \hspace{0.5 in }    z_0^\prime (\tilde \Lambda_0 \Sigma^*) z_0 < 0.
\end{equation*}
We may therefore select sufficiently small neighborhoods $N(z_0), N(\Lambda_0), N(\tilde \Lambda_0), N(\Sigma^*)$ of $z_0,\Lambda_0,\tilde \Lambda_0$, and $\Sigma^*$ such that the following inequalities are satisfied
\begin{align*}
z^\prime(\Lambda \Sigma) z > 0 & \text{ for all } (z,\Lambda,\Sigma) \in N(z_0)\times N(\Lambda_0)\times N(\Sigma^*) \notag \\        
z^\prime(\Lambda \Sigma) z < 0 & \text{ for all } (z,\Lambda,\Sigma) \in N(z_0)\times N(\tilde \Lambda_0)\times N(\Sigma^*). 
\end{align*}
Then note that since $\Lambda_0\Sigma^* \in \supp\{\Lambda_1\text{Var}\{S_1|W_1\}\}$  by Assumption \ref{ass:heterogssextra}(ii), it follows
\begin{multline}\label{lm:sign7}
P(z^\prime (\Lambda_1 \text{Var}\{S_1|W_1\}) z > 0 \text{ for all } z\in N(z_0)) \\ \geq P((\Lambda_1,\text{Var}\{S_1|W_1\}) \in N(\Lambda_0)\times N(\Sigma^*)) > 0.
\end{multline}
By identical arguments, but relying on $N(\tilde \Lambda_0)\times N(\Sigma^*)$ instead, similarly yield that
\begin{equation}\label{lm:sign8}
P(z^\prime (\Lambda_1 \text{Var}\{S_1|W_1\}) z < 0 \text{ for all } z\in N(z_0)) > 0.
\end{equation}
However, because $P(\Sh \in N(z_0)) > 0$ due to $z_0 \in \supp\{\Sh\}$, results \eqref{lm:sign7} and \eqref{lm:sign8} together contradict \eqref{lm:sign4} and therefore $\Gamma^*$ must be either positive semi-definite or negative semi-definite as claimed.
Thus, we have so far shown that
\begin{equation}\label{lm:sign9}
    P(\Gamma_1 \geq 0 \text{ or } \Gamma_1 \leq 0 ) =1.
\end{equation}

To conclude, note that by result \eqref{lm:sign4} and Lemma \ref{aux:lindep} either $S_+$ or $S_{-}$ (or both) must contain $p$ linearly independent vectors. 
If $S_+$ contains $p$ linearly independent vectors $\{s_j\}_{j=1}^p$, then we obtain by the definition of $S_+$ in \eqref{lm:sign3} that
\begin{multline}\label{lm:sign10}
P(\Gamma_1 = 0) \leq P(\Gamma_1 \leq 0) \leq P(z^\prime \Gamma_1 z \leq 0 \text{ for all } z\in S_+) \\ \leq P(s_j^\prime \Gamma_1 s_j \leq 0 \text{ for } 1\leq j \leq p) = P(s_j^\prime \Gamma_1 s_j = 0 \text{ for } 1\leq j \leq p).
\end{multline}
Next note that $z^\prime \Gamma_1 z = z^\prime(\Gamma_1 + \Gamma_1^\prime)z/2$ for any $z\in \R^p$ and use that $(\Gamma_1 + \Gamma_1^\prime)$ is diagonalizable and $s_j^\prime (\Gamma_1 + \Gamma_1^\prime)s_j = 0$ for all $1\leq j \leq p$ implies that all the eigenvalues of $(\Gamma_1+\Gamma_1^\prime)$ equal zero due to $\{s_j\}_{j=1}^p$ being linearly independent to conclude that
\begin{multline}\label{lm:sign11}
P(s_j^\prime \Gamma_1 s_j = 0 \text{ for } 1\leq j \leq p) = P(s_j^\prime(\Gamma_1 + \Gamma_1^\prime)s_j = 0 \text{ for } 1\leq j \leq p) \\
= P(\Gamma_1 = -\Gamma_1^\prime) \leq P((\lambda_{11},\ldots,\lambda_{1p}) = 0) \leq P(\Gamma_1 =0),    
\end{multline}
where the first inequality follows from $\text{Var}\{S_{1j}|W_i\}>0$ for all $1\leq j \leq p$ by Assumption \ref{ass:heterogssextra}(iii) and the second inequality by $\Gamma_1 = \text{diag}\{(\lambda_{11},\ldots, \lambda_{1p})\}\text{Var}\{S_1|W_1\}$.
Hence, combining results \eqref{lm:sign10} and \eqref{lm:sign11} we obtain that if $S_+$ contains $p$ linearly independent vectors, then $P(\Gamma_1 = 0) = P(\Gamma_1 \leq 0)$, which together with result \eqref{lm:sign9} yields
\begin{equation*}
1 = P(\Gamma_1 \geq 0 \text{ or } \Gamma_1 \leq 0) = P(\Gamma_1 \geq 0) + P(\Gamma_1\leq 0) - P(\Gamma_1=0) = P(\Gamma_1\geq 0).
\end{equation*}
Similarly, if $S_{-}$ instead contains $p$ linearly independent vectors, then it is possible to show that \eqref{lm:sign9} implies that $P(\Gamma_1\leq 0)=1$ and therefore the lemma follows. \qed

\begin{lemma}\label{aux:lin}
Let $\Gamma \equiv D \Sigma$ with $D$ a $p\times p$ diagonal matrix and $\Sigma$ a $p\times p$ positive semi-definite matrix with ${\rm rank}\{\Sigma\}\geq p-1$.
If $\Gamma$ is neither positive semi-definite nor negative semi-definite, then for any $\delta > 0$  there is a $z\in \R^p$ and diagonal $p\times p$ matrices $H_1,H_2$ satisfying $\|z\| \vee \|D-H_1\| \vee \|D-H_2\| < \delta$, $z^\prime (H_1\Sigma) z > 0$, and $z^\prime (H_2\Sigma) z < 0$.
\end{lemma}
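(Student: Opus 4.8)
The plan is to reduce the lemma to producing a single vector $z$ — which may afterwards be rescaled to have norm below $\delta$ — that kills the quadratic form $Q(z)\equiv z^\prime\Gamma z$ but not $\Sigma$, and then to obtain $H_1$ and $H_2$ by perturbing a single diagonal entry of $D$ in opposite directions. For the latter step, observe that for a diagonal matrix $E=\mathrm{diag}(e_1,\ldots,e_p)$ one has $z^\prime(E\Sigma)z=\sum_{k=1}^p e_k\, z_k(\Sigma z)_k$, so replacing the $k_0$-th diagonal entry of $D$ by $D_{k_0k_0}\pm\epsilon$ changes $Q(z)$ by exactly $\pm\epsilon\, z_{k_0}(\Sigma z)_{k_0}$. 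If $z$ satisfies $Q(z)=0$ and $\Sigma z\neq 0$ then, $\Sigma$ being positive semi-definite, $z^\prime\Sigma z=\sum_k z_k(\Sigma z)_k>0$, so $z_{k_0}(\Sigma z)_{k_0}>0$ for some coordinate $k_0$; after rescaling $z$ to have $\|z\|<\delta$ (which preserves both $Q(z)=0$ and the sign of $z_{k_0}(\Sigma z)_{k_0}$), setting $H_1\equiv D+\epsilon\, e_{k_0}e_{k_0}^\prime$ and $H_2\equiv D-\epsilon\, e_{k_0}e_{k_0}^\prime$ with $\epsilon\in(0,\delta)$ gives $z^\prime(H_1\Sigma)z=\epsilon\, z_{k_0}(\Sigma z)_{k_0}>0$, $z^\prime(H_2\Sigma)z=-\epsilon\, z_{k_0}(\Sigma z)_{k_0}<0$, and $\|D-H_1\|\vee\|D-H_2\|=\epsilon<\delta$, as required.

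It remains to produce $z$ with $Q(z)=0$ and $\Sigma z\neq 0$, the substantive step. I would argue by contradiction, assuming $\{z:Q(z)=0\}\subseteq\ker\Sigma=:L$; since $\mathrm{rank}\{\Sigma\}\geq p-1$ this forces $\dim L\leq 1$ (and $p\geq 2$, as otherwise the hypothesis that $\Gamma$ is neither positive nor negative semi-definite is vacuous). As $\Sigma$ is positive semi-definite, $z^\prime\Sigma z=0$ if and only if $\Sigma z=0$; hence $Q$ vanishes identically on $L$ and is nonzero, so of one constant sign, on $L^\perp\setminus\{0\}$ — say $Q>0$ there, the opposite case being handled by replacing $\Gamma$ with $-\Gamma$, which alters neither $\ker\Sigma$ nor $\{z:Q(z)=0\}$ nor the indefiniteness hypothesis. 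Writing a generic vector as $\ell+y$ with $\ell\in L$, $y\in L^\perp$, and letting $B$ be the symmetric bilinear form of $\Gamma$, we get $Q(\ell+y)=2B(\ell,y)+Q(y)$ since $Q\equiv 0$ on $L$. Were $B$ to vanish on $L\times L^\perp$, this would give $Q\geq 0$ everywhere, contradicting indefiniteness; hence $\dim L=1$, say $L=\mathrm{span}\{w\}$, and there is $y_0\in L^\perp\setminus\{0\}$ with $B(w,y_0)\neq 0$. Then $z^*\equiv-\tfrac{Q(y_0)}{2B(w,y_0)}\, w+y_0$ satisfies $Q(z^*)=0$ yet $z^*\notin L=\ker\Sigma$, contradicting the assumption; so the desired $z$ exists.

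The main obstacle is exactly this existence step, and within it the rank-deficient case $\dim L=1$: one must combine $\mathrm{rank}\{\Sigma\}\geq p-1$ (which keeps $\ker\Sigma$ at most one-dimensional) with the indefiniteness of $\Gamma$ to locate a null vector of $Q$ transverse to $\ker\Sigma$, the decomposition along $L^\perp$ and the elementary linear solve for the coefficient of $w$ being the crux. The diagonal-perturbation identity and the harmless rescaling of $z$ are routine.
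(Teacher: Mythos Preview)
Your proof is correct. The perturbation step matches the paper exactly: once one has a vector $z$ with $z^\prime\Gamma z=0$ and $z^\prime\Sigma z>0$, both arguments pick a coordinate with $z_{k_0}(\Sigma z)_{k_0}>0$ and set $H_{1,2}=D\pm\epsilon\,e_{k_0}e_{k_0}^\prime$.

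The difference is in how the null vector $z\notin\ker\Sigma$ is produced. The paper argues constructively: it symmetrizes $\Gamma$ to $\bar\Gamma=(\Gamma+\Gamma^\prime)/2$, takes eigenvectors $v_1,v_2$ corresponding to eigenvalues $\pi_1>0>\pi_2$, and forms the two linearly independent null vectors $c_{1,2}=\alpha v_1\pm\alpha\sqrt{\pi_1/|\pi_2|}\,v_2$; since $\ker\Sigma$ is at most one-dimensional, at least one of $c_1,c_2$ lies outside it. Your route is by contradiction: you suppose $\{Q=0\}\subseteq\ker\Sigma$, deduce that $Q$ has constant sign on $(\ker\Sigma)^\perp\setminus\{0\}$, and then use the bilinear form $B$ to solve linearly for a null vector with nonzero $L^\perp$-component. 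The paper's approach is slightly more direct and makes the role of the indefiniteness explicit through the spectrum of $\bar\Gamma$; yours avoids eigendecompositions entirely and is arguably more elementary, relying only on the polarization identity and a one-variable solve. Both use the rank hypothesis in the same way, to bound $\dim\ker\Sigma\leq 1$.
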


\noindent \emph{Proof.} Let $\bar \Gamma = (\Gamma + \Gamma^\prime)/2$ and note that since $a^\prime \Gamma a = a^\prime \Gamma^\prime a$ for any $a \in \R^p$, we have 
\begin{equation}\label{aux:lin1}
a^\prime \Gamma a = a^\prime \bar \Gamma a \text{ for all } a \in \R^p.
\end{equation}
Next observe that $\bar \Gamma$ is a symmetric matrix and let $\{\pi_i\}_{i=1}^p$ denote its eigenvalues and $\{v_i\}_{i=1}^p$ its corresponding eigenvectors.
If $\Gamma$ is neither positive semi-definite nor negative semi-definite, then result \eqref{aux:lin1} implies that $\bar \Gamma$ must have a strictly positive eigenvalue and a strictly negative eigenvalue.
Assuming without loss of generality that $\pi_1 >0$ and $\pi_2 < 0$, let $\alpha >0$ and define the vectors $c_1$ and $c_2$ to be given by
\begin{equation*}
c_1 \equiv \alpha v_1 + \alpha \frac{\sqrt{\pi_1}}{\sqrt{|\pi_2|}}v_2 \hspace{0.5 in} c_2 \equiv \alpha v_1 - \alpha \frac{\sqrt{\pi_1}}{\sqrt{|\pi_2|}}v_2.
\end{equation*}
Next note that $\|c_1\|^2 = \|c_2\|^2 = \alpha^2(1+ \pi_1/|\pi_2|)$ and that therefore we may set $\|c_1\| =\|c_2\| < \delta$ by selecting $\alpha$ to be sufficiently small.
Moreover, by direct calculation we have
\begin{equation}\label{aux:lin3}
c_1^\prime \bar \Gamma c_1 = c_2^\prime \bar \Gamma c_2 = \alpha^2 \pi_1 + \alpha^2 \frac{\pi_1}{|\pi_2|}\pi_2 = 0,   
\end{equation}
where the final equality follows from $\pi_2 < 0$.
Next observe that since ${\rm rank}\{\Sigma^{1/2}\} = {\rm rank}\{\Sigma\} \geq p-1$ and $c_1,c_2\in \R^p$ are linearly independent, it follows that $\Sigma^{1/2} c_j \neq 0$ for some $j\in\{1,2\}$.
Assuming without loss of generality that $\Sigma^{1/2} c_1 \neq 0$, we then set $z = c_1$ and note that results \eqref{aux:lin1}, \eqref{aux:lin3}, and $\Sigma^{1/2}z \neq 0$ imply that
\begin{equation}\label{aux:lin4}
z^\prime \Gamma z =  0 \text{ and } z^\prime \Sigma z > 0.    
\end{equation}
Setting $x \equiv \Sigma z$ and letting $x_j$ and $z_j$ denote the $j^{th}$ coordinates of $x$ and $z$, then note
\begin{equation*}
0 < z^\prime\Sigma z = \sum_{j=1}^p z_j x_j,    
\end{equation*}
which implies that $z_{j} x_{j} > 0$ for some $1\leq j \leq p$. 
Assuming without loss of generality that $z_1 x_1 > 0$, then set $e_1 = (1,0,\ldots, 0)^\prime$ and for $\xi > 0$ define the matrices
\begin{equation*}
H_1 \equiv D + \xi e_1 e_1^\prime \hspace{0.5 in} H_2 \equiv D -\xi e_1 e_1^\prime ,   
\end{equation*}
noting that $\|D-H_1\| = \|D-H_2\| <\delta$ provided $\xi$ is chosen sufficiently small. 
In addition, $\Gamma \equiv D\Sigma$, $x \equiv \Sigma z$, result \eqref{aux:lin4}, and $x_1z_1 > 0$ together yield that
\begin{equation*}
z^\prime (H_1 \Sigma) z = z^\prime((D+\xi e_1e_1^\prime)\Sigma)z = z^\prime \Gamma z + \xi z_1 x_1 > 0.     
\end{equation*}
Identical arguments imply $z^\prime(H_2 \Sigma)z < 0$, and hence the lemma follows. \qed

\begin{lemma}\label{aux:lindep}
If Assumption \ref{ass:heterogssextra}(v) holds and $A_1, A_2$ satisfy $P(\Sh \in A_1 \cup A_2) = 1$, then it follows that ${\rm span}\{A_i\} = \R^p$ for some $i\in \{1,2\}$.    
\end{lemma}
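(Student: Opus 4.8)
The plan is to argue by contradiction. Suppose that ${\rm span}\{A_i\}\neq \R^p$ for both $i\in\{1,2\}$, and set $V_i \equiv {\rm span}\{A_i\}$, so that each $V_i$ is a proper linear subspace of $\R^p$ and $A_i\subseteq V_i$. Since $A_1\cup A_2\subseteq V_1\cup V_2$, the hypothesis $P(\Sh\in A_1\cup A_2)=1$ forces $P(\Sh\in V_1\cup V_2)=1$. The goal is then to contradict this by exhibiting an open set of positive $\Sh$-probability that is disjoint from $V_1\cup V_2$.

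First I would record the elementary geometric fact that a proper linear subspace of $\R^p$ is closed and has empty interior, so that $V_1\cup V_2$ is a closed set with empty interior and its complement is open and dense in $\R^p$. Next, invoking Assumption \ref{ass:heterogssextra}(v), I would fix $\delta>0$ with $B_\delta(0)\subseteq {\rm supp}\{\Sh\}$. Because $V_1\cup V_2$ has empty interior, the set $U \equiv B_\delta(0)\setminus (V_1\cup V_2)$ is a nonempty open subset of ${\rm supp}\{\Sh\}$, and hence $P(\Sh\in U)>0$ by the characterization of the support as the smallest closed set of full probability (every nonempty open subset of the support carries positive mass). Since $U$ is disjoint from $V_1\cup V_2$, this gives $P(\Sh\in V_1\cup V_2)<1$, contradicting the previous display. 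Therefore ${\rm span}\{A_i\}=\R^p$ for some $i\in\{1,2\}$.

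I expect no serious obstacle here; the only points requiring a touch of care are the justification that a finite union of proper subspaces has empty interior (immediate, since each is closed and nowhere dense, or by a direct dimension count) and the standard fact that every nonempty open subset of the support of a distribution carries positive probability. The argument is uniform in $p$, covering in particular the degenerate case $p=1$, where the only proper subspace is $\{0\}$ and $U=B_\delta(0)\setminus\{0\}$.
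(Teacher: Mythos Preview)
Your proof is correct and follows essentially the same contradiction strategy as the paper: pass from $A_i$ to proper subspaces $V_i$, then locate mass of $\Sh$ in the neighborhood of zero that misses $V_1\cup V_2$. The only difference is tactical: the paper explicitly constructs a vector $v^*$ outside both $V_i$ (via a case split on $|\langle v_1,v_2\rangle|$), whereas you invoke the cleaner topological fact that a finite union of proper subspaces has empty interior, which spares you that construction.
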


\noindent {\emph Proof.} By way of contradiction, suppose that ${\rm span}\{A_i\} \neq \R^p$ for $i \in \{1,2\}$.
Then note that $A_i \subseteq V_i$ for some vector subspace $V_i \subset \R^p$ and that for each $i$ we may find a $v_i\in \R^p$ that is orthogonal to $V_i$ and satisfies $\|v_i\|^2 = 1$.
Next define $v^*$ to equal
\begin{equation*}
v^* \equiv \left\{\begin{array}{cl} v_1 & \text{ if } |\langle v_1,v_2\rangle| = 1\\
v_1 + v_2 & \text{ if } |\langle v_1,v_2\rangle| \neq 1
\end{array}\right.
\end{equation*} 
and note that $\langle v^*,v_i\rangle \neq 0$ for $i \in \{1,2\}$ due to $\|v_i\|^2 = 1$.
Since $v_i$ is orthogonal to $V_i$ it follows that $\xi v^* \notin V_i$ and hence $\xi v^*\notin V_1\cup V_2$ for any $\xi > 0$.
In particular, we obtain that $\xi v^*\notin A_1 \cup A_2$ due to $A_1\cup A_2 \subseteq V_1 \cup V_2$.
However, for $\xi$ sufficiently small, $\xi v^*$ belongs to the support of $\Sh$ by Assumption \ref{ass:heterogssextra}(v).
Thus, selecting $N(\xi v^*)$ to be a neighborhood of $\xi v^*$ sufficiently small to ensure that $N(\xi v^*) \subseteq (A_1\cup A_2)^c$ we obtain
\begin{equation}\label{aux:lindep2}
P(\Sh \in (A_1\cup A_2)^c) \geq P(\Sh \in N(\xi v^*)) > 0,    
\end{equation}
where the final inequality follows from $\xi v^*$ being in the support of $\Sh$. Since \eqref{aux:lindep2} contradicts the hypothesis that $P(\Sh \in A_1 \cup A_2) =1$, the lemma follows. \qed

\section{Proofs for Section \ref{sec:heteroakm}}


The results in Section \ref{sec:heteroakm} rely on the following regularity conditions.

\begin{assumption}\label{ass:heteroakmextra}
For $\mathcal G_n = \{S_i,W_i,\Lambda_i,\beta_i,\varepsilon_i,\eta_i\}_{i=1}^n$ and $\mathcal W_n = \{W_i\}_{i=1}^n$ we have:  (i) ${\rm Var}\{\Sh|\mathcal G_n\} = {\rm Var}\{\Sh|\mathcal W_n\}$ and ${\rm Var}\{\Sh_j|\mathcal W_n\} > 0$ for all $1\leq j \leq p$ with probability one; 
(ii) $\{\Lambda_i,S_i\}_{i=1}^n$ are i.i.d.\ conditionally on $\mathcal W_n$;
(iii) ${\rm supp}\{(\Lambda_i,S_i,\mathcal W_n)\} = {\rm supp}\{\Lambda_i\}\times {\rm supp}\{S_i\}\times {\rm supp}\{\mathcal W_n\}$;
(iv) ${\rm supp}\{S_i/\|S_i\|_1\} = \{s\in \R^p : s\geq 0 \text{ and } \|s\|_1 =1\}$.
\end{assumption}

Assumptions \ref{ass:heteroakmextra}(i)-(iii) impose conditions that simplify our analysis.
In turn,  Assumption \ref{ass:heteroakmextra}(iv) demands that the support of the shares be sufficiently rich -- in its statement, $\|\cdot\|_1$ denotes the $\ell_1$ norm $\|s\|_1 = \sum_{i=1}^n |s^{(i)}|$ for any vector $s = (s^{(1)},\ldots, s^{(p)})^\prime \in \R^p$.
We note that Assumption \ref{ass:heteroakmextra}(iv) allows the shares to sum to less than one, as is sometimes the case in empirical applications.

\noindent \emph{Proof of Proposition \ref{prop:heteroakm}.} 
First note that setting $\mathcal G_n = \{S_i,W_i,\Lambda_i,\beta_i,\varepsilon_i,\eta_i\}_{i=1}^n$ implies that  $(\Lambda_i,\beta_i,\varepsilon_i,\eta_i)\indep (\Sh,S_i)$ conditionally on $(W_i,\mathcal G_n)$. Hence, Assumption \ref{ass:heteroakm} yields 
\begin{multline}\label{prop:heteroakm1}
E\left[X_i \dot Z_{i,n}|\Lambda_i, W_i,\mathcal G_n\right] =  E\left[(\Sh^\prime \Lambda_iS_i)\dot Z_{i,n}\Big|\mathcal G_n\right]    \\
= S_i^\prime\Lambda_i E\left[ \Sh(\Sh^\prime - E[\Sh^\prime|\mathcal G_n])|\mathcal G_n \right]S_i = S_i^\prime (\Lambda_i \text{Var}\{\Sh|\mathcal G_n\}) S_i,
\end{multline}
where the first equality follows from Lemma \ref{lm:decomp} and $(\Lambda_i,W_i)\in\mathcal  G_n$, the second equality holds by $Z_i = S_i^\prime \Sh$ and Assumption \ref{ass:heteroakm}(ii), and the third equality holds by direct calculation.
The claim of the proposition therefore follows from result \eqref{prop:heteroakm1}, the definition of $\omega_{i,n}$ in \eqref{model:eq16}, and Lemma \ref{lm:signakm}. \qed

\noindent \emph{Proof of Corollary \ref{cor:heteroimpakm}.}
Suppose condition \eqref{cor:heteroimpakmdisp} holds for some $j\neq k$ and define
\begin{equation*}
\sigma_l^2(\mathcal G_n) \equiv \text{Var}\{\Sh_l|\mathcal G_n\} \hspace{0.3 in} \sigma_{jk}^2(\mathcal G_n) = \text{Cov}\{\Sh_j,\Sh_k|\mathcal G_n\} \hspace{0.3 in} \rho_{jk}(\mathcal G_n) \equiv \text{Corr}\{\Sh_j,\Sh_k|\mathcal G_n\}.
\end{equation*}
Also note that Assumptions \ref{ass:heteroakmextra}(i)(iii) and Proposition \ref{ass:heteroakm} together imply that a necessary condition for $\omega_{i,n}$ to be positive for all $i$ with probability one is that
\begin{equation*}
P(\Lambda_i \geq 0) =1 \text{ or } P(\Lambda_i \leq 0) = 1.
\end{equation*}
In what follows, we assume that $P(\Lambda_i\geq 0) =1$ and note that the case in which $P(\Lambda_i \leq 0) = 1$ follows by identical arguments.
Next note that by Proposition \ref{prop:heteroakm}, a necessary condition for $\omega_{i,n}$ to be positive for all $i$ with probability one is that 
\begin{equation}\label{cor:heteroimpakm3}
(\lambda_j s_j,\lambda_k s_k) \left(\begin{array}{cc} \sigma_j^2(\mathcal G_n) & \sigma_{jk}(\mathcal G_n) \\ \sigma_{jk}(\mathcal G_n) & \sigma_k^2(\mathcal G_n)\end{array}\right) \left(\begin{array}{c} s_j\\s_k\end{array}\right)
\end{equation}
be either positive for all $(s_j,s_k)\in \R_+$ with probability one over $\mathcal G_n$, or negative for all $(s_j,s_k)\in \R_+$ with probability one over $\mathcal G_n$.
However, Assumption \ref{ass:heteroakmextra}(i), $P(\lambda_{ij}=0) = 0$, and $P(\lambda_{ij}\geq 0)=1$ imply that setting $(s_j,s_k)=(1,0)$ yields a strictly positive number in \eqref{cor:heteroimpakm3} with probability one.
Hence, we can conclude that a necessary condition for $\omega_{i,n}$ to be positive for all $i$ with probability one is that we have
\begin{align}\label{cor:heteroimpakm4}
1 & = P\left(\inf_{s_j,s_k\geq 0} \left\{s_j^2 \lambda_j\sigma_j^2(\mathcal G_n) + s_k^2 \lambda_k \sigma_k^2(\mathcal G_n)+ s_j s_k \sigma_{jk}(\mathcal G_n)(\lambda_j+\lambda_k)\right\} \geq 0\right) \notag \\   
& = P\left(\inf_{s_j,s_k\geq 0} \left\{s_j^2 \lambda_j + s_k^2 \lambda_k + s_j s_k \rho_{jk}(\mathcal G_n)(\lambda_j+\lambda_k)\right\} \geq 0\right), \notag \\
& = P\left(\inf_{s_k\geq 0}\left\{ s_k^2\left(\lambda_k - \frac{\rho^2_{jk}(\mathcal G_n)}{4\lambda_j}(\lambda_j+\lambda_k)^2\right)\right\} \geq 0\right)
\end{align}
where in the second equality we employed Assumption \ref{ass:heteroakmextra}(i) and in the third equality we profiled out $s_j$ by setting $s_j = -s_k\rho(\mathcal G_n)(\lambda_j+\lambda_k)/2\lambda_j$.
However, note that
\begin{equation}\label{cor:heteroimpakm5}
P\left(\lambda_k - \frac{\rho^2_{jk}(\mathcal G_n)}{4\lambda_j}(\lambda_j + \lambda_k)^2 < 0\right) =  
P\left(1 < \frac{\rho^2_{jk}(\mathcal G_n)}{4}\left(\frac{\lambda_j}{\lambda_k} + \frac{\lambda_k}{\lambda_j}+2\right)\right) > 0
\end{equation}
where the inequality holds due to condition \eqref{cor:heteroimpakm}, Assumptions \ref{ass:heteroakmextra}(i)(iii), and the support of $\lambda_j/\lambda_k$ being unbounded.
Since result \eqref{cor:heteroimpakm5} implies the necessary condition in \eqref{cor:heteroimpakm4} cannot hold, the claim of the corollary follows. \qed

\begin{lemma}\label{lm:signakm}
Let Assumption \ref{ass:heteroakmextra} hold and $\mathcal G_n = \{S_i,W_i,\Lambda_i,\beta_i,\varepsilon_i,\eta_i\}_{i=1}^n$. 
Then,
\begin{equation}\label{lm:signakmdisp1}
P\left(\min_{1\leq i \leq n} S_i^\prime (\Lambda_i {\rm Var}\{\Sh|\mathcal G_n\})S_i \geq 0 \text{ or } \max_{1\leq i \leq n} S_i^\prime (\Lambda_i {\rm Var}\{\Sh|\mathcal G_n\})S_i \leq 0 \right) = 1
\end{equation}
for all $n$ sufficiently large if and only if the distribution of $\Lambda_i {\rm Var}\{\Sh|\mathcal G_n\}$ satisfies 
\begin{equation}\label{lm:signakmdisp2}
P\left(\sup_{s\in \R^p_+} s^\prime (\Lambda_i{\rm Var}\{\Sh|\mathcal G_n\}) s \leq 0\right) = 1  \text{ or } P\left(\inf_{s\in \R^p_+} s^\prime (\Lambda_i{\rm Var}\{\Sh|\mathcal G_n\}) s \geq 0\right) = 1.
\end{equation}
\end{lemma}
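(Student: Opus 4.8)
The plan is to mirror the proof of Lemma~\ref{lm:sign}, with the positive orthant $\R^p_+$ in the role previously played by $\R^p$. Throughout I set $\Gamma_i \equiv \Lambda_i {\rm Var}\{\Sh|\mathcal G_n\}$ and note that by Assumption~\ref{ass:heteroakmextra}(i) we have $\Gamma_i = \Lambda_i {\rm Var}\{\Sh|\mathcal W_n\}$, which is a function of $(\Lambda_i,\mathcal W_n)$ only. I will write ``$\Gamma \succeq_+ 0$'' for $s^\prime \Gamma s \geq 0$ for all $s \in \R^p_+$ and ``$\Gamma \preceq_+ 0$'' for the reverse inequality; in this notation the two events in \eqref{lm:signakmdisp2} read $\{\Gamma_i \preceq_+ 0\}$ and $\{\Gamma_i \succeq_+ 0\}$. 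The ``if'' direction is immediate because $S_i \in \R^p_+$: if $P(\Gamma_i \succeq_+ 0) = 1$ then $\min_i S_i^\prime \Gamma_i S_i \geq 0$ with probability one, and symmetrically for $P(\Gamma_i \preceq_+ 0) = 1$. Since the case $P(\Lambda_i = 0)=1$ is trivial, in the remainder I assume $P(\Lambda_i \neq 0) > 0$.

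For the converse, I would first use that $\{\Lambda_i,S_i\}_{i=1}^n$ are i.i.d.\ conditionally on $\mathcal W_n$ (Assumption~\ref{ass:heteroakmextra}(ii)) to obtain, as in \eqref{lm:sign1},
\begin{equation*}
P\Big(\min_{1\leq i\leq n} S_i^\prime\Gamma_iS_i \geq 0 \ \text{ or } \ \max_{1\leq i\leq n}S_i^\prime\Gamma_iS_i \leq 0\Big) = E\big[a_n^n + b_n^n - c_n^n\big],
\end{equation*}
with $a_n = P(S_1^\prime\Gamma_1S_1 \geq 0|\mathcal W_n)$, $b_n = P(S_1^\prime\Gamma_1S_1 \leq 0|\mathcal W_n)$, $c_n = P(S_1^\prime\Gamma_1S_1 = 0|\mathcal W_n)$, and note that $a_n + b_n - c_n = 1$ because every real number is $\geq 0$ or $\leq 0$. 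Since $c_n \leq \min\{a_n,b_n\}$, the integrand converges pointwise to $\mathbf{1}\{\max\{a_n,b_n\}=1\}$, so bounded convergence together with the hypothesis that the left-hand side equals one for all large $n$ forces $\max\{a_n,b_n\} = 1$ with probability one over $\mathcal W_n$.

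The heart of the argument is to upgrade this into a support statement. Fix a realization of $\mathcal W_n$ in a probability-one set on which, say, $a_n = 1$, and set $\Sigma = {\rm Var}\{\Sh|\mathcal W_n\}$. I claim $\lambda\Sigma \succeq_+ 0$ for every $\lambda \in \supp\{\Lambda_1\}$: were $s_0^\prime(\lambda_0\Sigma)s_0 < 0$ for some $\lambda_0 \in \supp\{\Lambda_1\}$ and some $s_0$ on the simplex, continuity of $(\lambda,s)\mapsto s^\prime(\lambda\Sigma)s$ would make this hold on a relatively open neighborhood $N$ of $(\lambda_0,s_0)$ inside $\supp\{\Lambda_1\}\times\{s\in\R^p_+:\|s\|_1=1\}$; by Assumptions~\ref{ass:heteroakmextra}(iii)(iv) the conditional law of $(\Lambda_1, S_1/\|S_1\|_1)$ given $\mathcal W_n$ assigns $N$ positive probability for a.e.\ $\mathcal W_n$, and on that event $S_1^\prime\Gamma_1S_1<0$ by positive homogeneity, contradicting $a_n=1$. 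Hence for a.e.\ $\mathcal W_n$ either $\lambda\Sigma \succeq_+ 0$ for all $\lambda \in \supp\{\Lambda_1\}$, or $\lambda\Sigma \preceq_+ 0$ for all $\lambda \in \supp\{\Lambda_1\}$. I expect this measure-theoretic step -- properly handling the boundary of the simplex and the product-support hypothesis -- to be the main obstacle.

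To conclude, I would note that $e_j^\prime(\lambda\Sigma)e_j = \lambda_j {\rm Var}\{\Sh_j|\mathcal W_n\}$ and that ${\rm Var}\{\Sh_j|\mathcal W_n\} > 0$ with probability one by Assumption~\ref{ass:heteroakmextra}(i); hence $\lambda\Sigma \succeq_+ 0$ forces $\lambda \geq 0$ coordinatewise and $\lambda\Sigma \preceq_+ 0$ forces $\lambda \leq 0$ coordinatewise. Thus the ``$\succeq_+$'' branch above can hold for $\mathcal W_n$ on a positive-probability set only if $\supp\{\Lambda_1\} \subseteq \R^p_+$, which is an unconditional conclusion; since $P(\Lambda_1 \neq 0)>0$, the ``$\preceq_+$'' branch is then impossible for a.e.\ $\mathcal W_n$, so the ``$\succeq_+$'' branch holds for a.e.\ $\mathcal W_n$ and therefore $\Gamma_1 = \Lambda_1\Sigma \succeq_+ 0$ with probability one -- the second alternative in \eqref{lm:signakmdisp2}. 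The symmetric argument starting from the ``$\preceq_+$'' branch delivers the first alternative, completing the proof.
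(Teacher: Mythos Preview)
Your proposal is essentially correct and follows the same strategy as the paper: reduce via conditional i.i.d.\ structure to $\max\{a_n,b_n\}=1$ a.s.\ in $\mathcal W_n$, upgrade to a support statement using Assumptions~\ref{ass:heteroakmextra}(iii)(iv), and then use the coordinate directions $e_j$ to pin down the sign of $\Lambda_i$. The main organizational difference is that the paper factors the last step into a separate Lemma~\ref{lm:auxsign} (which concludes $P(\Lambda_i\geq 0)=1$ or $P(\Lambda_i\leq 0)=1$) and then combines it with the individual-level ``or'' statement via inclusion--exclusion, whereas you inline the $e_j$ argument and rule out the opposite branch directly; your route is slightly more self-contained.

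One technical point needs repair. Your sentence ``the integrand converges pointwise to $\mathbf 1\{\max\{a_n,b_n\}=1\}$, so bounded convergence \ldots'' does not make sense as written: $a_n,b_n,c_n$ depend on $n$ through $\mathcal W_n=\{W_i\}_{i=1}^n$, so there is no fixed sequence to take a limit of, and dominated convergence does not apply. The paper avoids this by arguing at a \emph{fixed} $n\geq 2$: since $a_n^n+b_n^n-c_n^n$ is itself a conditional probability, $E[a_n^n+b_n^n-c_n^n]=1$ forces $a_n^n+b_n^n-c_n^n=1$ a.s., and for $n\geq 2$ this together with conditional independence gives $P(S_1'\Gamma_1S_1>0\mid\mathcal W_n)\,P(S_1'\Gamma_1S_1<0\mid\mathcal W_n)=0$, hence $\max\{a_n,b_n\}=1$. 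Replace your limit argument with this direct one and the rest of your proof goes through. Your caveat about the measure-theoretic step (conditional support from product support) is apt; the paper handles it the same way you do, via the product-support hypothesis in Assumption~\ref{ass:heteroakmextra}(iii).
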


\noindent \emph{Proof.} First note that since $S_i \in \R^p_+$ by Assumption \ref{ass:heteroakmextra}(iv), it follows that condition \eqref{lm:signakmdisp2} implies \eqref{lm:signakmdisp1} holds.
For the reverse direction, set $\mathcal W_n = \{W_i\}_{i=1}^n$ and $\Gamma_{i,n} = \Lambda_i\text{Var}\{\Sh|\mathcal W_n\}$ for notational simplicity, and employ Assumptions \ref{ass:heteroakmextra}(i) to obtain
\begin{align}\label{lm:signakm1}
P& \left(\min_{1\leq i \leq n} S_i^\prime (\Lambda_i {\rm Var}\{\Sh|\mathcal G_n\})S_i \geq 0 \text{ or } \max_{1\leq i \leq n} S_i^\prime (\Lambda_i {\rm Var}\{\Sh|\mathcal G_n\})S_i \leq 0 \right) \notag \\
& = P\left(\min_{1\leq i \leq n} S_i^\prime\Gamma_{i,n} S_i \geq 0\right) + P\left(\max_{1\leq i \leq n}S_i^\prime \Gamma_{i,n}S_i \leq 0\right) - P\left(S_i^\prime \Gamma_{i,n}S_i = 0 ~ \text{ for all } i \right) \notag \\
& =E\left[P^n\left(S_i^\prime \Gamma_{i,n} S_i \geq 0|\mathcal W_n\right) + P^n\left(S_i^\prime \Gamma_{i,n}S_i \leq 0|\mathcal W_n\right) - P^n\left(S_i^\prime \Gamma_{i,n} S_i = 0 |\mathcal W_n\right)\right],
\end{align}
where the second equality follows from Assumption \ref{ass:heteroakmextra}(ii).
For $n\geq 2$, result \eqref{lm:signakm1} and the law of iterated expectations together imply that in order for the equality in \eqref{lm:signakmdisp1} to hold we must have with probability one over $(\Lambda_i,\mathcal W_n)$ that
\begin{equation}\label{lm:signakm2}
\max\left\{P\left(S_i^\prime \Gamma_{i,n} S_i \geq 0 | \Lambda_i,\mathcal W_n\right),P\left(S_i^\prime \Gamma_{i,n} S_i \leq 0 |\Lambda_i, \mathcal W_n\right)\right\} = 1.
\end{equation}
Next note that Assumption \ref{ass:heteroakmextra}(iii) implies that the distribution of $S_i$ is absolutely continuous with respect to the distribution of $S_i$ conditionally on $(\Lambda_i,\mathcal W_n)$.
Therefore, letting $\mathbb M_p$ denote the set of $p\times p$ real matrices and defining the sets
\begin{align*}
O_+ & \equiv \left\{ G \in \mathbb M_p: P\left(S^\prime G S \geq 0\right) =1 \right\}   \notag \\
O_{-} & \equiv \left\{ G\in \mathbb M_p: P\left(S^\prime G S \leq 0\right) = 1\right\}
\end{align*}
we obtain from result \eqref{lm:signakm2} holding with probability one over $(\Lambda_i,\mathcal W_n)$ that we have
\begin{equation}\label{lm:signakm4}
P\left(\Gamma_{i,n} \in O_+ \cup O_{-}\right) = 1.
\end{equation}
Moreover, since the sign of $a^\prime G a$ equals the sign of $a^\prime G a/\|a\|_1^2$ for any matrix $G\in \mathbb M^p$ and vector $0\neq a\in \R^p$, Assumption \ref{ass:heteroakmextra}(iv) allows us to conclude that
\begin{equation}\label{lm:signakm5}
O_+  = \left\{ G \in \mathbb M_p: \inf_{a\in \R^p_+} a^\prime G a \geq 0\right\}   \hspace{0.5 in}
O_{-}  = \left\{ G\in \mathbb M_p: \sup_{a\in \R^p_+} a^\prime G a \leq 0 \right\}.
\end{equation}
In particular, combining \eqref{lm:signakm4} with \eqref{lm:signakm5} yields that the distribution of $\Gamma_{i,n}$ satisfies
\begin{equation}\label{lm:signakm6}
P\left(\inf_{a\in \R^p_{+}} a^\prime \Gamma_{i,n} a \geq 0 \text{ or } \sup_{a\in \R^p_{+}} a^\prime \Gamma_{i,n} a \leq 0\right) = 1.
\end{equation}

To conclude, note that by Lemma \ref{lm:auxsign} we have $P(\Lambda_i \geq 0) =1$ or $P(\Lambda_i \leq 0) = 1$.
However, if $P(\Lambda_i \geq 0)=1$, then $\Gamma_{i,n} = \Lambda_i \text{Var}\{\Sh|\mathcal W_n\}$
and Assumption \ref{ass:heteroakmextra}(i) imply
\begin{multline}\label{lm:signakm7}
P\left(\sup_{a\in \R^p_+} a^\prime \Gamma_{i,n}a \leq 0\right) = P\left(\sup_{a\in \R^p_+} a^\prime \Gamma_{i,n}a \leq 0; \Lambda_i \geq 0\right) \\
\leq P\left(\Lambda_i = 0\right) \leq P\left(\sup_{a\in \R^p_+} |a^\prime \Gamma_{i,n}a| = 0\right)\leq P\left(\sup_{a\in \R^p_+} a^\prime \Gamma_{i,n}a \leq 0\right),
\end{multline}
where the final two inequalities hold by set inclusion.
By result \eqref{lm:signakm6} we thus obtain
\begin{align*}
1 & =  P\left(\inf_{a\in \R^p_{+}} a^\prime \Gamma_{i,n} a \geq 0 \text{ or } \sup_{a\in \R^p_{+}} a^\prime \Gamma_{i,n} a \leq 0\right) \notag \\
& = P\left(\inf_{a\in \R^p_{+}} a^\prime \Gamma_{i,n} a \geq 0\right) + P\left(\sup_{a\in \R^p_{+}} a^\prime \Gamma_{i,n} a \leq 0\right) - P\left(\sup_{a\in \R_+^p} |a^\prime \Gamma_{i,n}a| = 0\right)  \notag \\
& = P\left(\inf_{a\in \R^p_{+}} a^\prime \Gamma_{i,n} a \geq 0\right),
\end{align*}
where the final equality follows from \eqref{lm:signakm7}.
Similarly, it is possible to show that if $P(\Lambda_i \leq 0) =1$, then $P(\sup_{a\in \R^p_+}a^\prime \Gamma_{i,n}a\leq 0) = 1$, and therefore the lemma follows. \qed

\begin{lemma}\label{lm:auxsign}
Let Assumption \ref{ass:heteroakmextra} hold, set $\mathcal G_n = \{S_i,W_i,\Lambda_i,\beta_i,\varepsilon_i,\eta_i\}_{i=1}^n$, and 
\begin{equation}\label{lm:auxsigndisp}
P\left(\min_{1\leq i \leq n} S_i^\prime (\Lambda_i {\rm Var}\{\Sh|\mathcal G_n\}S_i\} \geq 0 \text{ or } \max_{1\leq i \leq n} S_i^\prime (\Lambda_i {\rm Var}\{\Sh|\mathcal G_n\})S_i\} \leq 0 \right) = 1
\end{equation}
for $n$ sufficiently large.
Then, it follows that either $P(\Lambda_i \geq 0) =1$ or $P(\Lambda_i \leq 0) = 1$.
\end{lemma}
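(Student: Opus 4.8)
The plan is to reduce the hypothesis to a statement about the conditional law of $(\Lambda_i,S_i)$ given $\mathcal W_n\equiv\{W_i\}_{i=1}^n$, and then to exploit Assumption \ref{ass:heteroakmextra}(iv) — which allows the shares to concentrate arbitrarily on any single coordinate — in order to turn a sign restriction on the quadratic form $S_i^\prime(\Lambda_i\,{\rm Var}\{\Sh|\mathcal G_n\})S_i$ into a sign restriction on each $\lambda_{ij}$.

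First I would set $\Sigma_n\equiv{\rm Var}\{\Sh|\mathcal G_n\}$, which by Assumption \ref{ass:heteroakmextra}(i) equals ${\rm Var}\{\Sh|\mathcal W_n\}$ and is therefore $\mathcal W_n$-measurable, and write $q_i\equiv S_i^\prime(\Lambda_i\Sigma_n)S_i$. By Assumption \ref{ass:heteroakmextra}(ii) the $\{q_i\}_{i=1}^n$ are i.i.d.\ conditionally on $\mathcal W_n$, so, by inclusion--exclusion and conditional independence, the event in \eqref{lm:auxsigndisp} has conditional probability $p_+^n+p_-^n-p_0^n$, where $p_+\equiv P(q_1\geq0|\mathcal W_n)$, $p_-\equiv P(q_1\leq0|\mathcal W_n)$, and $p_0\equiv P(q_1=0|\mathcal W_n)$. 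Since $p_+^n+p_-^n-p_0^n$ is itself a conditional probability it never exceeds $1$, while its expectation equals the probability in \eqref{lm:auxsigndisp}; hence the hypothesis forces $p_+^n+p_-^n-p_0^n=1$ with probability one over $\mathcal W_n$ for any $n$ in the asserted range. Fixing such an $n\geq2$, the complementary event (some $q_i>0$ and some $q_{i'}<0$) then has conditional probability zero; as it contains $\{q_1>0,q_2<0\}$, which has conditional probability $P(q_1>0|\mathcal W_n)P(q_1<0|\mathcal W_n)$, we conclude $\min\{P(q_1>0|\mathcal W_n),P(q_1<0|\mathcal W_n)\}=0$ — that is, with probability one over $\mathcal W_n$ either $P(q_1\geq0|\mathcal W_n)=1$ or $P(q_1\leq0|\mathcal W_n)=1$. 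Call $\mathcal M_+$ (resp.\ $\mathcal M_-$) the $\mathcal W_n$-measurable event on which the first (resp.\ second) alternative holds; since $P(\mathcal M_+\cup\mathcal M_-)=1$, at least one has positive probability, and by symmetry I would treat $P(\mathcal M_+)>0$ and deduce $P(\Lambda_i\geq0)=1$.

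Intersecting $\mathcal M_+$ with the probability-one events that ${\rm Var}\{\Sh_j|\mathcal W_n\}>0$ for all $j$ (Assumption \ref{ass:heteroakmextra}(i)) and that the conditional law of $(\Lambda_i,S_i)$ given $\mathcal W_n$ has support ${\rm supp}\{\Lambda_i\}\times{\rm supp}\{S_i\}$ (from Assumption \ref{ass:heteroakmextra}(iii), via absolute continuity of the marginal with respect to the conditional law, exactly as in the proofs of Lemmas \ref{lm:sign} and \ref{lm:signakm}), I obtain a fixed realization of $\mathcal W_n$ with all three properties, for which it suffices to show ${\rm supp}\{\Lambda_i\}\subseteq\{\lambda\in\R^p:\lambda\geq0\}$ (the latter set being closed). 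Arguing by contradiction, suppose $\lambda^*\in{\rm supp}\{\Lambda_i\}$ has $\lambda^*_j<0$ for some $j$. Writing $h(\tilde s,\lambda)\equiv\tilde s^\prime({\rm diag}(\lambda)\Sigma_n)\tilde s$, continuity together with $h(e_j,\lambda^*)=\lambda^*_j{\rm Var}\{\Sh_j|\mathcal W_n\}<0$ (with $e_j$ the $j$-th standard basis vector) produces an open neighborhood of $(e_j,\lambda^*)$ in $\{s\geq0:\|s\|_1=1\}\times\R^p$ on which $h<0$; by Assumption \ref{ass:heteroakmextra}(iv) the vertex $e_j$ lies in ${\rm supp}\{S_i/\|S_i\|_1\}$, so this neighborhood meets ${\rm supp}\{(S_i/\|S_i\|_1,\Lambda_i)|\mathcal W_n\}$ and hence carries positive conditional mass. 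Since $q_i=\|S_i\|_1^2\,h(S_i/\|S_i\|_1,\Lambda_i)$ with $\|S_i\|_1>0$ almost surely, this yields $P(q_1<0|\mathcal W_n)>0$, contradicting $\mathcal W_n\in\mathcal M_+$; hence no such $\lambda^*$ exists, and the case $P(\mathcal M_-)>0$ is symmetric.

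I expect the main obstacle to be the bookkeeping in the last paragraph — passing from the three-way product-support Assumption \ref{ass:heteroakmextra}(iii) to the statement that the conditional law of $(\Lambda_i,S_i)$ given a generic $\mathcal W_n$ has the corresponding product support, and confirming that the open neighborhood of $(e_j,\lambda^*)$ genuinely carries positive conditional mass — though this is routine given the absolute-continuity arguments already used for Lemmas \ref{lm:sign} and \ref{lm:signakm}. By contrast, the probabilistic reduction is elementary, and the geometric heart of the argument — that concentrating the shares on coordinate $j$ makes the quadratic form proportional to $\lambda_{ij}$ times a strictly positive variance — is immediate from Assumptions \ref{ass:heteroakmextra}(i) and (iv).
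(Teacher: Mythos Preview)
Your proposal is correct and follows essentially the same route as the paper: both arguments use Assumption \ref{ass:heteroakmextra}(i)(ii) to reduce via conditional i.i.d.\ to a sign constraint on $q_1\equiv S_1^\prime(\Lambda_1{\rm Var}\{\Sh|\mathcal W_n\})S_1$ given $\mathcal W_n$, and then exploit Assumptions \ref{ass:heteroakmextra}(iii)(iv) together with the observation that near the vertex $e_j$ the quadratic form has the sign of $\lambda_{ij}{\rm Var}\{\Sh_j|\mathcal W_n\}$. The only difference is organizational---the paper runs a single contradiction (assume both $P(\max_j\lambda_{ij}>0)>0$ and $P(\min_j\lambda_{ij}<0)>0$, then exhibit $P(q_1>0,\,q_2<0)>0$ directly), whereas you first isolate the intermediate events $\mathcal M_\pm$ and then argue on a fixed realization; the substance is the same.
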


\noindent \emph{Proof}. We will proceed by contradiction and instead suppose that in fact we have that
\begin{equation}\label{lm:auxsign1}
\min\left\{P\left(\max_{1\leq j \leq p} \lambda_{ij} > 0\right),P\left(\min_{1\leq j \leq p} \lambda_{ij} < 0\right)\right\} > 0.
\end{equation}
For notational simplicity, let $\mathcal W_n = \{W_i\}_{i=1}^n$, set $\Gamma_{i,n}\equiv \Lambda_i \text{Var}\{\Sh|\mathcal W_n\}$, and define 
\begin{align*}
A_+(\Gamma_{i,n}) & \equiv \left\{s \in \R^p : s\geq 0 \text{ and } s^\prime \Gamma_{i,n} s > 0\right\}  \notag \\
A_-(\Gamma_{i,n}) &\equiv \left\{s \in \R^p : s\geq 0 \text{ and } s^\prime \Gamma_{i,n} s < 0\right\}.
\end{align*}
Next note that if $\lambda_{ij} > 0$ for some $1\leq j \leq p$ (resp.\ $\lambda_{ij} < 0$ for some $1\leq j \leq p$) then $A_+(\Gamma_{i,n})$ (rep.\ $A_{-}(\Gamma_{i,n})$) has non-empty interior whenever $\text{Var}\{\Sh_j|\mathcal W_n\} > 0$ for all $1\leq j \leq p$.
Therefore, since $s \in A_+(\Gamma_{i,n})$ (resp.\ $A_{-}(\Gamma_{i,n})$) if and only if $s/\|s\|_1\in A_+(\Gamma_{i,n})$ (resp.\ $s/\|s\|_1\in A_{-}(\Gamma_{i,n})$), we obtain by Assumptions \ref{ass:heteroakmextra}(i)(iii) that
\begin{align}\label{lm:auxsign3}
P\left(S_i \in A_+(\Gamma_{i,n})\Big| \max_{1\leq j \leq p} \lambda_{ij} > 0, \mathcal W_n \right) & > 0 \notag \\
P\left(S_i \in A_{-}(\Gamma_{i,n}) \Big| \min_{1\leq j \leq p} \lambda_{ij} < 0, \mathcal W_n\right) &> 0.    
\end{align}
Similarly, also note that display \eqref{lm:auxsign1} holding and Assumption \ref{ass:heteroakmextra}(iii) together imply
\begin{equation}\label{lm:auxsign4}
\min\left\{P\left(\max_{1\leq j \leq p} \lambda_{ij} > 0 \Big|\mathcal W_n\right), P\left(\min_{1\leq j \leq p} \lambda_{ij} < 0 \Big|\mathcal W_n\right)\right\} > 0.    
\end{equation}
To conclude, observe that for $n \geq 2$, display \eqref{lm:auxsign} and Assumptions \ref{ass:heteroakmextra}(i)(ii) yield
\begin{align*}
0 & \geq P\left(S_1^\prime \Lambda_1 \text{Var}\{\Sh|\mathcal W_n\} S_1 > 0 \text{ and } S_2^\prime \Lambda_2 \text{Var}\{\Sh|\mathcal W_n\} S_2 < 0\right) \notag \\
& = E\left[P\left(S_1^\prime \Lambda_1 \text{Var}\{\Sh|\mathcal W_n\} S_1 > 0|\mathcal W_n\right)P\left(S_1^\prime \Lambda_1 \text{Var}\{\Sh|\mathcal W_n\} S_1 < 0|\mathcal W_n\right)\right] \notag \\
& \geq E\left[P\left(S_1\in A_+(\Gamma_{1,n}); \max_{1\leq j \leq p}\lambda_{1j} > 0 |\mathcal W_n\right)P\left(S_1\in A_-(\Gamma_{1,n}); \min_{1\leq j \leq p} \lambda_{1j} < 0|\mathcal W_n\right)\right]\\
& >0,
\end{align*}
where the final inequality follows from results \eqref{lm:auxsign3} and \eqref{lm:auxsign4}. 
Hence, we have arrived at a contradiction implying \eqref{lm:auxsign1} cannot hold. \qed

\section{Proofs for Section \ref{sec:longpanel}}\label{App:sec:longpanel}


In this appendix we collect the technical results behind Proposition \ref{prop-clt}.
Section \ref{sec_Joint_Conv_Def} introduces the framework, Section \ref{sec-intuition} provides an outline of the main argument, and Sections \ref{Sec_uncond_conv} and \ref{sec_Cond_Analysis} establish the two key building blocks for the main result.

\subsection{Preliminaries and Definitions\label{sec_Joint_Conv_Def}}

We consider a probability space $\left(  \Xi\times\Psi,\mathcal{F}\times\mathcal{C},P_{u}\times P_{f}\right)  $ where $\Xi$ and $\Psi$ are Polish spaces with their respective Borel $\sigma$-algebras $\mathcal{F}$ and
$\mathcal{C}$; see, p.\ 270 in \cite{dudley1989real}. 
An example of such spaces is $\Xi=\left(  \mathbb{R}^{\infty}\times....\times\mathbb{R}^{\infty}\right)$;
i.e.\ a finite number of products of $\mathbb{R}^{\infty}$. 
Let $\Omega\equiv\Xi\times\Psi,$ $\mathcal{X}\equiv \mathcal{F}\times\mathcal{C}$ and $P\equiv P_{u}\times P_{f}$ so that the probability space can be represented more compactly as $\left(\Omega,\mathcal{X},P\right)$. 
It is useful for later developments to impose further structure on $\left(  \Xi,\mathcal{F},P_{u}\right)  $. 
We assume that $\left\{  \left(  \Xi_{i},\mathcal{F}_{i},P_{u,i}\right)  \right\}_{i=1}^{\infty}$ is a sequence of probability spaces and define $\Xi \equiv \Xi_{1} \times \Xi_{2} \times....$, $\mathcal{F} \equiv \mathcal{F}_{1} \times \mathcal{F}%
_{2} \times...,$ and $P_{u} \equiv P_{u,1}\times P_{u,2}\times...$ such that $\left(  \Xi,\mathcal{F},P_{u}\right)  $ is an infinite dimensional product space. 
Let $\mathcal{F}_{t}^{i}\subset\mathcal{F}_i$ denote an array of filtrations, where ${\mathcal F}_{-\infty}^{i}   =\left\{  \mathcal{\emptyset},\Xi_{i}\right\}$,  ${\mathcal F}_{\infty}^{i}={\mathcal F}_{i}$, and $\mathcal{F}_{t}^{i}   \subset \mathcal{F}_{t+1}^{i}$ for $i\leq n$.\footnote{We can allow for triangular arrays, but suppress it for simplicity of notations.}  Similarly, let $\mathcal{C}_{t}\subset\mathcal{C}$ denote a triangular array of filtrations, where $\mathcal{C}_{-\infty}  =\left\{  \mathcal{\emptyset},\Psi\right\}$,  $\mathcal{C}_{\infty} = \mathcal{C}$, and $\mathcal{C}_{t}  \subset \mathcal{C}_{t+1}  $.
In addition, define the filtrations $\mathcal{H}_{t}^{i}\equiv\mathcal{F}_{t}^{i}\times\mathcal{C}_{t}$. 
Then, $\mathcal{F}\times\mathcal{C}_{t}\subset\mathcal{X}$ for all $t$ and $\mathcal{H}_{t}^{i}$ can be embedded in $\mathcal{X}$ for each $i$ and all $t\leq T$ by padding up additional coordinates; see, e.g., page 140 in \cite{halmos1976measure}. 
In addition, note that $\mathcal{H}%
_{t}^{i}\subset\mathcal{H}_{t+1}^{i}$ for all $i$, and $t$. 

In order to simplify our analysis, we will assume that $\mathcal{F}_{t}^{i}   =\sigma\left(  ....,\eta_{i,t-1}%
,\eta_{it}\right)$, where $\left(  \eta_{it}\right)
_{i=1,t=-\infty}^{\infty,\infty}$ is an array of \emph{some} random variables on
$\left(  \Xi,\mathcal{F},P_{u}\right)  $ where for each $i$, $\left(
\eta_{it}\right)  _{t=-\infty}^{\infty}$ is an array of
random variables defined on $\left(  \Xi_{i},\mathcal{F}_{i},P_{u,i}\right)
$. The product space structure of $\left(  \Xi,\mathcal{F},P_{u}\right)  $
immediately implies independence of $\left(  \eta_{it}\right)
_{t=-\infty}^{\infty}$ and $\left(  \eta_{jt}\right)  _{t=-\infty
}^{\infty}$ for any $i\neq j$. Likewise, we assume that $\mathcal{C}_{t}    =\sigma\left(  ...,v_{t-1},v_{t}\right)$, where $\left(  v_{t}\right)  _{t=-\infty}^{\infty}$ is
a sequence of some random variables on $\left(  \Psi,\mathcal{C}%
,P_{f}\right)  $. 

We will introduce mixing measures as in
\cite{mcleish1975}; see also \cite{andrews1988} for triangular array versions
of these measures. Recall $\mathcal{C}_{t}=\sigma\left(  ...,v_{t-1},v_t\right)  $ and define $\mathcal{C}_{t}^{t+m}\equiv
\sigma\left(  v_{t},v_{t+1},...,v_{t+m}\right)  $ and $\mathcal{C}%
_{t}^{\infty}\equiv\sigma\left(  v_{t},v_{t+1},...\right)  $. Similarly, we let
$$\mathcal{F}_{t}^{i,t+m} \equiv\sigma\left(  \eta_{it}
,...,\eta_{i,t+m}\right)  ~\text{for all }i\leq n. $$
Our asymptotic normality results employ $\alpha$-mixing coefficients, and we therefore define
\[
\alpha_{f}\left(  m\right)     \equiv \sup_{t}\sup_{A\in\mathcal{C}_{t}%
,B\in\mathcal{C}_{t+m}^{\infty}}\left\vert P\left(  A\cap B\right)
-P\left(  A\right)  P\left(  B\right)  \right\vert . \]

We will now impose the following    
restrictions on the measures $P_{u}$ and $P_{f}$. By Theorems 4.34 and A.46 in \cite{breiman1968probability}, a regular conditional distribution on
$\mathcal{X}$ given $\mathcal{C}\subset\mathcal{X}$ exists, and by Theorem 10.2.2 in 
\cite{dudley1989real}, the regular conditional distribution is
unique for $P$-almost all $\omega\in\Omega$. As in \cite{eaglson1975}, let
$\omega^{\prime}\in\Omega$ and consider the regular conditional (on
$\mathcal{C}$) probability denoted by $Q_{\omega^{\prime}}\left(
B,\mathcal{C}\right)  =Q_{\omega^{\prime}}\left(  B\right)  $. It follows that
for fixed $B\in\mathcal{X}$, $Q_{\omega^{\prime}}\left(  B,\mathcal{C}\right)
$ is a version of $P\left(  \left.  B\right\vert \mathcal{C}\right)  $ and for
fixed $\omega^{\prime}\in\Omega$, $Q_{\omega^{\prime}}\left(  \cdot\right)  $
is a probability measure on $\mathcal{X}$. Importantly, this means
$Q_{\omega^{\prime}}\left(  \cdot\right)  $ is countably additive which
ensures that the law of iterated expectations holds; see p.\ 270 in \cite{dudley1989real}. We note that the results in \cite{dudley1989real} are
established for Polish spaces.

Consider the measure space $\left(  \Omega,\mathcal{X},Q_{\omega^{\prime}%
}\right)  $ with expectation $E_{\omega^{\prime}}$, which formalizes the idea of treating the aggregate variables $\varkappa_{t}$ (to be defined later) as constants through the choice of $\omega^{\prime}$. By a lemma in p.\ 558 of \cite{eaglson1975}, the following holds:

\begin{lemma}
[Eagleson 1975]\label{lem_Eagleson} Let $\mathcal{Y}$ be a sub-sigma field of
$\mathcal{X}$ such that $\mathcal{C}\subseteq\mathcal{Y}$.\footnote{It is easy
to see that the proof of Lemma 1 in \cite{eaglson1975} goes through when
$\mathcal{F=G}$.} Then, for $P$ almost all $\omega^{\prime}\in\Omega$ and a
random variable $q$ with $E\left[  \left\vert q\right\vert \right]  <\infty$
it follows that $E_{\omega^{\prime}}\left[  \left.  q\right\vert
\mathcal{Y}\right]  \left(  \omega\right)  =E\left[  \left.  q\right\vert
\mathcal{Y}\right]  \left(  \omega\right)  $ $Q_{\omega^{\prime}}$-a.s. 
\end{lemma}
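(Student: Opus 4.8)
\noindent The plan is to fix the integrable random variable $q$ together with a version $g$ of $E[q\,|\,\mathcal{Y}]$ computed under $P$, and to show that this same $g$ serves as a version of the conditional expectation $E_{\omega'}[q\,|\,\mathcal{Y}]$ under $Q_{\omega'}$ for $P$-almost every $\omega'$; the asserted identity is then immediate. The workhorse will be the elementary fact that, for \emph{any} fixed $P$-integrable $\mathcal{X}$-measurable random variable $X$, the map $\omega'\mapsto E_{\omega'}[X]=\int X\,dQ_{\omega'}$ is a version of $E[X\,|\,\mathcal{C}]$. I would prove this by the standard bootstrap: it is the defining property of the regular conditional distribution given $\mathcal{C}$ when $X=\mathbf{1}_B$; it extends to simple $X$ by linearity, to nonnegative $X$ by monotone convergence (applied both under $Q_{\omega'}$ and, conditionally, under $P$), and to general integrable $X$ by differencing. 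For a fixed $X$ the exceptional $P$-null set may be taken to depend on $X$ alone.

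With this in hand I would first record that, for $P$-almost every $\omega'$, both $q$ and $g$ are $Q_{\omega'}$-integrable: applying the workhorse to $X=|q|$ and to $X=E[|q|\,|\,\mathcal{Y}]$ (both $P$-integrable) and using $|g|\le E[|q|\,|\,\mathcal{Y}]$ off a $P$-null set $N_0$, together with $Q_{\omega'}(N_0)=P(N_0\,|\,\mathcal{C})(\omega')=0$ for $P$-almost every $\omega'$, gives $E_{\omega'}[|q|]<\infty$ and $E_{\omega'}[|g|]\le E_{\omega'}\big[E[|q|\,|\,\mathcal{Y}]\big]<\infty$ outside a $P$-null set. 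It then remains to verify the partial-averaging identity $\int_A (q-g)\,dQ_{\omega'}=0$ for all $A\in\mathcal{Y}$. For one fixed $A\in\mathcal{Y}$ this is the workhorse applied to $X=(q-g)\mathbf{1}_A$ combined with the tower property — this is exactly where the hypothesis $\mathcal{C}\subseteq\mathcal{Y}$ enters: $E[(q-g)\mathbf{1}_A\,|\,\mathcal{C}]=E\big[\mathbf{1}_A\,E[q-g\,|\,\mathcal{Y}]\,\big|\,\mathcal{C}\big]=0$, since $E[q-g\,|\,\mathcal{Y}]=0$ by the choice of $g$.

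The one genuine obstacle, which I expect to be the delicate point, is that the $P$-null set in the last display a priori depends on $A$, while $\mathcal{Y}$ may be uncountable. To circumvent this I would use that in the present framework every sigma-field of interest — in particular $\mathcal{Y}$ — is generated by countably many random variables, hence by a countable algebra $\mathcal{A}_0$. Setting $N\equiv N_0\cup\bigcup_{A\in\mathcal{A}_0}N_A$, a countable union of $P$-null sets, I would observe that for each $\omega'\notin N$ the two \emph{finite} measures $A\mapsto\int_A(q^{-}+g^{+})\,dQ_{\omega'}$ and $A\mapsto\int_A(q^{+}+g^{-})\,dQ_{\omega'}$ (finite by the integrability just established) agree on the $\pi$-system $\mathcal{A}_0$ and on $\Omega$, hence on $\sigma(\mathcal{A}_0)=\mathcal{Y}$ by the uniqueness theorem for measures. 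Thus for every $\omega'\notin N$ the function $g$ is $\mathcal{Y}$-measurable, $Q_{\omega'}$-integrable, and satisfies $\int_A q\,dQ_{\omega'}=\int_A g\,dQ_{\omega'}$ for all $A\in\mathcal{Y}$; i.e. $E_{\omega'}[q\,|\,\mathcal{Y}]=g=E[q\,|\,\mathcal{Y}]$ holds $Q_{\omega'}$-a.s., which is the claim. If one prefers not to invoke countable generation of $\mathcal{Y}$, one may instead appeal directly to the argument of Lemma~1 in \cite{eaglson1975}, whose extension to the case $\mathcal{Y}\supseteq\mathcal{C}$ is routine, as noted in the text.
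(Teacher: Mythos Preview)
The paper does not actually prove this lemma: it is stated as a direct citation of Lemma~1 in \cite{eaglson1975}, with only the footnote remarking that Eagleson's argument carries over to the case $\mathcal{C}\subseteq\mathcal{Y}$. So there is no ``paper's own proof'' to compare against beyond that attribution.

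Your reconstruction is correct and is essentially the standard argument one would expect Eagleson's proof to follow: identify $E[q\,|\,\mathcal{Y}]$ under $P$ as the candidate, use the disintegration identity $E_{\omega'}[X]=E[X\,|\,\mathcal{C}](\omega')$ to verify integrability and the partial-averaging property set by set, and then pass from a countable generating class to all of $\mathcal{Y}$ via a $\pi$-system argument. The one caveat you correctly flag---that the null set in the partial-averaging step depends on $A$ and must be controlled via countable generation of $\mathcal{Y}$---is not an issue in the paper's setting: every sub-$\sigma$-field to which the lemma is applied (the $\mathcal{Y}_t^{i,t+m}$, the $\mathcal{K}_{L,s}$, etc.) is by construction generated by countably many random variables on a Polish space, so your hypothesis is satisfied. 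Your fallback of deferring to Eagleson's original argument for the general case matches exactly what the paper does.
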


For arbitrary $t$ and arbitrary $m\geq0$, we next define the filtration $\mathcal Y_t^{i,t+m}$ by
\begin{equation}\label{Y-Filt}
\mathcal{Y}_{t}^{i,t+m}\mathcal{\equiv F}_{t}^{i,t+m}\times\mathcal{C}
\end{equation}
 To establish marginal convergence as $T\rightarrow\infty$ for each $i\leq n$
fixed we follow the strategy of the proof of Theorem 2 in \cite{eaglson1975}. 
This
requires modifying the regularity conditions to the measure $Q_{\omega
^{\prime}}$. Define the conditional $L_{q}$ norm $\left\Vert y\right\Vert
_{\left.  q\right\vert \mathcal{C}}=\left(  \int\left\vert y\right\vert
^{q}dP\left(  \left.  y\right\vert \mathcal{C}\right)  \right)  ^{1/q}$. By Lemma
\ref{lem_Eagleson} it follows that $\left\Vert y\right\Vert _{\left.
q\right\vert \mathcal{C}}=\left(  \int\left\vert y\right\vert ^{q}%
dQ_{\omega^{\prime}}\right)  ^{1/q}$ $Q_{\omega^{\prime}}$ a.s. for $P$-almost
all $\omega^{\prime}$. Similarly, define the conditional mixing coefficients
\begin{equation}
\alpha_{\left.  \xi\right\vert \mathcal{C}}\left(  m\right)  \equiv \sup_{t}%
\sup_{A\in\mathcal{Y}_{-\infty}^{i,t},B\in\mathcal{Y}_{t+m}^{i,\infty}%
}\left\vert P\left(  \left.  A\cap B\right\vert \mathcal{C}\right)  -P\left(
\left.  A\right\vert \mathcal{C}\right)  P\left(  \left.  B\right\vert
\mathcal{C}\right)  \right\vert ,\quad a.s. \label{alpha_Y}%
\end{equation}
where $\alpha_{\left.  \xi\right\vert
\mathcal{C}}\left(  m\right)  $ are $\mathcal{C}$-measurable random variables.

Now, recall that our analysis is predicated on the moment restriction in \eqref{eq:long2}. In order to facilitate our asymptotic analysis, we impose the following:

\begin{condition}
\label{Cond_Ind}
(i) There are some random vectors $f_{t}$ and $u_{it}$ defined on $(\Psi,{\mathcal C},P_f)$ and $(\Xi_i, {\mathcal F}_{i}, P_{u,i})$ such that 
$S_{i}%
\varepsilon_{it}=\psi\left(  f_{t},u_{it}\right)  $; (ii) The ${\mathcal Z}_t$ is a component of $f_t$; and  (iii) The joint distribution of $(u_{it})_{t=1}^{\infty}$ is identical across $i$.
\end{condition}

It is important to note that the product structure $P_{u}=P_{u,1}\times P_{u,2}\times...$ implies that the variables $(u_{it})_{t=1}^{\infty}$ are independent over $i$. 
Finally, we define the variables
\[
\xi_{it}\equiv(\mathcal{Z}_{t})^\prime \nu_{it}, \hspace{0.4 in} 
\varkappa_{t}\equiv(\mathcal{Z}_{t})^\prime \zeta_{t},
\]
where $\zeta_t$ and $\nu_{it}$ are given by $\zeta_{t} \equiv E\left[\left.  S_{it}\varepsilon_{it}\right| {\mathcal C}\right] = E\left[\left.  S_{it}\varepsilon_{it}\right| f_t \right]$ and $\nu_{it} \equiv S_{it}\varepsilon_{it} - \zeta_{t}$.

\subsection{Proof of Proposition \ref{prop:long-panel}  (Outline)} \label{sec-intuition}

Recall that Proposition \ref{prop:long-panel} derives an asymptotic distribution for the numerator (i.e.\ the score) of $(\hat \beta_n - \beta)$; see \eqref{eq:long3}. 
Given the introduced notation, we can decompose
\begin{equation*}
\frac{1}{nT}\sum_{i=1}^{n}\sum_{t=1}^{T}\left(
\mathcal{Z}_{t}^{\prime}S_{it}\right)  \varepsilon_{it}  = 
\frac{1}{T}\sum_{t=1}%
^{T} \varkappa_{t}+\frac{1}{nT}\sum_{t=1}^{T}\sum_{i=1}^{n}\xi_{it}.
\end{equation*}

We start the analysis by studying the conditional distribution given $\mathcal C$ of the term
\begin{equation}\label{eq:int1}
\frac{1}{\sqrt {nT}} \sum_{t=1}^T \sum_{i=1}^{n} \xi_{it}.
\end{equation}
By Condition \ref{Cond_Ind}, $\xi_{it}$ are independent over $i$ conditional on $\mathcal{C}$, and we therefore expect that \eqref{eq:int1} is asymptotically normal conditional on $\mathcal{C}$. 
Under additional regularity conditions, the limit distribution has a
variance that does not depend on $\mathcal{C}$. Indeed, this is formally established in Theorem \ref{Joint_Cond_CLT} below.
In  particular, defining
\begin{equation}\label{CCF}
\phi_{nT}\left(  \left.  \varsigma_{1}\right\vert \mathcal{C}\right)  \equiv
E\left[  \left.  \exp\left(  \iota\varsigma_{1}\frac{1}{\sqrt{nT}}\sum
_{t=1}^{T}\sum_{i=1}^{n}\xi_{it}\right)  \right\vert
\mathcal{C}\right]  ,
\end{equation}
we have that $\phi_{nT}\left(  \left.  \varsigma_{1}\right\vert \mathcal{C}\right)
\rightarrow\phi\left(  \varsigma_{1}\right)  $ almost surely in $\mathcal C$, where $\phi\left(  \varsigma_{1}\right)  $ denotes the characteristic function
of the limiting normal distribution.

Now, let's consider the joint characteristic function of the vector in display \eqref{vector-of-interest}:
$$
   E\left[  \exp\left(  \iota\varsigma_{2}\frac{1}{\sqrt{T}}\sum_{t=1}%
^{T}\varkappa_{t}+\iota\varsigma_{1}\frac{1}{\sqrt{nT}%
}\sum_{t=1}^{T}\sum_{i=1}^{n}\xi_{it}\right)
\right] 
 =E\left[  \exp\left(  \iota\varsigma_{2}\frac{1}{\sqrt{T}}\sum
_{t=1}^{T}\varkappa_{t}\right)  \phi_{nT}\left(  \left.
\varsigma_{1}\right\vert \mathcal{C}\right)  \right] 
$$
where we note that $\varkappa_t$ is measurable with respect to $\mathcal{C}$.  
Because $\sum_{t=1}^T E[\varkappa_{t}]=0$ due to the moment restriction in \eqref{eq:long2}, a time series CLT should apply to the term
\begin{equation}\label{eq:int2}
\frac{1}{\sqrt T} \sum_{t=1}^T \varkappa_t.
\end{equation}
Indeed, the asymptotic normality of \eqref{eq:int2} is formally established in Theorem \ref{Thm_JointCLT-TSterm} in Appendix \ref{Sec_uncond_conv} below. 
Letting $\varphi\left(\varsigma_{2}\right)  $ denote the characteristic function of the limiting normal
distribution of the term in \eqref{eq:int2}, we then obtain that
\begin{align}
  E & \left[  \exp\left(  \iota\varsigma_{2}\frac{1}{\sqrt{T}}\sum_{t=1}%
^{T}\varkappa_{t}+\iota\varsigma_{1}\frac{1}{\sqrt{nT}%
}\sum_{t=1}^{T}\sum_{i=1}^{n}\xi_{it}\right)
\right]  -\varphi\left(  \varsigma_{2}\right)  \phi\left(  \varsigma
_{1}\right)  \notag \\
  & =  E\left[  \exp\left(  \iota\varsigma_{2}\frac{1}{\sqrt{T}}\sum_{t=1}%
^{T}\varkappa_{t}\right)  \phi_{nT}\left(  \left.
\varsigma_{1}\right\vert \mathcal{C}\right)  \right]  -E\left[  \exp\left(
\iota\varsigma_{2}\frac{1}{\sqrt{T}}\sum_{t=1}^{T}\varkappa_{t}\right)  \phi\left(  \varsigma_{1}\right)  \right] \label{eq:int3} \\
&  +E\left[  \exp\left(  \iota\varsigma_{2}\frac{1}{\sqrt{T}}\sum_{t=1}%
^{T}\varkappa_{t}\right)  \phi\left(  \varsigma
_{1}\right)  \right]  -\varphi\left(  \varsigma_{2}\right)  \phi\left(
\varsigma_{1}\right) . \label{eq:int4}
\end{align}
Because characteristic functions are bounded by one and $\phi_{nT}\left(  \left.  \varsigma_{1}\right\vert \mathcal{C}%
\right)  \rightarrow\phi\left(  \varsigma_{1}\right)  $ almost surely in $\mathcal C$ we obtain from the dominated convergence theorem that
\begin{multline*}
\left\vert \eqref{eq:int3}  \right\vert \leq E\left[  \exp\left(
\iota\varsigma_{2}\frac{1}{\sqrt{T}}\sum_{t=1}^{T}\varkappa_{t}\right)  \left\vert \phi_{nT}\left(  \left.  \varsigma_{1}\right\vert
\mathcal{C}\right)  -\phi\left(  \varsigma_{1}\right)  \right\vert \right]\\
\leq E\left[  \left\vert \phi_{nT}\left(  \left.  \varsigma_{1}\right\vert
\mathcal{C}\right)  -\phi\left(  \varsigma_{1}\right)  \right\vert \right]  \rightarrow 0.
\end{multline*}
Similarly, since \eqref{eq:int2} is asymptotically normally distributed and $\varphi\left(\varsigma_{2}\right)$ denotes the characteristic function of its limiting distribution, we also have that
\begin{multline*}
\left\vert \eqref{eq:int4}  \right\vert  
  \leq\left\vert E\left[  \exp\left(  \iota\varsigma_{2}\frac{1}{\sqrt{T}%
}\sum_{t=1}^{T}\varkappa_{t}\right)  \right]
-\varphi\left(  \varsigma_{2}\right)  \right\vert \left\vert \phi\left(
\varsigma_{1}\right)  \right\vert \\
  \leq\left\vert E\left[  \exp\left(  \iota\varsigma_{2}\frac{1}{\sqrt{T}%
}\sum_{t=1}^{T}\varkappa_{t}\right)  \right]
-\varphi\left(  \varsigma_{2}\right)  \right\vert 
  \rightarrow 0.
\end{multline*}
To conclude, we see that \eqref{eq:int1} and \eqref{eq:int2} are jointly asymptotically normally distributed and independent under the conditions laid out in Theorems \ref{Thm_JointCLT-TSterm} and \ref{Joint_Cond_CLT}.

\begin{remark} \rm
While we have focused on the numerator of $(\hat \beta_n -\beta)$, the denominator can be studied under similar conditions. 
In particular it is possible to show that
$$\frac{1}{nT}\sum_{t=1}^T \sum_{i=1}^n (\Sh_t^\prime S_{it})X_{it} = \frac{1}{nT}\sum_{t=1}^T \sum_{i=1}^n E[(\Sh_t^\prime S_{it})X_{it}] + o_P(1),$$
though, for conciseness, we do not provide the details of the relevant argument. \qed
\end{remark}

\subsection{Unconditional Convergence}\label{Sec_uncond_conv}

The main purpose of this section is to establish a time series central limit
theorem for 
$$\frac{1}{\sqrt T}\sum_{t=1}^{T}\varkappa_{t},$$ which is formally established in Theorem \ref{Thm_JointCLT-TSterm} below. For this purpose, it is convenient to assume
that $\varkappa_{t}$ is $L_{2}$ near-epoch-dependent (NED).
The concept of NED sequences was introduced by \cite{billingsley1968convergence}.
\cite{mcleish1975} or \cite{andrews1988} show that a NED process is also a
mixingale. Therefore assumptions that impose NED type conditions lead to
strong laws by showing that these processes also satisfy the requirements for
strong laws of related mixingales; see Theorem 3 in \cite{mcleish1975} for the first result of this nature.

We impose the following condition that establishes the NED property for
$\varkappa_{t}$, and imposes sufficient conditions for $\varkappa_{t}$
to satisfy the conditions for the SLLN in \cite{DEJONG1996} and the
CLT in \cite{dejong1997}.
In the statement below, we say that a sequence $\delta_{m}$ is of size $-\lambda$ if $\delta_{m}=O\left(  m^{-\lambda-\omega}\right)  $ for some $\omega>0$; see p.\ 335 in \cite{dejong1997}.

\begin{condition}
\label{Cond_NED}For $r>2$, assume that $\sup_{t}\left\Vert
\varkappa_{t}\right\Vert _{r}<\infty$, and that there exists a bounded array of non-random constants $c_{t}$ such that
\[
\left\Vert \varkappa_{t}-E\left[  \varkappa_{t}|\mathcal{C}_{t-m}^{t+m}\right]
\right\Vert _{2}\leq c_{t}\beta_{f}\left(  m\right)  \text{ for all }t 
\]
where $\beta_{f}\left(  m\right)  $ is of
size $-1/2$. Further assume that for $r>2$,
$\alpha_{f}\left(  m\right)  $ is of
size $r/\left(  r-2\right)  $. 
\end{condition}

To establish the CLT we adopt the conditions given in \cite{dejong1997}. The
result is based on a blocking scheme that needs to be defined. Let $b_{T}$ and
$l_{T}$ be positive, non-decreasing integer valued sequences that are the
lengths of included and discarded blocks. Assume $b_{T}\geq l_{T}+1,$
$l_{T}\rightarrow\infty$, $l_{T}\geq1,$ $b_{T}\leq T$, $b_{T}/T\rightarrow0$
and $l_{T}/b_{T}\rightarrow0$. Let $r_{T}\equiv\left[  T/b_{T}\right]  $ and
define $\tilde{\varkappa}_{T,t}\equiv \varkappa_{t}/\sigma_{T,\varkappa}$, where
where $\sigma_{T,\varkappa}^{2}$ is given by
$$\sigma_{T,\varkappa}^{2}\equiv E\left[  \left(  \sum_{t=1}^{T}\varkappa_{t}\right)
^{2}\right].  $$ 

The following condition is sufficient for obtaining the desired CLT.

\begin{condition}
\label{Cond_NED_CLT_Z}Assume that $\left\{  \varkappa_{t},\mathcal{C}_{t}\right\}
$ satisfies Condition \ref{Cond_NED}. There exists a positive constant array
$e_{T,t}$ such that $c_{t}/e_{T,t}$ is uniformly bounded in $t\leq T$ and
$T\geq1$ and $\left\{  \tilde{\varkappa}_{T,t}/e_{T,t}\right\}  $  is $L_{r}$ bounded
for $r>2$ uniformly in $t\leq T$ and $T\geq1$. Let $M_{T,j}\equiv\max_{\left(
j-1\right)  b_{T}+1\leq t\leq jb_{T}}e_{T,t}$ for $1\leq j\leq r_{T}$ and
$M_{T,r_{T}+1}\equiv\max_{r_{T}b_{T}+1\leq t\leq T}e_{T,t}$. Then,
$\max_{1\leq j\leq r_{T}+1}M_{T,j}=o\left(  b_{T}^{-1/2}\right)  $ and
$\sum_{j=1}^{r_{T}}M_{T,j}^{2}=O\left(  b_{T}^{-1}\right)  .$
\end{condition}

\begin{theorem}\label{Thm_JointCLT-TSterm}
Assume that Condition \ref{Cond_NED_CLT_Z} holds and suppose that there is some nonrandom constant $\sigma_{\varkappa}^{2}$ such that $\lim_{T \to \infty} \sigma_{T,\varkappa}^2/T = \sigma_{\varkappa}^{2}$. 
It then follows that
$$\frac{1}{\sqrt T} \sum_{t=1}^{T}\varkappa_{t} \stackrel{d}{\rightarrow}N\left(  0,\sigma_{\varkappa}^{2}\right).  $$
\end{theorem}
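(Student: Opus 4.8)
The plan is to deduce the claim from the central limit theorem for near-epoch-dependent arrays in \cite{dejong1997}, applied to the studentized array $\tilde \varkappa_{T,t} = \varkappa_t/\sigma_{T,\varkappa}$, and then to undo the studentization using the assumed limit $\sigma_{T,\varkappa}^2/T \to \sigma_{\varkappa}^2$. First I would dispose of the degenerate case: if $\sigma_{\varkappa}^2 = 0$, then the moment restriction \eqref{eq:long2} gives $\sum_{t=1}^{T} E[\varkappa_t] = 0$, so $E[T^{-1/2}\sum_{t=1}^{T}\varkappa_t] = 0$ and ${\rm Var}\{T^{-1/2}\sum_{t=1}^{T}\varkappa_t\} = \sigma_{T,\varkappa}^2/T \to 0$; hence $T^{-1/2}\sum_{t=1}^{T}\varkappa_t \to 0$ in $L_2$ and therefore in distribution, establishing the result (with $N(0,0)$ read as the point mass at zero). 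We may thus assume $\sigma_{\varkappa}^2 > 0$, and in particular $\sigma_{T,\varkappa}^2 > 0$ for all $T$ large, so that the studentized array is well defined.

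Next I would verify that the hypotheses of the CLT in \cite{dejong1997} are met by $\{\tilde \varkappa_{T,t}\}_{t=1}^{T}$. Using $\sum_{t=1}^{T} E[\varkappa_t] = 0$ from \eqref{eq:long2}, replacing each $\varkappa_t$ by $\varkappa_t - E[\varkappa_t]$ leaves $\sum_{t=1}^{T}\varkappa_t$ unchanged, does not alter $\sigma_{T,\varkappa}^2$, and preserves the NED constants $c_t$ in Condition \ref{Cond_NED} as well as $L_r$ boundedness, so I may assume without loss of generality that $E[\varkappa_t] = 0$; by construction $E[(\sum_{t=1}^{T}\tilde \varkappa_{T,t})^2] = 1$. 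By Condition \ref{Cond_NED}, for some $r>2$ we have $\sup_t \|\varkappa_t\|_r < \infty$, and $\varkappa_t$ is $L_2$-NED of size $-1/2$ on the $\alpha$-mixing base $\{v_t\}$ (relative to the filtrations $\mathcal{C}_t$) whose mixing coefficients $\alpha_f(m)$ are of size $-r/(r-2)$; dividing by the nonrandom scalar $\sigma_{T,\varkappa}$ preserves the mixing base and the NED size, while the array $e_{T,t}$ of Condition \ref{Cond_NED_CLT_Z} delivers the requisite uniform $L_r$ control. Finally, the blocking parameters $b_T, l_T$, together with $\max_{1\le j \le r_T+1} M_{T,j} = o(b_T^{-1/2})$ and $\sum_{j=1}^{r_T} M_{T,j}^2 = O(b_T^{-1})$, are precisely the conditions under which \cite{dejong1997} establishes his CLT. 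His theorem then yields
\[
\frac{1}{\sigma_{T,\varkappa}}\sum_{t=1}^{T}\varkappa_t \;=\; \sum_{t=1}^{T}\tilde \varkappa_{T,t} \;\stackrel{d}{\rightarrow}\; N(0,1).
\]

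To conclude I would write $T^{-1/2}\sum_{t=1}^{T}\varkappa_t = (\sigma_{T,\varkappa}/\sqrt{T}) \times (\sigma_{T,\varkappa}^{-1}\sum_{t=1}^{T}\varkappa_t)$, note that $\sigma_{T,\varkappa}/\sqrt{T} \to \sigma_{\varkappa}$ by hypothesis while the second factor converges in distribution to $N(0,1)$ by the previous step, and invoke Slutsky's theorem to obtain $T^{-1/2}\sum_{t=1}^{T}\varkappa_t \stackrel{d}{\rightarrow} \sigma_{\varkappa}\cdot N(0,1) = N(0,\sigma_{\varkappa}^2)$. The main obstacle is the second step: the real content is checking that Condition \ref{Cond_NED_CLT_Z} (together with Condition \ref{Cond_NED}) reproduces every hypothesis of de Jong's CLT under his exact definitions of NED, mixing size, and the block-scheme bounds, and that his result covers the heterogeneous triangular-array structure here; the degenerate-case argument and the final rescaling are routine.
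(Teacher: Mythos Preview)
Your proposal is correct and follows essentially the same route as the paper: apply Theorem~2 of \cite{dejong1997} to the studentized array $\tilde\varkappa_{T,t}$ using Condition~\ref{Cond_NED_CLT_Z}, then rescale via $\sigma_{T,\varkappa}/\sqrt{T}\to\sigma_\varkappa$ (the paper phrases this as the continuous mapping theorem, you as Slutsky). Your treatment is slightly more careful in that you explicitly handle the degenerate case $\sigma_\varkappa^2=0$ and the mean-zero reduction, which the paper leaves implicit.
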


\begin{proof}
Under Condition \ref{Cond_NED_CLT_Z}, we can use Theorem 2 in \cite{dejong1997} to conclude that $\sum_{t=1}^{T}\tilde{\varkappa}_{T,t}\stackrel{d}{\rightarrow} N\left(  0,1\right)$. Under the additional assumption that $T^{-1}\sigma_{T,\varkappa}^{2}\rightarrow
\sigma_{\varkappa}^{2}$ for
some non-random constant $\sigma_{\varkappa}^{2}$,  it follows
from the continuous mapping theorem that
\begin{equation}
\frac{1}{\sqrt T}\sum_{t=1}^{T}\varkappa_{t}\stackrel{d}{\rightarrow} N\left(  0,\sigma_{\varkappa}^{2}\right), \label{CLT_Z_uncond}
\end{equation}
which establishes Theorem \ref{Thm_JointCLT-TSterm}. 
\end{proof}

\subsection{Conditional Analysis}\label{sec_Cond_Analysis}

The primary purpose of this section is to establish the asymptotic normality of the term
$$\frac{1}{\sqrt{nT}}\sum_{t=1}^{T}\sum_{i=1}^{n}\xi_{it}$$ conditional on $\mathcal{C}$, which is formally shown in Theorem \ref{Joint_Cond_CLT}.
The proof  uses ideas similar to the development in \cite{HKM},
which in turn relies on arguments in \cite{eaglson1975}, to handle the
conditioning step in the proof of the CLT; see also Lemma 2.9.5 in \cite{vandervaart:wellner:1996}. 
However, the dependence structure of the panel
is more complicated here than in \cite{HKM} and requires a different approach
to prove the CLT. \cite{HKM} consider a scenario where conditional on
$\mathcal{C}$, a cross-sectional average over i.i.d.\ draws is analyzed. 
Here, we need to extend these results to a panel setting with joint asymptotics as
$N,T\rightarrow\infty$ and where we allow for general dependence and possible
non-stationarity in the time series direction. We extend the notation from
\cite{HKM} to account for these extensions. The result of \cite{eaglson1975}
formalizes the intuition that conditional on $\mathcal{C}$ the processes
$\mathcal{Z}_{t}$ and $f_{t}$ can be treated as a fixed constant in deriving
limiting results.

We assume that $\alpha_{\left.  \xi\right\vert \mathcal{C}}\left(  m\right)
\rightarrow0$ almost surely as $m\rightarrow\infty.$ 
This is done in the next condition, which introduces conditional NED, by requiring $\alpha_{\xi|\mathcal C}$ be of size $r/(r-2)$. 

\begin{condition}
\label{Conditional_NED}
For $r> 2$, assume that $\sup \left\Vert \xi_{it}\right\Vert _{\left.  r\right\vert \mathcal{C}}%
<\infty$ a.s., and that there exists an array of $\mathcal{C}$-measurable
random variables $d_{it}\left(  \omega\right)  $ and a sequence of
$\mathcal{C}$-measurable random variables $\beta_{\left.  \xi\right\vert
\mathcal{C}}\left(  m\right)  $ of size $-1/2$ a.s.\ such that the following inequality holds:
\[
\left\Vert \xi_{it}-E\left[  \left.  \xi_{it}\right\vert \mathcal{Y}%
_{t-m}^{i,t+m}\right]  \right\Vert _{\left.  2\right\vert \mathcal{C}}  
\leq d_{it}\left(  \omega\right)  \beta_{\left.  \xi\right\vert
\mathcal{C}}\left(  m\right). 
\]
Further assume that for $r>2,$ $\alpha_{\left.
\xi\right\vert \mathcal{C}}\left(  m\right)  $ is of size $r/\left(
r-2\right)  $ a.s.
\end{condition}

We now turn to establishing our main result. The argument is based  on the one-to-one mapping $h_T$ of the double index $\left(  i,t\right) $ into a single index $s=h_T(i,t) \equiv \left(  i-1\right)  T+t$ for $i\leq n$ and $L\equiv nT$. We will let $(\bar{\iota}_{T}(s), \bar{t}_{T}(s))$ denote the $(i,t)$ that corresponds to $s$, i.e., $(\bar{\iota}_{T}(s), \bar{t}_{T}(s)) \equiv h_{T}^{-1}(s)$.\footnote{Note $\bar{\iota}_{T}(s)$ is the smallest integer larger than or equal to $s/T$, and $\bar{t}_{T}(s) = s - (\bar{\iota}_{T}(s)-1)\cdot T$.}
With some abuse of notation, we then set
\[
\left(  nT\right)  ^{-1/2}\sum_{i=1}^{n}\sum_{t=1}^{T}\xi_{it}=L^{-1/2}%
\sum_{s=1}^{L}\xi_{L,s}.
\]
The sum on the right hand side sums over components starting with $i=1$, $t=1,\ldots,T$
followed by $i=2,$ $t=1,\ldots,T$, and so on. 
Similarly, we construct the array of
filtrations based on \eqref{Y-Filt}, using coordinate identification rules of
\cite{halmos1976measure} (see p.\ 151 and p.\ 155) to organize coordinates, by defining
\begin{align*}
\mathcal{K}_{L,s}  &  \equiv\left(  \Times_{j=1}^{\bar{\iota}_{T}(s)-1
}\mathcal{F}_{-\infty}^{j,\infty}\right)  \times\mathcal{F}_{-\infty
}^{\bar{\iota}_{T}(s),\bar{t}_{T}(s)}\times\left(
\Times _{j=\bar{\iota}_{T}(s)  +1}^{\infty}\mathcal{F}_{-\infty}%
^{j}\right)  \times\mathcal{C\quad}\text{if }\bar{t}_{T}(s) < T,\\
\mathcal{K}_{L,s}  &  \equiv\left(  \Times_{j=1}^{\bar{\iota}_{T}(s)
}\mathcal{F}_{-\infty}^{j,\infty}\right)  \times\left(  \Times_{j=\bar{\iota}_{T}(s) +1}^{\infty}\mathcal{F}_{-\infty}^{j}\right)  \times
\mathcal{C\quad}\text{if }\bar{t}_{T}(s) = T.
\end{align*}
 The construction guarantees that $\mathcal{K}_{L,s}\subset
\mathcal{X}$ by padding missing coordinates with the trivial field
$\mathcal{F}_{-\infty}^{j}=\left\{  \emptyset,\Xi_{j}\right\}  $.

We introduce the binary operator $\ominus$ that maps
two $\sigma$-fields generated by a sequence of random variables into
a $\sigma$-field generated by the non-overlapping portion of members
of the sequence. In particular, for $0<s<s^{\prime}$ and $\mathcal{F}_{-\infty}^{j,s^{\prime}},$
$\mathcal{F}_{-\infty}^{j,s}$ we define $\mathcal{F}_{-\infty}^{j,s^{\prime}}\ominus\mathcal{F}_{-\infty}^{j,s}\equiv\mathcal{F}_{s}^{j,s^{\prime}}$,
where $\mathcal{F}_{t}^{i,t+m}=\sigma\left(\eta_{it},...,\eta_{i,t+m}\right).$
We further define $\ominus$ to have the properties $\left(\Times_{j=1}^{\infty}\mathcal{A}_{nT}^{j}\right)\ominus\left(\Times_{j=1}^{\infty}\mathcal{B}_{nT}^{j}\right)\equiv\Times_{j=1}^{\infty}\left(\mathcal{A}_{nT}^{j}\ominus\mathcal{B}_{nT}^{j}\right)$
for $\mathcal{A}_{nT}^{j},\mathcal{B}_{nT}^{j}\subset\mathcal{F}_{-\infty}^{j,\infty}$,
where we understand $\mathcal{F}_{-\infty}^{j,\infty}\ominus\mathcal{F}_{-\infty}^{j,\infty}\equiv\mathcal{F}_{-\infty}^{j}$,
$\mathcal{F}_{-\infty}^{j}\ominus\mathcal{F}_{-\infty}^{j}\equiv\mathcal{F}_{-\infty}^{j}$
and ${\mathcal{C}}\ominus{\mathcal{C}}\equiv{\mathcal{C}}$. Using these properties we define the sigma fields $\mathcal{K}_{L,s}^{s^{\prime}}$
by $\mathcal{K}_{L,s}^{s^{\prime}}\equiv\mathcal{K}_{L,s^{\prime}}\ominus\mathcal{K}_{L,s}$.

Because $\mathcal{K}_{L,s}$ contains the coordinate $\mathcal{C}$ for all $L$
and $s$ the construction implies that when conditioning on $\mathcal{K}_{L,s}
$, all $\varkappa_{t}$ and $\zeta_{t}$ are held fixed. We have the following Lemma
relating mixing coefficients for $\left\{  \xi_{L,s},\mathcal{K}%
_{L,s}\right\}  $ to mixing coefficients for $\left\{  \xi_{it}%
,\mathcal{Y}_{-\infty}^{i,t}\right\}  $.\footnote{Condition
\ref{Conditional_NED} includes uniformity of the mixing and approximation
coefficients, which is used here.}

\begin{lemma}
\label{Cond_Joint_NED}Assume that $\left\{  \xi_{it},\mathcal{Y}%
_{-\infty}^{i,t}\right\}  $ satisfies Condition \ref{Conditional_NED} for
each $i$. Assume that
Condition \ref{Cond_Ind} holds. Then, for $r>2,$
$\left\{  \xi_{L,s},\mathcal{K}_{L,s}\right\}  $ satisfies $\sup_{L,s\leq
L}\left\Vert \xi_{L,s}\right\Vert _{\left.  r\right\vert \mathcal{C}}<\infty$
a.s., and there exists an array of $\mathcal{C}$-measurable, almost surely bounded,  random
variables $d_{L,s}\left(  \omega\right)  $ and a sequence of $\mathcal{C}%
$-measurable random variables $\beta_{\xi|\mathcal{C}}\left(  m\right)  $ such
that
\begin{align}
\left\Vert \xi_{L,s}-E\left[  \xi_{L,s}|\mathcal{K}_{L,s-m}^{s+m}\right]
\right\Vert _{\left.  2\right\vert \mathcal{C}}  &  \leq d_{L,s}\left(
\omega\right)  \beta_{\left.  \xi\right\vert \mathcal{C}}\left(  m\right)
\text{ } \text{for each } s\leq L \label{NED_joint}
\end{align}
where $\beta_{\left.  \xi\right\vert \mathcal{C}}\left(  m\right)  $ is of
size $-1/2$ a.s.. Furthermore, for $r>2,$ and
\begin{equation}
\alpha_{\left.  \xi\right\vert \mathcal{C}}\left(  m\right)  =\sup_{t}%
\sup_{A\in\mathcal{K}_{L,-\infty}^{t},B\in\mathcal{K}_{L,t+m}^{\infty}%
}\left\vert P\left(  A\cap B|\mathcal{C}\right)  -P\left(  A|\mathcal{C}%
\right)  P\left(  B|\mathcal{C}\right)  \right\vert \label{alpha_mix_joint}%
\end{equation}
it follows that $\alpha_{\left.  \xi\right\vert \mathcal{C}}\left(  m\right)$ is of size $r/\left(  r-2\right)  $ a.s.
\end{lemma}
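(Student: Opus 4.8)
The plan is to verify the three assertions of the lemma in turn --- the conditional $L_r$ bound, the conditional NED inequality \eqref{NED_joint}, and the size statement for the joint conditional mixing coefficient in \eqref{alpha_mix_joint} --- exploiting that $s = h_T(i,t)$ merely concatenates the $n$ time-series blocks $\{(\xi_{it})_t\}_i$ and that these blocks are mutually independent conditionally on $\mathcal C$. The latter fact I would derive from the product structure $P_u = P_{u,1}\times P_{u,2}\times\cdots$ together with the independence of the $\Xi$-coordinates from $\Psi$ (hence from $\mathcal C$): the sigma-fields $\mathcal F_1,\mathcal F_2,\dots,\mathcal C$ are mutually independent, each $\xi_{it}$ is $\mathcal F_i\vee\mathcal C$-measurable by Condition \ref{Cond_Ind}, and so conditionally on $\mathcal C$ the blocks $\{(\xi_{it})_t\}_i$ are independent and (by Condition \ref{Cond_Ind}(iii)) identically distributed across $i$. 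The $L_r$ bound is then immediate, since $\xi_{L,s} = \xi_{\bar\iota_T(s),\bar t_T(s)}$ gives $\sup_{L,s\le L}\|\xi_{L,s}\|_{r|\mathcal C}\le\sup_{i,t}\|\xi_{it}\|_{r|\mathcal C}<\infty$ a.s.\ by the uniform moment bound in Condition \ref{Conditional_NED}.

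For the NED inequality, the crux is the inclusion $\mathcal Y_{t-m}^{i,t+m}\subseteq\mathcal K_{L,s-m}^{s+m}$ with $(i,t) = h_T^{-1}(s)$, which I would establish by unwinding $\ominus$ coordinate by coordinate. The $\mathcal C$-coordinate survives because $\mathcal C\ominus\mathcal C = \mathcal C$. On the $i$-th coordinate: if $s-m$ and $s+m$ both lie in block $i$ then $\bar t_T(s\pm m) = t\pm m$ and the $i$-th coordinate of $\mathcal K_{L,s+m}\ominus\mathcal K_{L,s-m}$ is exactly $\mathcal F_{t-m}^{i,t+m}$; if instead $s-m$ lies in an earlier block or $s+m$ in a later block (or either index falls outside $\{1,\dots,L\}$, with $\mathcal K_{L,\cdot}$ extended by monotonicity), that coordinate is one of the strictly larger fields $\mathcal F_{t-m}^{i,\infty}$, $\mathcal F_{-\infty}^{i,t+m}$, or the full $\mathcal F_i$, each of which still contains $\mathcal F_{t-m}^{i,t+m}$. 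Given this inclusion, and since $\mathcal C$ lies inside both fields, a conditional Pythagorean identity --- orthogonality in the $\|\cdot\|_{2|\mathcal C}$ inner product of $\xi_{L,s}-E[\xi_{L,s}|\mathcal K_{L,s-m}^{s+m}]$ and $E[\xi_{L,s}|\mathcal K_{L,s-m}^{s+m}]-E[\xi_{L,s}|\mathcal Y_{t-m}^{i,t+m}]$, which I would justify by passing to the measure $Q_{\omega'}$ via Lemma \ref{lem_Eagleson} --- yields
\[
\|\xi_{L,s}-E[\xi_{L,s}|\mathcal K_{L,s-m}^{s+m}]\|_{2|\mathcal C}\le\|\xi_{it}-E[\xi_{it}|\mathcal Y_{t-m}^{i,t+m}]\|_{2|\mathcal C}\le d_{it}(\omega)\beta_{\xi|\mathcal C}(m)
\]
by Condition \ref{Conditional_NED}. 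Taking $d_{L,s}(\omega)\equiv\sup_{i,t}d_{it}(\omega)$, which is $\mathcal C$-measurable and a.s.\ bounded by the uniformity in Condition \ref{Conditional_NED}, and keeping the same $\beta_{\xi|\mathcal C}$ of size $-1/2$ a.s., establishes \eqref{NED_joint}.

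For the mixing coefficient I would argue conditionally on $\mathcal C$, i.e.\ under $Q_{\omega'}$, once more via Lemma \ref{lem_Eagleson}. Fix $t$, $A\in\mathcal K_{L,-\infty}^t$, $B\in\mathcal K_{L,t+m}^\infty$. If $\bar\iota_T(t)<\bar\iota_T(t+m)$, then $A$ is measurable with respect to the blocks of index $\le\bar\iota_T(t)$ together with $\mathcal C$, while $B$ is measurable with respect to the blocks of index $>\bar\iota_T(t)$ together with $\mathcal C$; conditional independence of distinct blocks given $\mathcal C$ forces $P(A\cap B|\mathcal C) = P(A|\mathcal C)P(B|\mathcal C)$, so the term vanishes. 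If $\bar\iota_T(t) = \bar\iota_T(t+m)\equiv i$, then $A$ is measurable with respect to $\sigma(\{\eta_{j,\cdot}:j<i\})\vee\mathcal F_{-\infty}^{i,\bar t_T(t)}\vee\mathcal C$ and $B$ with respect to $\mathcal F_{\bar t_T(t+m)}^{i,\infty}\vee\sigma(\{\eta_{j,\cdot}:j>i\})\vee\mathcal C$; conditioning additionally on $\mathcal F_i$ renders $A$ and $B$ independent given $(\mathcal F_i,\mathcal C)$, so that
\[
P(A\cap B|\mathcal C)-P(A|\mathcal C)P(B|\mathcal C) = {\rm Cov}\bigl(P(A|\mathcal F_i,\mathcal C),\,P(B|\mathcal F_i,\mathcal C)\,\big|\,\mathcal C\bigr),
\]
where $P(A|\mathcal F_i,\mathcal C)$ is $\mathcal F_{-\infty}^{i,\bar t_T(t)}\vee\mathcal C$-measurable, $P(B|\mathcal F_i,\mathcal C)$ is $\mathcal F_{\bar t_T(t+m)}^{i,\infty}\vee\mathcal C$-measurable, and both are bounded by one. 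Since $\bar t_T(t+m)-\bar t_T(t) = m$, the covariance inequality for $\alpha$-mixing random variables (applied under $Q_{\omega'}$) bounds this by a fixed multiple of the within-block-$i$ conditional mixing coefficient at lag $m$, which by Condition \ref{Conditional_NED} is dominated --- uniformly in $i$, using Condition \ref{Cond_Ind}(iii) --- by a sequence of size $r/(r-2)$ a.s. Taking the supremum over $t$ shows the joint coefficient in \eqref{alpha_mix_joint} is of size $r/(r-2)$ a.s.

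The hard part is the measure-theoretic bookkeeping rather than any sharp estimate: verifying the inclusion $\mathcal Y_{t-m}^{i,t+m}\subseteq\mathcal K_{L,s-m}^{s+m}$ in every configuration of $s\pm m$ relative to the block boundaries (and off the index grid), and, for the mixing bound, making precise that conditionally on $\mathcal C$ the $i$-blocks are independent so that cross-block pairs $(A,B)$ contribute nothing and the same-block case reduces --- via the conditional covariance inequality and Lemma \ref{lem_Eagleson} --- to the within-block conditional mixing coefficients supplied by Condition \ref{Conditional_NED}.
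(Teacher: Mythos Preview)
Your proposal is correct. For the $L_r$ bound and the NED inequality \eqref{NED_joint}, your argument is essentially the paper's: both reduce to the projection inequality $\|\xi - E[\xi\mid\mathcal G]\|_{2\mid\mathcal C}\le\|\xi - E[\xi\mid\mathcal H]\|_{2\mid\mathcal C}$ for $\mathcal H\subseteq\mathcal G$, applied with $\mathcal H=\mathcal Y_{t-m}^{i,t+m}$ and $\mathcal G=\mathcal K_{L,s-m}^{s+m}$. The paper carries out a more explicit case analysis---identifying $E[\xi_{L,s}\mid\mathcal K_{L,s-m}^{s+m}]$ exactly in each configuration of $m$ relative to the block boundaries---whereas you state the inclusion once and invoke the projection bound. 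The paper also simply relabels $d_{L,s}=d_{it}$ rather than taking your uniform $\sup_{i,t}d_{it}$.

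For the mixing bound \eqref{alpha_mix_joint} you take a genuinely different route. The paper exploits the product-space structure directly: it decomposes $A$ and $B$ as measurable rectangles $A_1\times\cdots\times A_4$ and $B_1\times\cdots\times B_4$ across the coordinates $(\text{blocks }j<i)\times(\text{block }i)\times(\text{blocks }j>i)\times\mathcal C$, factors $P(A\cap B\mid\mathcal C)$, $P(A\mid\mathcal C)$, $P(B\mid\mathcal C)$ as products of single-coordinate probabilities, and observes that all factors except those on the $i$-th coordinate either cancel or are bounded by one---reducing to the within-block coefficient \emph{with no extra constant}. Your approach---condition on $\mathcal F_i$ to obtain the covariance identity and then apply the $\alpha$-mixing covariance inequality to the bounded variables $P(A\mid\mathcal F_i,\mathcal C)$ and $P(B\mid\mathcal F_i,\mathcal C)$---is more abstract and applies directly to arbitrary $A,B$ (not just rectangles), but it incurs a universal multiplicative constant, which is harmless for the size-$r/(r-2)$ claim though it does not recover the paper's sharp inequality $\le\alpha_{\xi\mid\mathcal C}(m)$. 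The paper's rectangle argument, as written, formally treats only generators of the product $\sigma$-field; your route sidesteps that at the cost of the constant.
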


\begin{proof}
The first claim follows immediately from $\xi_{L,s}=\xi_{it}$, where the $(i,t)=h_T^{-1}(s)$, and by applying
Condition \ref{Conditional_NED} to $\xi_{it}$.
To establish \eqref{NED_joint}, we first consider the case where
$m>\max\left(  T-\bar{t}_T(s)  ,\bar{t}_T(s)\right)  $. The way the filtration $\mathcal{K}_{L,s-m}^{s+m}$ is
constructed, it follows that $\xi_{L,s}$ is measurable with respect to $\mathcal{K}%
_{L,s-m}^{s+m}$. This then leads to%
\[
\left\Vert \xi_{L,s}-E\left[  \left.  \xi_{L,s}\right\vert \mathcal{K}%
_{L,s-m}^{s+m}\right]  \right\Vert _{\left.  2\right\vert \mathcal{C}}=0.
\]
Now, consider the case where $0\leq m\leq\max\left(  T-\bar{t}_T(s)  ,\bar{t}_T(s)\right)   $. We distinguish two
cases (a) $T-\bar{t}_T(s)  \geq \bar{t}_T(s) $, and (b) $T-\bar{t}_T(s)  <\bar{t}_T(s) $. For Case (a), if $0\leq m<\bar{t}_T(s) $, then\footnote{ For simplicity of notations, we adopt the convention that $(i,t)=(\bar{\iota}_{T}(s),\bar{t}_{T}(s))$.}
\[
E\left[  \left.  \xi_{L,s}\right\vert \mathcal{K}_{L,s-m}^{s+m}\right]
=E\left[  \xi_{it}\left\vert \mathcal{F}_{t-m}^{i,t+m}\vee\mathcal{C}%
\right.  \right]  ,
\]
and if $\bar{t}_T(s) \leq m<T-\bar{t}_T(s)  $, then by independence
\[
E\left[  \left.  \xi_{L,s}\right\vert \mathcal{K}_{L,s-m}^{s+m}\right]
=E\left[  \left.  \xi_{it}\right\vert \mathcal{F}_{-\infty}^{i,t+m}%
\vee\mathcal{C}\right]  .
\]
In the latter case, we have%
\begin{align*}
\left\Vert \xi_{L,s}-E\left[  \left.  \xi_{L,s}\right\vert \mathcal{K}%
_{L,s-m}^{s+m}\right]  \right\Vert _{\left.  2\right\vert \mathcal{C}}  &
=\left\Vert \xi_{it}-E\left[  \left.  \xi_{it}\right\vert \mathcal{F}%
_{nT,-\infty}^{i,t+m}\vee\mathcal{C}\right]  \right\Vert _{\left.
2\right\vert \mathcal{C}}\\
&  \leq\left\Vert \xi_{it}-E\left[  \left.  \xi_{it}\right\vert
\mathcal{F}_{t-m}^{i,t+m}\vee\mathcal{C}\right]  \right\Vert _{\left.
2\right\vert \mathcal{C}},
\end{align*}
because the residual in the projection on $\mathcal{F}_{t-m}^{i,t+m}%
\vee\mathcal{C}\subset\mathcal{F}_{-\infty}^{i,t+m}\vee\mathcal{C}$ has smaller variance, whereas in the former case we have
\[
\left\Vert \xi_{L,s}-E\left[  \left.  \xi_{L,s}\right\vert \mathcal{K}%
_{L,s-m}^{s+m}\right]  \right\Vert _{\left.  2\right\vert \mathcal{C}%
}=\left\Vert \xi_{it}-E\left[  \left.  \xi_{it}\right\vert
\mathcal{F}_{t-m}^{i,t+m}\vee\mathcal{C}\right]  \right\Vert _{\left.
2\right\vert \mathcal{C}}.
\]
Therefore, in both cases, we have%
\[
\left\Vert \xi_{L,s}-E\left[  \left.  \xi_{L,s}\right\vert \mathcal{K}%
_{n,s-m}^{s+m}\right]  \right\Vert _{\left.  2\right\vert \mathcal{C}}\leq
d_{nT,it}\left(  \omega\right)  \beta_{\left.  \xi\right\vert \mathcal{C}%
}\left(  m\right)
\]
by Condition \ref{Conditional_NED}. Relabeling the term $d_{nT,it}\left(
\omega\right)  =d_{L,s}\left(  \omega\right)  $ then yields the claimed
inequality in \eqref{NED_joint}.
For Case (b), if $0\leq m\leq T-\bar{t}_T(s)   $,
then
\[
E\left[  \left.  \xi_{L,s}\right\vert \mathcal{K}_{L,s-m}^{s+m}\right]
=E\left[  \xi_{it}\left\vert \mathcal{F}_{t-m}^{i,t+m}\vee\mathcal{C}%
\right.  \right]  ,
\]
and if $T-\bar{t}_T(s)   <m\leq \bar{t}_T(s) $, then 
\[
E\left[  \left.  \xi_{L,s}\right\vert \mathcal{K}_{L,s-m}^{s+m}\right]
=E\left[  \left.  \xi_{it}\right\vert \mathcal{F}_{t-m}^{i,\infty}%
\vee\mathcal{C}\right]  .
\]
In both cases, we have by similar reasoning
\[
\left\Vert \xi_{L,s}-E\left[  \left.  \xi_{L,s}\right\vert \mathcal{K}%
_{L,s-m}^{s+m}\right]  \right\Vert _{\left.  2\right\vert \mathcal{C}}\leq
d_{nT,it}\left(  \omega\right)  \beta_{\left.  \xi\right\vert \mathcal{C}%
}\left(  m\right).
\]

Finally, to establish \eqref{alpha_mix_joint} note that if $m>T-\bar{t}_T(s) $, it follows for $A\in\mathcal{K}_{L,-\infty
}^{s}$ and $B\in\mathcal{K}_{L,s+m}^{\infty}$ that
\begin{equation}
\left\vert P\left(  \left.  A\cap B\right\vert \mathcal{C}\right)  -P\left(
\left.  A\right\vert \mathcal{C}\right)  P\left(  \left.  B\right\vert
\mathcal{C}\right)  \right\vert =0\text{ a.s.} \label{alpha_zero}%
\end{equation}
because in that case $A$ and $B$ are independent by Condition \ref{Cond_Ind}. When $m\leq T-\bar{t}_T(s)  $, which implies that $\bar{\iota}_{T}(s) = \bar{\iota}_{T}(s+m)$, or in other words that conditioning happens within the same cross-sectional unit,  we consider
the following. The product space structure of $\mathcal{K}_{L,-\infty}^{s}$
implies that $A=A_{1}\times A_{2}\times A_{3}\times A_{4}$,
where\footnote{See \cite[p.154 and theorem 38.A]{halmos1976measure}.} $A_{1}    \in\Times_{j=1}^{\bar{\iota}_{T}(s)-1  }\mathcal{F}_{-\infty
}^{j}$, $A_{2}    \in\mathcal{F}_{-\infty}^{\bar{\iota}_{T}(s),\bar{t}_T(s)}$, $A_{3}   \in\Times_{j=\bar{\iota}_{T}(s) +1}^{\infty}\mathcal{F}%
_{nT,-\infty}^{j}$, and $ A_{4}    \in\mathcal{C}$.
In the same way, $B=B_{1}\times B_{2}\times B_{3}\times B_{4}$, where
$B_{1}   \in\Times_{j=1}^{\bar{\iota}_{T}(s+m)-1  }\mathcal{F}_{-\infty}^{j}$, $B_{2}   \in\mathcal{F}_{\bar{t}_{T}(s+m)}^{\bar{\iota}_{T}(s+m),\infty}$, $B_{3}   \in\Times_{j=\bar{\iota}_{T}(s+m) +1}^{\infty}\mathcal{F}_{-\infty}^{j}$, $B_{4}  \in\mathcal{C} $.
Note that
\begin{equation}
A_{2}\times A_{4}\in\mathcal{Y}_{-\infty}^{\bar{\iota}_{T}(s),\bar{t}_{T}(s)}=\mathcal{F}_{-\infty}^{\bar{\iota}_{T}(s),\bar{t}_{T}(s)}\times\mathcal{C} \label{A2_A4}%
\end{equation}
and
\begin{equation}
B_{2}\times B_{4}\in\mathcal{Y}_{\bar{t}_{T}(s+m)}^{\bar{\iota}_{T}(s+m),\infty}=\mathcal{F}_{\bar{t}_{T}(s+m)}^{\bar{\iota}_{T}(s+m),\infty}\times\mathcal{C}. \label{B2_B4}%
\end{equation}
Since $A\cap B=\Times_{j=1}^{4}\left(  A_{j}\cap B_{j}\right)  $ by p.\ 139 in \cite{halmos1976measure}, and
\[
P\left(  A\cap B|\mathcal{C}\right)  =\left(
{\textstyle\prod\nolimits_{j=1}^{3}}
\tilde{P}_{j}\left(  A_{j}\cap B_{j}\right)  \right)  P_{f}\left(  A_{4}\cap
B_{4}|\mathcal{C}\right)
\]
where $\tilde{P}_{j}$ are themselves products of measures $P_{u,j}$
corresponding to the respective coordinates in $A_{j}$ and $B_{j}$ and where
$A_{j},B_{j}$ for $j\leq3$ are independent of $\mathcal{C}$. We note that
$A_{1}$ and $B_{3}$ consist of coordinates that are either $\Xi_{j}$ or
$\emptyset.$ By Theorem 33.A in \cite{halmos1976measure}, $B=\emptyset$ if and only if at
least one coordinate is equal to $\emptyset.$ Thus, by the properties of
conditional expectations, $P\left(  A\cap B|\mathcal{C}\right)  =0$ if one of
the coordinates of $B$ or $A$ is equal to $\emptyset.$ Thus, the mixing
coefficient is zero in this case.
We therefore assume without loss of generality that for $A_{1}$ and $B_{3}$
all coordinates are equal to $\Xi_{j}.$ This implies that $A_{1}\cap
B_{1}=B_{1}$ and $A_{3}\cap B_{3}=A_{3}$, and we have $\tilde{P}_{1}\left(
A_{1}\cap B_{1}\right)  =\tilde{P}_{1}\left(  B_{1}\right)  $ and $\tilde
{P}_{1}\left(  A_{3}\cap B_{3}\right)  =\tilde{P}_{3}\left(  A_{3}\right)  $. 
Note that $P_{f}\left(  A_{4}\cap B_{4}|\mathcal{C}\right)  =E\left[
1_{A_{4}}1_{B_{4}}|\mathcal{C}\right]  =1_{A_{4}}1_{B_{4}}$ since $A_{4}$ and
$B_{4}$ are measurable with respect to $\mathcal{C}$. Finally, note that $\tilde{P}%
_{2}\left(  A_{2}\cap B_{2}\right)  =P_{u,\bar{\iota}_{T}(s)}\left(
A_{2}\cap B_{2}\right)  $. (Recall that $\bar{\iota}_{T}(s) = \bar{\iota}_{T}(s+m)$.)
These arguments lead to 
\begin{equation}
P\left(  A\cap B|\mathcal{C}\right)  =\tilde{P}_{1}\left(  B_{1}\right)
\tilde{P}_{3}\left(  A_{3}\right)  P_{u,\bar{\iota}_{T}(s)}\left(
A_{2}\cap B_{2}\right)  1_{A_{4}}1_{B_{4}}. \label{P_AxB}%
\end{equation}
Similar arguments also show that
\begin{equation}
P\left(  A|\mathcal{C}\right)  =\tilde{P}_{1}\left(  A_{1}\right)
P_{u,\bar{\iota}_{T}(s)}\left(  A_{2}\right)  1_{A_{4}} \label{P_A}%
\end{equation}
and
\begin{equation}
P\left(  B|\mathcal{C}\right)  =\tilde{P}_{3}\left(  B_{3}\right)
P_{u,\bar{\iota}_{T}(s)}\left(  B_{2}\right)  1_{B_{4}} \label{P_B}%
\end{equation}
such that
\begin{align}
&  \left\vert P\left(  A\cap B|\mathcal{C}\right)  -P\left(  A|\mathcal{C}%
\right)  P\left(  B|\mathcal{C}\right)  \right\vert \label{alpha_approx}\\
&  =\left\vert P_{u,\bar{\iota}_{T}(s)}\left(  A_{2}\cap B_{2}\right)
-P_{u,\bar{\iota}_{T}(s)}\left(  A_{2}\right)  P_{u,\bar{\iota}_{T}(s)}\left(  B_{2}\right)  \right\vert \tilde{P}_{1}\left(  B_{1}\right)
\tilde{P}_{3}\left(  A_{3}\right)  1_{A_{4}}1_{B_{4}}\nonumber\\
&  \leq\left\vert P_{u,\bar{\iota}_{T}(s)}\left(  A_{2}\cap B_{2}\right)
-P_{u,\bar{\iota}_{T}(s)}\left(  A_{2}\right)  P_{u,\bar{\iota}_{T}(s)}\left(  B_{2}\right)  \right\vert 1_{A_{4}}1_{B_{4}}\nonumber\\
&  \leq\sup_{A\in\mathcal{Y}_{-\infty}^{\bar{\iota}_{T}(s),\bar{t}_{T}(s)},B\in\mathcal{Y}_{\bar{t}_{T}(s+m)}^{\bar{\iota}_{T}(s),\infty}}\left\vert P\left(  A\cap B|\mathcal{C}%
\right)  -P\left(  A|\mathcal{C}\right)  P\left(  B|\mathcal{C}\right)
\right\vert \nonumber\\
&  \leq\alpha_{\left.  \xi\right\vert \mathcal{C}}\left(  m\right)  ,\nonumber
\end{align}
where the equality combines results \eqref{P_AxB}, \eqref{P_A} and \eqref{P_B}, the
first inequality uses $\tilde{P}_{1}\left(  B_{1}\right)  \tilde{P}_{3}\left(
A_{3}\right)  \leq1$, the second inequality uses
\begin{align*}
P_{u,\bar{\iota}_{T}(s)}\left(  A_{2}\cap B_{2}\right)  1_{A_{4}}%
1_{B_{4}}  &  =P_{u,\bar{\iota}_{T}(s)}\left(  A_{2}\cap B_{2}\right)
P_{f}\left(  A_{4}\cap B_{4}|\mathcal{C}\right) \\
&  =P_{u,\bar{\iota}_{T}(s)}\left(  \left(  A_{2}\times A_{4}\right)
\cap\left(  B_{2}\times B_{4}\right)  |\mathcal{C}\right)
\end{align*}
as well as \eqref{A2_A4} and \eqref{B2_B4}. The last inequality in
\eqref{alpha_approx} uses \eqref{alpha_Y}. Then, combining \eqref{alpha_zero}
and \eqref{alpha_approx} show that \eqref{alpha_mix_joint} holds.
\end{proof}

Lemma \ref{Cond_Joint_NED} is the basis for establishing a joint CLT as both
$n,T\rightarrow\infty.$ For this, the earlier blocking definitions need to be adjusted.
Let $L\equiv nT,$ and $n,T\rightarrow\infty$. Define $b_{L}$ and $l_{L}$ to be
positive, non-decreasing integer valued sequences that are the length of
included and discarded blocks. Assume $b_{L}\geq l_{L}+1,$ $l_{L}%
\rightarrow\infty,$ $l_{L}\geq1,$ $b_{L}\leq n$, $b_{L}/L\rightarrow0$ and
$l_{L}/b_{L}\rightarrow0$. Let $r_{L}\equiv\left[ L/b_{L}\right]  $ and
define $\tilde{\xi}_{L,s}\equiv\xi_{L,s}/\sigma_{L,\xi}(\omega)$,\footnote{ Note that $\sigma_{L,\xi}$ in \eqref{sigma_n} is a random variable. In order to emphasize it, we write $\tilde{\xi}_{L,s}\equiv\xi_{L,s}/\sigma_{L,\xi}(\omega)$. } where
\begin{equation}
\sigma_{L,\xi}^{2}=E\left[ \left. \left(  \sum_{s=1}^{L}\xi
_{L,s}\right)  ^{2} \right\vert
\mathcal{C}\right]  . \label{sigma_n}%
\end{equation}

We impose the following conditions that are modifications of the conditions
for the case of marginal convergence when $T\rightarrow\infty.$

\begin{condition}
\label{Conditional_NED_CLT_joint}Assume that $\left\{  \xi_{it}%
,\mathcal{Y}_{-\infty}^{i,t}\right\}  $ satisfies Condition
\ref{Conditional_NED} for each $i\leq n$ and all $T,n\geq1.$ For $\left\{
\xi_{L,s},\mathcal{K}_{L,s}\right\}  $ impose the following additional
restrictions. There exists a positive array of random variables $g_{L,s}%
\left(  \omega\right)  $ such that $d_{L,s}\left(  \omega\right)
/g_{L,s}\left(  \omega\right)  $ is uniformly bounded $P$-a.s. for all $s\leq
L$ and all $L\geq1$ and
\begin{equation}
\left\Vert \tilde{\xi}_{L,s}/g_{L,s}\left(  \omega\right)  \right\Vert
_{\left.  r\right\vert \mathcal{C}}<\infty\label{Cond_UI}%
\end{equation}
for $r>2$ for all $s\leq L$ and all $L\geq1$. Let $M_{L,j}\equiv\max_{\left(
j-1\right)  b_{L}+1\leq s\leq jb_{L}}g_{L,s}\left(  \omega\right)  $ for
$1\leq j\leq r_{L}$, and $M_{L,r_{L}+1}\equiv\max_{r_{L}b_{L}+1\leq s\leq
L}g_{L,s}\left(  \omega\right)  $. Then,
\[
b_{L}^{1/2}\left(\max_{1\leq j\leq r_{L}+1}M_{L,j}\right)\rightarrow0\text{ \ensuremath{P}-a.s.}
\]
and
\[
\left|b_{L}\left(\sum_{j=1}^{r_{L}}M_{L,j}^{2}\right)\right|\leq C_{M}<\infty\text{ \ensuremath{P}-a.s}.
\]
for some constant $C_{M}$ and all $L$ large enough. 

\end{condition}

We now state the following conditional CLT for joint convergence.

\begin{theorem}
\label{Joint_Cond_CLT}Assume that Conditions \ref{Cond_Ind} and  
\ref{Conditional_NED_CLT_joint} hold. Then, as $n,T\rightarrow\infty,$ 
\[
\lim_{T,n\rightarrow\infty}E\left[  \left.  e^{\iota\tau L^{-1/2}\sum
_{s=1}^{L}\tilde\xi_{L,s}}\right\vert
\mathcal{C}\right]  =e^{-\frac{1}{2}\tau^{2}}\text{ a.s.}%
\]
If in addition, $L^{-1}\sigma_{L,\xi}^{2}\left(  \omega\right)
\rightarrow\sigma_{\xi}^{2}$ P-a.s. where $\sigma_{\xi}^{2}$ is a constant, then it follows that
\[
\lim_{T,n\rightarrow\infty}E\left[  \left.  e^{\iota\tau L^{-1/2}\sum
_{s=1}^{L}\xi_{L,s}}\right\vert \mathcal{C}\right]  =e^{-\frac{1}{2}\tau
^{2}\sigma_{\xi}^{2}}\text{ a.s.}%
\]

\end{theorem}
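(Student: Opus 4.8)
The plan is to deduce the stated conditional central limit theorem from the triangular-array CLT of \cite{dejong1997}, applied pathwise under the regular conditional probability $Q_{\omega'}$, and then to read off the two claimed limits of the conditional characteristic function. Most of the analytic work --- establishing that the single-indexed array $\{\xi_{L,s},\mathcal{K}_{L,s}\}$ is conditionally $L_{2}$-NED with conditional $\alpha$-mixing coefficients of the appropriate size --- has already been carried out in Lemma \ref{Cond_Joint_NED}, and Condition \ref{Conditional_NED_CLT_joint} is constructed to encode, verbatim, the blocking and moment requirements of \cite{dejong1997}. What remains is essentially a translation between conditional and unconditional statements.

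First I would invoke Lemma \ref{lem_Eagleson} with the sub-$\sigma$-field equal to $\mathcal{C}$: for $P$-almost every $\omega'$, $Q_{\omega'}$ is a countably additive probability measure (using the Polish-space hypothesis) under which $E_{\omega'}[\,\cdot\,]=E[\,\cdot\,|\mathcal{C}]$ and the conditional $L_{r}$ norms coincide with the $Q_{\omega'}$-norms. Hence, under $Q_{\omega'}$, the conditional NED coefficients $\beta_{\xi|\mathcal{C}}(m)$ and conditional mixing coefficients $\alpha_{\xi|\mathcal{C}}(m)$ of Lemma \ref{Cond_Joint_NED} become ordinary (deterministic, for each fixed $\omega'$) NED and $\alpha$-mixing coefficients of $\{\xi_{L,s},\mathcal{K}_{L,s}\}$, the quantity $\sigma_{L,\xi}^{2}$ in \eqref{sigma_n} becomes the deterministic number $E[(\sum_{s=1}^{L}\xi_{L,s})^{2}|\mathcal{C}](\omega')$, and each $\xi_{L,s}=\mathcal{Z}_{t}^{\prime}\nu_{it}$ has $Q_{\omega'}$-mean zero since $\mathcal{Z}_{t}$ is $\mathcal{C}$-measurable and $E[\nu_{it}|\mathcal{C}]=0$ by construction of $\nu_{it}$. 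Because the size conditions on $\beta_{\xi|\mathcal{C}}(\cdot)$ and $\alpha_{\xi|\mathcal{C}}(\cdot)$, the uniform-in-$(L,s)$ bound \eqref{Cond_UI}, and the controls on the block maxima $M_{L,j}$ in Condition \ref{Conditional_NED_CLT_joint} are all imposed $P$-a.s., there is a single event of full $P$-probability on which they all hold $Q_{\omega'}$-a.s.; on that event these are precisely the hypotheses of Theorem 2 in \cite{dejong1997} for the array $\{\xi_{L,s}\}$ with standardizing constant $\sigma_{L,\xi}(\omega')$ and the blocking sequences $(b_{L},l_{L})$ fixed earlier.

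Applying \cite{dejong1997}'s Theorem 2 on $(\Omega,\mathcal{X},Q_{\omega'})$ then gives that the standardized sum $\sum_{s=1}^{L}\tilde\xi_{L,s}$ converges in distribution to $N(0,1)$ under $Q_{\omega'}$ for $P$-a.a.\ $\omega'$; by the continuity theorem this is the convergence of the conditional characteristic function to $e^{-\tau^{2}/2}$ asserted in the first display, holding $P$-a.s. For the second display I would add the hypothesis $L^{-1}\sigma_{L,\xi}^{2}(\omega)\to\sigma_{\xi}^{2}$ $P$-a.s.\ and use the identity $L^{-1/2}\sum_{s=1}^{L}\xi_{L,s}=(L^{-1/2}\sigma_{L,\xi})\sum_{s=1}^{L}\tilde\xi_{L,s}$: since $L^{-1/2}\sigma_{L,\xi}(\omega')=\sqrt{L^{-1}\sigma_{L,\xi}^{2}(\omega')}\to\sigma_{\xi}$ is a deterministic scalar limit given $\omega'$, the continuous mapping theorem --- used exactly as in the proof of Theorem \ref{Thm_JointCLT-TSterm} --- yields $L^{-1/2}\sum_{s=1}^{L}\xi_{L,s}\stackrel{d}{\rightarrow}N(0,\sigma_{\xi}^{2})$ under $Q_{\omega'}$, that is, $E[e^{\iota\tau L^{-1/2}\sum_{s=1}^{L}\xi_{L,s}}|\mathcal{C}]\to e^{-\tau^{2}\sigma_{\xi}^{2}/2}$ $P$-a.s.

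I expect the main obstacle to be the measure-theoretic bookkeeping: one must exhibit a single event of $P$-probability one on which, simultaneously, (i) $Q_{\omega'}$ is a regular conditional probability obeying the Eagleson identity, (ii) every size, moment, and block-maxima requirement from Lemma \ref{Cond_Joint_NED} and Condition \ref{Conditional_NED_CLT_joint} holds $Q_{\omega'}$-a.s., and (iii) the variance normalization converges; only then may the unconditional statement of \cite{dejong1997} be quoted verbatim on $(\Omega,\mathcal{X},Q_{\omega'})$ for each such $\omega'$. A secondary point requiring care is that the blocking scheme underlying \cite{dejong1997}'s theorem must be compatible with the single-index ordering $s=h_{T}(i,t)$ and with the resetting of dependence across cross-sectional units --- which is exactly what the filtrations $\mathcal{K}_{L,s}$ and the mixing bound \eqref{alpha_mix_joint} in Lemma \ref{Cond_Joint_NED} encode --- so that a discarded block of length $l_{L}$ genuinely decouples adjacent retained blocks in the conditional $\alpha$-mixing metric.
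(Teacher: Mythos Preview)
Your proposal is correct and matches the paper's proof essentially step for step: invoke Lemma \ref{lem_Eagleson} to pass to the regular conditional probability $Q_{\omega'}$, observe that under $Q_{\omega'}$ the conditional NED and mixing coefficients from Lemma \ref{Cond_Joint_NED} become deterministic constants and that Condition \ref{Conditional_NED_CLT_joint} furnishes the remaining blocking and moment hypotheses of Theorem 2 in \cite{dejong1997}, apply that theorem on $(\Omega,\mathcal{X},Q_{\omega'})$ for $P$-almost all $\omega'$, and deduce the second display from the first via the continuous mapping theorem. Your explicit attention to collecting the countably many $P$-a.s.\ conditions onto a single full-measure event, and to the mean-zero property of $\xi_{L,s}$ under $Q_{\omega'}$, is a useful gloss that the paper leaves implicit.
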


\begin{proof}
The proof follows the proof of Theorem 2 in \cite{eaglson1975}. Lemma \ref{Cond_Joint_NED}
shows that $\left\{ \xi_{L,s},\mathcal{K}_{L,s}\right\} $ is a NED
process with coefficients $\beta_{\xi|\mathcal{C}}\left(m\right)$
of size $-1/2$ and is based on a mixing sequence with coefficients
$\alpha_{\left.\xi\right\vert \mathcal{C}}\left(m\right)$ of size
$-1/2$, where $\beta_{\xi|\mathcal{C}}\left(m,\omega\right)$ and
$\alpha_{\left.\xi\right\vert \mathcal{C}}\left(m,\omega\right)$
are $\mathcal{C}$-measurable random variables and where we add the
$\omega$ argument to emphasize this fact. By Lemma \ref{lem_Eagleson},
the conditional expectations on $\left(\Omega,\mathcal{X},P\right)$
and $\left(\Omega,\mathcal{X},Q_{\omega^{\prime}}\right)$ coincide
with $P$-probability one. Therefore, for all $\omega^{\prime}\in\Omega$,
except for a set of $P$-measure zero, $\left\{ \xi_{it},\mathcal{Y}_{-\infty}^{i,t}\right\} $
satisfies the same NED process conditions with  the same mixing coefficients $Q_{\omega}$-a.s. The  quantities  $\beta_{\xi|\mathcal{C}}\left(m,\omega'\right)$ and $\alpha_{\left.\xi\right\vert \mathcal{C}}\left(m,\omega'\right)$
are constants for the probability space $\left(\Omega,\mathcal{X},Q_{\omega^{\prime}}\right)$.
Therefore, the remaining assumptions in Condition \ref{Conditional_NED_CLT_joint}
guarantee that for all $\omega^{\prime}\in\Omega,$ except for a set
of $P$-measure zero, $\left\{ \xi_{n,s},\mathcal{K}_{L,s}\right\} $
satisfies the conditions of Theorem 2 in \cite{dejong1997} for the probability
space $\left(\Omega,\mathcal{X},Q_{\omega^{\prime}}\right)$, except
on a set of $Q_{\omega^{\prime}}$-probability of zero. Then, for
$P$-almost all $\omega^{\prime}$ 
\begin{equation}\label{Joint_Cond_CLT:eq1}
\lim_{T,n\rightarrow\infty}E\left[\left.e^{\iota\tau L^{-1/2}\sum_{s=1}^{L}\tilde \xi_{L,s}}\right\vert \mathcal{C}\right]=\lim_{T,n\rightarrow\infty}E_{\omega^{\prime}}\left[e^{\iota\tau L^{-1/2}\sum_{s=1}^{L}\tilde \xi_{L,s}}\right]=e^{-\frac{1}{2}\tau^{2}}\text{ a.s.}
\end{equation}
where the last equality follows from applying Theorem 2 in \cite{dejong1997} on $\left(\Omega,\mathcal{X},Q_{\omega^{\prime}}\right)$ to a
set of $Q_{\omega^{\prime}}$ probability one, which establishes the
first result.

For the second statement, define the $\left(\Omega,\mathcal{X},Q_{\omega^{\prime}}\right)$-constant random variable
$$\sigma_{L,\xi}^{2}(\omega')=E_{\omega'}\left[  \left(  \sum_{s=1}^{L}\xi
_{L,s}\right)  ^{2}\right].$$  By  Lemma \ref{lem_Eagleson} $\sigma_{L,\xi}^{2}(\omega')$  is identical to $\sigma_{L,\xi}^{2}$ defined in (\ref{sigma_n})  except on a set of $Q_{\omega^{\prime}}$ probability zero.
Then, it follows that $L^{-1}\sigma_{L,\xi}^{2}\left(\omega'\right)\rightarrow\sigma_{\xi}^{2}$
with $Q_{\omega^{\prime}}$-probability one on $\left(\Omega,\mathcal{X},Q_{\omega^{\prime}}\right)$
because of the maintained assumption of the theorem that $L^{-1}\sigma_{L,\xi}^{2}\left(\omega\right)\rightarrow\sigma_{\xi}^{2}$  $P$ -a.s. The result follows
from applying the continuous mapping theorem to \eqref{Joint_Cond_CLT:eq1}.  
\end{proof}

\section{Proofs for Section \ref{sec:clt}\label{section-proof-test}}


In this appendix we provide a formal justification of the inference procedure proposed in Section \ref{sec:clt} by relying on the high dimensional central limit theorem of \cite{chernozhuokov2022improved}.
To this end, we impose the following assumptions.

\begin{assumption}\label{ass:cltcond1}
There are $\{\psi_i\}_{i=1}^{b_n}$ that are independent conditionally on $\mathcal G_n$ and satisfy $\psi_i \in \R^q$, $E[\psi_i|\mathcal G_n] = 0$ for all $1\leq i \leq b_n$, and for any $\varrho >0$
\begin{equation*}
P\left(  \left.  \left\Vert \sum_{i=1}^{n}f\left(  V_{i},\hat{\theta}%
_{n}\right)  -\sum_{i=1}^{b_{n}}\psi_{i}\right\Vert _{\infty}>\frac
{\varrho\sqrt{b_{n}}}{\sqrt{\log(q)}}\right\vert \mathcal{G}_{n}\right)  \leq
r_{1n}\left(  \varrho,\mathcal{G}_{n}\right) .
\end{equation*}
\end{assumption}

\begin{assumption}\label{ass:cltcond2}
For some $B_n(\mathcal G_n)$ satisfying $1\leq B(\mathcal G_n) < \infty$ with probability one 
\begin{equation*}
\max_{1\leq i\leq b_{n}}\max_{1\leq k\leq q}E\left[  \left.  \exp\left(
\frac{\left\vert \psi_{ki}\right\vert }{B_{n}(\mathcal{G}_{n})}\right)
\right\vert \mathcal{G}_{n}\right]  \leq2.
\end{equation*}
\end{assumption}

\begin{assumption}\label{ass:cltcond3}
There are  $\underline{\sigma}>0$ and $\bar{\sigma}<\infty$ not depending on $\mathcal G_n$ satisfying
\begin{equation*}
P\left(  \left.  \underline{\sigma}^{2}\leq\min_{1\leq k\leq q}\frac{1}{b_{n}%
}\sum_{i=1}^{b_{n}}E[\psi_{ki}^{2}|\mathcal{G}_{n}],\max_{1\leq k\leq q}%
\frac{1}{b_{n}}\sum_{i=1}^{b_{n}}E[\psi_{ki}^{4}|\mathcal{G}_{n}]\leq
\bar{\sigma}^{2}B_{n}(\mathcal{G}_{n})\right\vert \mathcal{G}_{n}\right)  =1.
\end{equation*}
\end{assumption}

\begin{assumption}\label{ass:cltcond4}
For $\mathcal D_n \equiv (\{Y_i,X_i,S_i\}_{i=1}^n,\Sh,\mathcal G_n)$ we have for any $\varrho > 0$ that
\begin{equation*}
P\left(  \left.  \left\Vert \sum_{i=1}^{b_{n}}\omega_{i}\left(  (\hat{\psi}_{i}-\psi_i) -\frac{1}{b_{n}} \sum_{j=1}^{b_{n}} (\hat{\psi}_{j}-\psi_j)  \right)  \right\Vert _{\infty}>\frac{\varrho\sqrt{b_{n}}}{\sqrt
{\log(q)}}\right\vert \mathcal{D}_{n}\right)  \leq r_{2n}(\varrho
,\mathcal{D}_{n}). 
\end{equation*}
\end{assumption}

\begin{assumption}\label{ass:cltweights}
The weights $\{\omega_i\}_{i=1}^{b_n}$ are independent of $\{\mathcal D_n\}$ and either: (i) Follow a multinomial distribution with parameter $b_n$ and probabilities $(1/b_n,\ldots, 1/b_n)$, or (ii) Are i.i.d.\ with $\omega_i = \omega_{i,1}+\omega_{i,2}$ with $\omega_{i,1} \sim N(0,\sigma^2_\omega)$ for some $\sigma^2_\omega \geq 0$ and $|\omega_{i,2}|\leq 3$. 
\end{assumption}

Assumption \ref{ass:cltcond1} demands that the vector of moments be asymptotically equivalent to a $q$-dimensional sample mean of random variables $\{\psi_{i}\}_{i=1}^{b_n}$. 
We note that Assumption \ref{ass:cltcond1} effectively requires that the null hypothesis be true by requiring that the variables $\{\psi_i\}_{i=1}^{b_n}$ have mean zero.
In turn, Assumptions \ref{ass:cltcond2} and \ref{ass:cltcond3} imposes moment restrictions on the variables $\{\psi_i\}_{i=1}^{b_n}$ that ensure that the high dimensional central limit theorem of \cite{chernozhuokov2022improved} is applicable.
Finally, Assumption \ref{ass:cltcond4} demands a linearization requirement on our bootstrap statistic, while Assumption \ref{ass:cltweights} states requirements on the weights $\{\omega_i\}_{i=1}^{b_n}$ that we may employ.
We note, in particular, that Assumption \ref{ass:cltweights} allows for the empirical bootstrap (through Assumption \ref{ass:cltweights}(i)) and the use of Standard Gaussian, Rademacher, or \cite{mammen:1993} weights (through Assumption \ref{ass:cltweights}(ii)).

Our next result encompasses Proposition \ref{prop-clt} as a special case. 
The first and second parts of the result provide conditions under which the level of the test is $1-\alpha$ \emph{unconditionally} on $\mathcal G_n$ and \emph{conditionally} on $\mathcal G_n$ respectively.
We view the unconditional result as appropriate for the asymptotic framework in \cite{adao2019shift} (in which elements of $\mathcal G_n$ are resampled), and the conditional result as more suitable for the asymptotic framework in \cite{goldsmith2020bartik} (in which $\mathcal G_n$ is not resampled). 

\begin{lemma}\label{lm:clt}
Let Assumptions \ref{ass:cltcond1}, \ref{ass:cltcond2}, \ref{ass:cltcond3}, \ref{ass:cltcond4}, and \ref{ass:cltweights} hold.
(i) If $r_{1n}(\varrho,\mathcal G_n) \vee r_{2n}(\varrho,\mathcal D_n) = o_P(1)$ for any $\varrho > 0$ and $B_n^2(\mathcal G_n) \log^5(qb_n)/b_n = o_P(1)$, then 
$$\lim_{n \to \infty} P(T_n \leq \hat c_n) = 1-\alpha.$$
(ii) If for any constants $\epsilon, \varrho > 0$ we have $r_{1n}(\varrho,\mathcal G_n) = o_{as}(1)$ and $P(r_{2n}(\varrho,\mathcal D_n) > \epsilon|\mathcal G_n) = o_{as}(1)$, and in additiong $B_n^2(\mathcal G_n) \log^5(qb_n)/b_n = o_{as}(1)$, then it follows that
$$\lim_{n\to \infty} P(T_n \leq \hat c_n|\mathcal G_n) = 1-\alpha ~ \text{ a.s. }$$
\end{lemma}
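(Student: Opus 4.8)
The plan is to reduce the statement to a direct application of the high-dimensional central limit theorem and high-dimensional bootstrap consistency of \cite{chernozhuokov2022improved}, applied conditionally on $\mathcal{G}_n$, and then to integrate out $\mathcal{G}_n$ for part (i). First I would fix $\mathcal{G}_n$ and regard $\{\psi_i\}_{i=1}^{b_n}$ as the underlying independent mean-zero summands; by Assumptions \ref{ass:cltcond2} and \ref{ass:cltcond3}, these satisfy the standard sub-exponential moment and nondegeneracy conditions of \cite{chernozhuokov2022improved} with envelope $B_n(\mathcal{G}_n)$, so under the rate condition $B_n^2(\mathcal{G}_n)\log^5(qb_n)/b_n = o(1)$ their studentized (or unstudentized) max-statistic $\max_{k}|\sum_i \psi_{ki}|$ is Kolmogorov-close, conditionally on $\mathcal{G}_n$, to the max of a centered Gaussian vector with the same covariance. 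The key steps, in order, are: (a) show $T_n$ is well approximated by $\max_k|\sum_i \psi_{ki}|$ using Assumption \ref{ass:cltcond1} (the term $\varrho\sqrt{b_n}/\sqrt{\log q}$ is exactly the slack the Gaussian max can absorb, since the Gaussian max concentrates at scale $\sqrt{b_n \log q}$ up to the envelope, and $r_{1n}\to 0$); (b) apply the HDCLT to pass from $T_n$ to the Gaussian analogue; (c) apply high-dimensional multiplier/empirical bootstrap consistency (the relevant theorem in \cite{chernozhuokov2022improved} covering the weight families in Assumption \ref{ass:cltweights}) to show the bootstrap law of $T_n^\star$ built from the true $\psi_i$ is also close to that Gaussian law; (d) use Assumption \ref{ass:cltcond4} to replace $\hat\psi_i$ by $\psi_i$ in the bootstrap statistic, again absorbing the error into the Gaussian anti-concentration slack; and (e) conclude that $\hat c_n$ converges to the $(1-\alpha)$-quantile of the Gaussian max and hence $P(T_n \le \hat c_n \mid \mathcal{G}_n) \to 1-\alpha$, using Gaussian anti-concentration (Nazarov's inequality) to ensure the convergence of distribution functions implies convergence at the quantile.

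For part (ii), the above argument is run entirely conditionally on $\mathcal{G}_n$: since the hypotheses are stated as $r_{1n}(\varrho,\mathcal{G}_n)=o_{as}(1)$, $P(r_{2n}(\varrho,\mathcal{D}_n)>\epsilon\mid\mathcal{G}_n)=o_{as}(1)$, and $B_n^2(\mathcal{G}_n)\log^5(qb_n)/b_n=o_{as}(1)$, all the approximation errors vanish along a set of $\mathcal{G}_n$-realizations of probability one, and on that set the conditional conclusion $P(T_n\le\hat c_n\mid\mathcal{G}_n)\to 1-\alpha$ holds. For part (i), I would instead note that the conditional statement $P(T_n\le\hat c_n\mid\mathcal{G}_n)\to 1-\alpha$ holds in probability (not a.s.) under the weaker $o_P$ hypotheses — the same HDCLT/bootstrap bounds give an explicit deterministic bound on $|P(T_n\le\hat c_n\mid\mathcal{G}_n)-(1-\alpha)|$ in terms of $r_{1n}$, $r_{2n}$, and the rate quantity, all of which are $o_P(1)$ — and then apply the bounded convergence theorem to the bounded random variable $P(T_n\le\hat c_n\mid\mathcal{G}_n)$ to obtain $P(T_n\le\hat c_n)=E[P(T_n\le\hat c_n\mid\mathcal{G}_n)]\to 1-\alpha$.

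I expect the main obstacle to be step (c)–(d): carefully tracking that the bootstrap critical value $\hat c_n$, which is a data-dependent quantile of $T_n^\star$, is uniformly close to the Gaussian quantile. This requires combining (i) bootstrap consistency for the statistic built from the true influence functions $\psi_i$ with the weights of Assumption \ref{ass:cltweights} — invoking the precise form of the multiplier bootstrap theorem in \cite{chernozhuokov2022improved} that allows the multinomial (empirical-bootstrap) and Gaussian/Rademacher/Mammen weight structures — and (ii) the linearization in Assumption \ref{ass:cltcond4} to pass from $\hat\psi_i$ to $\psi_i$ inside the bootstrap, where one must be careful that the centering $\frac{1}{b_n}\sum_j(\hat\psi_j-\psi_j)$ is handled (this is why Assumption \ref{ass:cltcond4} is phrased with the demeaned difference). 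A secondary technical point is ensuring Gaussian anti-concentration is applied at the correct scale so that an $o_P(\sqrt{b_n/\log q})$ perturbation of the statistic translates into an $o_P(1)$ perturbation of the distribution function evaluated near the quantile; this follows from Nazarov's inequality together with the lower bound $\underline\sigma^2$ in Assumption \ref{ass:cltcond3}, but the bookkeeping of constants must be done with the envelope $B_n(\mathcal{G}_n)$ appearing multiplicatively, which is exactly why the rate condition carries $B_n^2(\mathcal{G}_n)$.
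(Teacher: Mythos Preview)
Your proposal is correct and follows essentially the same approach as the paper: apply the high-dimensional CLT of \cite{chernozhuokov2022improved} conditionally on $\mathcal{G}_n$ to the linearized statistic $\max_k|\sum_i \psi_{ki}|$, absorb the linearization errors from Assumptions \ref{ass:cltcond1} and \ref{ass:cltcond4} via Gaussian anti-concentration (the paper invokes Lemma J.3 of \cite{chernozhuokov2022improved}, which is the Nazarov-type bound you mention), invoke their Lemmas 4.5--4.6 for bootstrap consistency under the weight schemes in Assumption \ref{ass:cltweights}, and then handle parts (i) and (ii) by taking expectations over $\mathcal{G}_n$ or working conditionally. The paper organizes the final step slightly differently---it introduces a generic sub-$\sigma$-field $\mathcal{A}_n \supseteq \mathcal{G}_n$ and specializes to the trivial field for (i) and to $\mathcal{G}_n$ for (ii)---but this is purely cosmetic relative to your bounded-convergence argument.
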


\noindent \emph{Proof.} We begin by defining the conditional covariance matrix $\Sigma(\mathcal G_n)$ to be given by
\begin{equation*}
\Sigma(\mathcal G_n) \equiv \frac{1}{b_n}\sum_{i=1}^{b_n} E\left[\psi_i \psi_i^\prime|\mathcal G_n\right]
\end{equation*}
and letting $\mathbb T_n \equiv \|\mathbb G_n\|_\infty$ for $\mathbb G_n\in \R^q$ a Gaussian vector satisfying $\mathbb G_n \sim N(0,\Sigma(\mathcal G_n))$.
Further denote the linearized versions of $T_n$ and $T_n^*$ by letting $L_n$ and $L_n^*$ equal
\begin{equation}\label{lm:clt2}
L_n \equiv   \left\| \frac{1}{\sqrt {b_n}} \sum_{i=1}^{b_n} \psi_i\right\|_\infty \hspace{0.5 in }    
L_n^* \equiv \left\| \frac{1}{\sqrt {b_n}} \sum_{i=1}^{b_n} \omega_i\left(\psi_i - \frac{1}{b_n}\sum_{j=1}^{b_n}\psi_j\right)\right\|_\infty 
\end{equation}
and for notational convenience set $\delta_n(\mathcal G_n) \equiv (B_n^2(\mathcal G_n)\log^5(qb_n)/b_n)^{1/4}$.
By Theorem 2.1 in \cite{chernozhuokov2022improved} there then exists a $C_1$ not depending on $\mathcal G_n$ such that
\begin{equation}\label{lm:clt3}
\sup_{t\in\R}\left\vert P(\left.  L_n\leq t\right\vert \mathcal{G}_{n})-P(\left.  \mathbb{T}_{n}\leq t\right\vert \mathcal{G}_{n})\right\vert \leq C_{1}\delta_{n}(\mathcal{G}_{n}) .
\end{equation}
Next note that, for any constant $\varrho > 0$, result \eqref{lm:clt2} allows us to conclude that
\begin{align}\label{lm:clt4}
P & \left(  \left.  b_{n}^{-1/2}T_{n}\leq t\right\vert \mathcal{G}_{n}\right) \notag\\
&  \geq P\left(  \left. L_n\leq t-\varrho\right\vert \mathcal{G}_{n}\right)  -P\left(  \left.  |b_{n}^{-1/2}T_{n}-L_{n}|>\varrho\right\vert \mathcal{G}_{n}\right) \notag\\
&  \geq P\left(  \left.  \mathbb{T}_{n}\leq t-\varrho\right\vert
\mathcal{G}_{n}\right)  -C_{1}\delta_{n}(\mathcal{G}_{n})-P\left(  \left.
|b_{n}^{-1/2}T_{n}-L_n|>\varrho\right\vert \mathcal{G}_{n}\right) \notag\\
&  \geq P\left(  \left.  \mathbb{T}_{n}\leq t\right\vert \mathcal{G}_{n}\right)  -C_{2}\varrho\sqrt{\log(q)}-C_{1}\delta_{n}(\mathcal{G}_{n})-P\left(  \left.  |b_{n}^{-1/2}T_{n}-L_n|>\varrho\right\vert
\mathcal{G}_{n}\right) \notag\\
&  \geq P\left(  \left.  L_n\leq t\right\vert \mathcal{G}_{n}\right)
-C_{2}\eta-2C_{1}\delta_{n}(\mathcal{G}_{n})-r_{1n}(\eta,\mathcal{G}_{n}),
\end{align}
where the third inequality holds holds for some constant $C_{2}$ not depending on $\mathcal{G}_{n}$ by Lemma
J.3 in \cite{chernozhuokov2022improved}, and the final inequality holds for any $\eta > 0$ by result \eqref{lm:clt3}, Assumption \ref{ass:cltcond1}, and setting $\varrho = \eta/\sqrt{\log(q)}$.

For $\mathcal D_n \equiv (\{Y_i,X_i,S_i\}_{i=1}^n,\Sh,\mathcal G_n)$ and any constant $C_3 > 0$ next define the event
\begin{equation*}
\mathcal{E}(\mathcal{D}_{n})\equiv 1\left \{\sup_{t\in \R}|P(L_n \leq t|\mathcal{G}_{n})-P(L_n^* \leq t|\mathcal{D}_{n})|\leq C_{3}\delta_{n}(\mathcal{G}_{n})\right\}. 
\end{equation*}
By Assumption \ref{ass:cltweights} and Lemmas 4.5 and 4.6 in \cite{chernozhuokov2022improved} it then follows that we may select a constant $C_3$ not depending on $\mathcal G_n$ under which we have
\begin{equation}\label{lm:clt6}
P\left(  \left.  \mathcal{E}(\mathcal{D}_{n})=1\right\vert \mathcal{G}_{n}\right)  \geq1-\frac{2}{b_{n}}-3v_{n}(\mathcal{G}_{n}),\hspace{0.3in}v_{n}^{2}(\mathcal{G}_{n})=\frac{B_{n}^{2}(\mathcal{G}_{n})\log^{3}(qb_{n})}{b_{n}}. 
\end{equation}
Applying \eqref{lm:clt6} and the same arguments as in \eqref{lm:clt4} we obtain that if $\mathcal{E}(\mathcal{D}_{n})=1$, then
\begin{align}\label{lm:clt7}
 P &  \left(  \left.  b_{n}^{-1/2}T_{n}^{\ast}\leq t\right\vert \mathcal{D}_{n}\right) \notag\\
&  \leq P\left(  \left.  L_{n}^{\ast}\leq t+\varrho\right\vert \mathcal{D}_{n}\right)  +P\left(  \left.  |b_{n}^{-1/2}T_{n}^{\ast}-L_{n}^{\ast}|>\varrho\right\vert \mathcal{D}_{n}\right) \notag\\
&  \leq P\left(  \left.  L_{n}^{\ast}\leq t\right\vert \mathcal{D}_{n}\right)  +C_{2}\varrho\sqrt{\log(q)}+2(C_{1}+C_{3})\delta_{n}(\mathcal{G}_{n})+P\left(  \left.  |b_{n}^{-1/2}T_{n}^{\ast}-L_{n}^{\ast}|>\varrho\right\vert \mathcal{D}_{n}\right) \notag\\
&  \leq P\left(  \left.  L_{n}^{\ast}\leq t\right\vert \mathcal{D}_{n}\right)  +C_{2}\eta+2(C_{1}+C_{3})\delta_{n}(\mathcal{G}_{n})+r_{2n}(\eta,\mathcal{D}_{n}) 
\end{align}
for any constant $\eta>0$. 
Next, plug in $t=\hat{c}_{n}b_{n}^{-1/2}$ into result \eqref{lm:clt7} and note that result \eqref{lm:clt4} then implies that whenever $\mathcal E(\mathcal D_n) = 1$ we must have
\begin{align}\label{lm:clt8}
1-\alpha &  \leq P\left(  \left.  L_{n}^{\ast}\leq\hat{c}_{n}b_{n}^{-1/2}\right\vert \mathcal{D}_{n}\right)  +C_{2}\eta+2(C_{1}+C_{3})\delta_{n}(\mathcal{G}_{n})+r_{2n}(\eta,\mathcal{D}_{n})\notag\\
&  \leq P\left(  \left.  L_n \leq\hat{c}_{n}b_{n}^{-1/2}\right\vert
\mathcal{G}_{n}\right)  +C_{2}\eta+(2C_{1}+3C_{3})\delta_{n}(\mathcal{G}_{n})+r_{2n}(\eta,\mathcal{D}_{n})\notag\\
&  \leq P\left(  \left.  T_{n}\leq\hat{c}\right\vert \mathcal{G}_{n}\right)
+2C_{2}\eta+4(C_{1}+C_{3})\delta_{n}(\mathcal{G}_{n})+r_{2n}(\eta
,\mathcal{D}_{n})+r_{1n}(\eta,\mathcal{G}_{n})\notag\\
&  \equiv P(\left.  T_{n}\leq\hat{c}_{n}\right\vert \mathcal{G}_{n}%
)+s_{n}(\eta,\mathcal{D}_{n}), 
\end{align}
where the final equality is definitional. 

To conclude, let $\mathcal A_n$ be a sigma field satisfying $\mathcal G_n \subseteq \mathcal A_n$ and note that for any $\varrho > 0$ we obtain from result \eqref{lm:clt6} and the law of iterated expectations that
\begin{multline}\label{lm:clt9}
P(\mathcal E(\mathcal D_n)=1|\mathcal A_n)  \geq E[1\{v_n(\mathcal G_n) \leq \varrho\}P(\mathcal E(\mathcal D_n) = 1|\mathcal G_n)|\mathcal A_n] \\ \geq P(v_n(\mathcal G_n) \leq \varrho|\mathcal A_n)\left(1-\frac{2}{b_n} - 3\varrho\right) .
\end{multline}
Similarly, for any $\varrho > 0$ we may select $\eta > 0$ sufficiently small so as to ensure that
\begin{equation}\label{lm:clt10}
P(s_n(\eta,\mathcal D_n) < \varrho|\mathcal A_n) \geq P( \delta_n(\mathcal G_n) \vee r_{1n}(\eta,\mathcal G_n) \vee r_{2n}(\eta,\mathcal D_n) > \epsilon \varrho|\mathcal A_n)
\end{equation}
for some $\epsilon > 0$.
Finally, observe that  \eqref{lm:clt8} and the law of iterated expectations yield
\begin{align}\label{lm:clt11}
P(T_{n}\leq\hat{c}_{n}|\mathcal A_n)  &  \geq E[1\{T_{n}\leq\hat{c}_{n}\}1\{P(T_{n}\leq \hat{c}_{n}|\mathcal{G}_{n})\geq1-\alpha-\varrho\}|\mathcal A_n] \notag\\
&  \geq(1-\alpha-\varrho)E[1\{P(T_{n}\leq\hat{c}_{n}|\mathcal{G}_{n}
)\geq1-\alpha-\varrho\}|\mathcal A_n]\nonumber\\
&  \geq(1-\alpha-\varrho)P(s(\eta,\mathcal{D}_{n})<\varrho,~\mathcal{E}%
(\mathcal{D}_{n})=1|\mathcal A_n).
\end{align}
Part (i) of the lemma therefore follows from $\varrho$ being arbitrary, results \eqref{lm:clt9}, \eqref{lm:clt10}, \eqref{lm:clt11}, and setting $\mathcal A_n$ to be the trivial sigma field. 
Part (ii) of the lemma similarly follows from $\varrho$ being arbitrary, results \eqref{lm:clt9}, \eqref{lm:clt10}, \eqref{lm:clt11}, and setting $\mathcal A_n = \mathcal G_n$. \qed

\addcontentsline{toc}{section}{References}

{\small

\singlespace

}




\setcounter{footnote}{0}

\renewcommand{\thesection}{S.\arabic{section}}
\renewcommand{\theequation}{S.\arabic{equation}}
\renewcommand{\thelemma}{S.\arabic{section}.\arabic{lemma}}
\renewcommand{\thecorollary}{S.\arabic{section}.\arabic{corollary}}
\renewcommand{\thetheorem}{S.\arabic{section}.\arabic{theorem}}
\renewcommand{\theassumption}{S.\arabic{section}.\arabic{assumption}}
\setcounter{lemma}{0}
\setcounter{theorem}{0}
\setcounter{corollary}{0}
\setcounter{equation}{0}
\setcounter{remark}{0}
\setcounter{section}{0}
\setcounter{assumption}{0}

\emptythanks

\title{Supplemental Appendix}

\author{Jinyong Hahn\\hahn@econ.ucla.edu \\ UCLA
\and Guido Kuersteiner\\gkuerte@umd.edu\\University of Maryland
\and Andres Santos\\andres@econ.ucla.edu\\UCLA
\and Wavid Willigrod\\dwwilligrod@gmail.com\\ UCLA}

\maketitle

This supplemental appendix includes: (i) Calculations that justify the asymptotic validity of the proposed overidentifiation tests; and (ii) A set of Monte Carlo experiments evaluating the finite sample performance of such tests.

\section{Influence Functions} \label{sec:suppcalc}

In this section, we discuss how to verify the conditions of Lemma \ref{lm:clt}, and hence Proposition \ref{prop-clt}, in the context of the overidentification tests of Sections \ref{test:gss} and \ref{test:akm}.

\subsection{Conditioning on Shocks}\label{sec:suppcalcgss}


We first examine the overidentification test introduced in Section \ref{test:gss}, which is designed for applications in which $\mathcal G_n$ denotes a set of aggregate shocks that include $\Sh$.
Recall that in Section \ref{test:gss} we set  $\psi_{ij} \equiv U_{ij}/\sigma_j$ and $\hat \psi_{ij} \equiv \hat U_{ij}/\hat \sigma_j$ with $U_{ij}$ and $\hat U_{ij}$  given by
\begin{align}\label{supp:gss1}
U_{ij} & \equiv (S_{ij}  \varepsilon_{i} - (E[S_{ij} (X_i,W_i^\prime)|\mathcal G_n])(E[A_i(X_i,W_i)^\prime|\mathcal G_n])^{-1} A_i \varepsilon_i \notag\\
\hat U_{ij} & \equiv (S_{ij} \hat \varepsilon_{i} - (\frac{1}{n}\sum_{i=1}^n S_{ij} (X_i,W_i^\prime))(\frac{1}{n}\sum_{i=1}^n A_i(X_i,W_i)^\prime)^{-1} A_i \hat \varepsilon_i ,
\end{align}
where $A_i = (Z_i,W_i^\prime)^\prime$ and $\sigma_j^2$ and $\hat \sigma_j^2$ denote the population and sample variances
\begin{equation}\label{supp:gss2}
\sigma_j^2 \equiv \text{var}\{U_{ij}|\mathcal G_n\} \hspace{0.4 in}     \hat \sigma_j^2 \equiv \frac{1}{n}\sum_{i=1}^n (\hat U_{ij} - \frac{1}{n}\sum_{k=1}^n \hat U_{kj})^2.
\end{equation}
It will also prove helpful to define the variables $\{R_{ij}\}_{i=1}^n$ for $1\leq j \leq p$ according to
\begin{equation}\label{supp:gss3}
R_{ij} \equiv  S_{ij}\varepsilon_{i}-(\frac{1}{n}\sum_{k=1}^{n}S_{kj}(  X_{k},W_{k}^{\prime}) )  (\frac{1}{n}\sum_{k=1}^{n}A_{k}(  X_{k},W_{k}^{\prime})  )^{-1}A_i\varepsilon_i.
\end{equation}
In particular, by definition of $T_n$ and standard manipulations it then follows that
\begin{equation}\label{supp:gss4}
T_n = \max_{1\leq j \leq p}\left|\sum_{i=1}^n \frac{R_{ij}}{\hat \sigma_j}\right| .
\end{equation}

In order to apply Lemma \ref{lm:clt}(ii), to justify the asymptotic validity of the overidentification test of Section \ref{test:gss}, we require suitable moment conditions and that
\begin{align}
\max_{1\leq j \leq p} |\frac{\sqrt{\log(p)}}{\sqrt n} \sum_{i=1}^n \frac{R_{ij}-U_{ij}}{\sigma_j}| & = o_P(1)  \label{supp:gss5}\\
\max_{1\leq j \leq p}  \frac{\log^2(p)}{n} \sum_{i=1}^n \frac{(\hat U_{ij} - U_{ij})^2}{\sigma_j^2} & = o_P(1) \label{supp:gss6},
\end{align}
where probability statements are understood to be conditionally on $\mathcal G_n$ and requirements \eqref{supp:gss5} and \eqref{supp:gss6} to hold almost surely in $\mathcal G_n$.
By relying on Lemma D.5 in \cite{chernozhukov2019inference}, it is possible to show that requirement \eqref{supp:gss6} in fact implies that
\begin{equation}\label{supp:gss7}
\log(p) \times \max_{1\leq j \leq p} |\frac{\sigma_j}{\hat \sigma_j} - 1| = o_P(1),
\end{equation}
where, again, probabilities are understood to be conditionally on $\mathcal G_n$ and \eqref{supp:gss7} to hold almost surely in $\mathcal G_n$.
Moreover, by the triangle inequality and condition \eqref{supp:gss5} we have
\begin{multline}\label{supp:gss8}
\max_{1\leq j \leq p}|\frac{\sqrt{\log(p)}}{\sqrt n}\sum_{i=1}^n \frac{R_{ij}}{\hat \sigma_j} - \frac{U_{ij}}{\sigma_j}| \\ \leq (\max_{1\leq j \leq p}|\frac{\sqrt{\log(p)}}{\sqrt n}\sum_{i=1}^n \frac{U_{ij}}{\sigma_j}|+o_P(1))\times \max_{1\leq j \leq p}|\frac{\sigma_j}{\hat \sigma_j}-1| = o_P(1),
\end{multline}
where the final result follows from result \eqref{supp:gss7} and a standard maximal inequality; see, e.g., Lemma 2.2.2 in \cite{vandervaart:wellner:1996}.
Result \eqref{supp:gss8} together with \eqref{supp:gss4} and $\psi_{ij} \equiv U_{ij}/\sigma_j$ imply that Assumption \ref{ass:cltcond1} holds with $b_n = n$, $q = p$, and $r_{1n}(\varrho,\mathcal G_n)$ satisfying $r_{1n}(\varrho,\mathcal G_n) = o_{as}(1)$ for any $\varrho > 0$.

In order to verify Assumption \ref{ass:cltcond3}, recall that $\mathcal D_n\equiv (\{Y_i,X_i,S_i\}_{i=1}^n,\Sh,\mathcal G_n)$ and note that if $\{\omega_i\}_{i=1}^n$ are i.i.d.\ standard normal random variables independent of $\mathcal D_n$, then a standard maximal inequality yields that
\begin{multline}\label{supp:gss9}
E[\max_{1\leq j \leq p} |\frac{\sqrt{\log(p)}}{\sqrt n}\sum_{i=1}^n\omega_i((\hat \psi_{ij}-\psi_{ij}) - \frac{1}{n}\sum_{k=1}^n (\hat \psi_{kj} - \psi_{kj})|] \\
\lesssim \sqrt{\log(p)}\times \max_{1\leq j \leq p} (\frac{\log(p)}{n}\sum_{i=1}^n (\hat \psi_{ij} - \psi_{ij})^2)^{1/2}.
\end{multline}
Moreover, employing that $\hat \psi_{ij} \equiv \hat U_{ij}/\hat \sigma_j$ and $\psi_{ij} \equiv U_{ij}/\sigma_j$ we can bound \eqref{supp:gss9} by
\begin{multline*}
    \max_{1\leq j \leq p} \frac{\log^2(p)}{n}\sum_{i=1}^n (\hat \psi_{ij} - \psi_{ij})^2 \\ \lesssim 
     \frac{\log^2(p)}{n}\times(\max_{1\leq j \leq p} \sum_{i=1}^n \frac{(\hat U_{ij}-U_{ij})^2}{\sigma_j^2}(1+o_P(1)) + \max_{1\leq j \leq p} \sum_{i=1}^n \frac{U_{ij}^2}{\sigma_j^2} (\frac{\sigma_j}{\hat \sigma_j} - 1)^2) = o_P(1),
\end{multline*}
where the final result holds by \eqref{supp:gss6} and a standard maximal inequality. 
By Markov's inequality and result \eqref{supp:gss9} it follows that Assumption \ref{ass:cltcond3} holds for some $r_{2n}(\varrho,\mathcal D_n)$ satisfying $P(r_{2n}(\varrho,\mathcal D_n) > \epsilon|\mathcal D_n) = o_{as}(1)$ for any $\varrho > 0$ as required by Lemma \ref{lm:clt}.

\subsection{Identification Through Shocks}\label{sec:suppcalakm}


We next discuss the overidentification test of Section \ref{test:akm}, which is designed for applications in which identification is driven by exogeneity of the shocks $\Sh$.
Recall that in the corresponding asymptotic framework, originally developed by \cite{adao2019shift}, we set $\mathcal G_n = \{S_i,W_i,\varepsilon_i\}_{i=1}^n$.
Following the notation in Section \ref{test:akm}, we further set 
\begin{align}\label{supp:akm1}
\delta_j & \equiv \left(\sum_{i=1}^n W_iW_i^\prime\right)^{-1} \sum_{i=1}^n W_i g_j(\varepsilon_i,W_i,S_i) \notag\\
\kappa_j & \equiv \left(\sum_{i=1}^n E[S_i^\prime \mathcal E X_i|\mathcal G_n]\right)^{-1}\left(\sum_{i=1}^n E[S_i^\prime \mathcal E X_i|\mathcal G_n] \frac{\partial} {\partial \varepsilon} g_j(\varepsilon_i,W_i,S_i)\right),
\end{align}
where $\mathcal E \equiv \Sh - E[\Sh|\mathcal G_n]$ and we note that $\delta_j$ and $\kappa_j$ depend on $n$ (through $\mathcal G_n$), but we suppress the dependence from the notation.
As estimators for $\delta_j$ and $\kappa_j$ we employ
\begin{align*}
\hat \delta_j & \equiv \left(\sum_{i=1}^n W_iW_i^\prime\right)^{-1} \sum_{i=1}^n W_i g_j(\hat \varepsilon_i,W_i,S_i) \\
\hat \kappa_j & \equiv \left(\sum_{i=1}^n (Z_i - W_i^\prime \hat \pi_n) X_i\right)^{-1} \left(\sum_{i=1}^n (Z_i - W_i^\prime \hat \pi_n) X_i \frac{\partial}{\partial \varepsilon} g_j(\hat \varepsilon_i,W_i,S_i)\right), 
\end{align*}
where $\hat \pi_n$ denotes the  coefficient from regressing $\{Z_i\}_{i=1}^n$ on $\{W_i\}_{i=1}^n$.
In addition let
\begin{align*}
 U_{ij} & \equiv \mathcal E_i \times \sum_{k=1}^n S_{ki}(g_j(\varepsilon_k,W_k,S_k) - W_k^\prime \delta_j - \varepsilon_k \kappa_j) \\   
 \hat U_{ij} & \equiv \hat {\mathcal E}_i \times \sum_{k=1}^n S_{ki}(g_j(\hat \varepsilon_k,W_k,S_k)-W_k^\prime \hat \delta_j - \hat \varepsilon_k \hat \kappa_j)
\end{align*}
for $\hat {\mathcal E}_i$ an estimator of $\mathcal E_i$ (see Remark \ref{rm:akmest}), and recall that $\psi_{ij} \equiv U_{ij}/\sigma_j$ and $\hat \psi_{ij} \equiv \hat U_{ij}/\hat \sigma_j$, where $\sigma_j$ and $\hat \sigma_j$ respectively denote the population and finite sample variances
\begin{equation*}
\sigma_j^2 \equiv \frac{1}{p}\sum_{i=1}^p \text{Var}\{U_{ij}|\mathcal G_n\} \hspace{0.5 in} \hat \sigma_j^2 \equiv \frac{1}{p}\sum_{i=1}^p \left( \hat U_{ij} - \frac{1}{p}\sum_{k=1}^p \hat U_{kj}\right)^2.
\end{equation*}
It will also prove convenient to define the variables $\{R_{ij}\}_{i=1}^n$ for $1\leq j \leq q$ according to
\begin{equation*}
R_{ij} \equiv g_j(\hat \varepsilon_i,W_i,S_i)(Z_i-W_i^\prime \hat \pi_n).
\end{equation*}

The asymptotic validity of the overidentification test of Section \ref{test:akm} may be justified by employing Lemma \ref{lm:clt}(i).
In order to appeal to Lemma \ref{lm:clt}(i), first note that if $\{\mathcal E_i\}_{i=1}^p$ are (uniformly) Sub-Gaussian almost surely in $\mathcal G_n$, then Assumption \ref{ass:cltcond2} can be verified by setting $B_n(\mathcal G_n) = KC_n$ for $K$ large enough and $C_n$ given by
\begin{equation*}
C_n \equiv \max_{1\leq i \leq p} \max_{1\leq j \leq q}  \left(\frac{\text{Var}\{U_{ij}|\mathcal G_n\}}{\frac{1}{p}\sum_{i=1}^p \text{Var}\{U_{ij}|\mathcal G_n\}}\right)^{1/2}.    
\end{equation*}
In turn, Assumptions \ref{ass:cltcond1} and \ref{ass:cltcond2} can be verified under the key requirements
\begin{align}
\max_{1\leq j \leq q} \frac{\sqrt{\log(q)}}{\sigma_j\sqrt p} |\sum_{i=1}^n R_{ij} - \sum_{i=1}^p U_{ij}|  & =  o_P(1) \label{supp:akm7}\\
\max_{1\leq j \leq q} \frac{\log^2(q)}{p} \sum_{i=1}^p \frac{(\hat U_{ij}-U_{ij})^2}{\sigma_j^2} & =  o_P(1) \label{supp:akm8},
\end{align}
where the convergence in probability statement should be understood as jointly over all the data (rather than conditionally on $\mathcal G_n$).
In particular, under the condition that
\begin{equation}\label{supp:akm9}
\frac{C_n^2\log^2(q)}{p^{(1-c)/2}} = o_P(1)    
\end{equation}
for some $0<c<1$, it is possible to argue by relying on Lemma D.5 in \cite{chernozhukov2019inference} that requirement \eqref{supp:akm8} in fact implies that
\begin{equation}\label{supp:akm10}
\log(q)\times \max_{1\leq j \leq q} |\frac{\sigma_j}{\hat \sigma_j}-1| = o_P(1).    
\end{equation}
Moreover, by applying Lemma D.3 in \cite{chernozhukov2019inference} and relying on the rate condition in \eqref{supp:akm9} it is also possible to obtain the rate bounds
\begin{equation}\label{supp:akm11}
\max_{1\leq j \leq q}|\frac{1}{\sqrt p} \sum_{i=1}^p \frac{U_{ij}}{\sigma_j}| = O_P(\sqrt{\log(q)}) \hspace{0.3 in} \max_{1\leq j \leq q} |\frac{1}{p}\sum_{i=1}^p \frac{U_{ij}}{\sigma_j}|^2 = O_P(1).
\end{equation}
Combining results \eqref{supp:akm10} and \eqref{supp:akm11} with the same arguments employed in Section \ref{sec:suppcalcgss}, it is then straightforward to show that conditions \eqref{supp:akm7} and \eqref{supp:akm8} imply Assumptions \ref{ass:cltcond1} and \ref{ass:cltcond4} hold with $b_n = p$ and $r_{1n}(\varrho,\mathcal G_n) \vee r_{2n}(\varrho,\mathcal G_n) = o_P(1).$
To conclude verifying the main requirements of Lemma \ref{lm:clt}(i) we note that the condition $B_n^2(\mathcal G_n) \log^5(qp)/p = o_P(1)$ is implied by requirement \eqref{supp:akm9} (up to logs).

Condition \eqref{supp:akm7} is more challenging to verify than its analogue in Section \ref{sec:suppcalcgss} (i.e.\ \eqref{supp:gss5}) because there are $n$ terms $\{R_{ij}\}$ but $p$ terms $\{U_{ij}\}$.
Fortunately, as we next outline, it is possible to establish that \eqref{supp:akm7} holds by building on the assumptions and arguments in \cite{adao2019shift}.
To this end, we start with the decomposition
\begin{align}
    \sum_{i=1}^n R_{ij} = & \sum_{i=1}^n g_j(\varepsilon_i,W_i,S_i)(Z_i - W_i^\prime \hat \pi_n) \label{supp:akm12}\\
    & + \sum_{i=1}^n (g_j(\hat \varepsilon_i,W_i,S_i)- g_j(\varepsilon_i,W_i,S_i))(Z_i - W_i^\prime \pi_n) \label{supp:akm13}\\
    & + \sum_{i=1}^n (g_j(\hat \varepsilon_i,W_i,S_i) - g_j(\varepsilon_i,W_i,S_i))W_i^\prime (\hat \pi_n - \pi_n) . \label{supp:akm14}
\end{align}
It is also helpful to note that since $Z_i = S_i^\prime \Sh$, $E[Z_i|\mathcal G_n] = W_i^\prime \pi_n$ under the null hypothesis, and $E[Z_i|\mathcal G_n] = S_i^\prime E[\Sh|\mathcal G_n]$ due to $S_i\in \mathcal G_n$, it follows that
\begin{equation}\label{supp:akm15}
Z_i - W_i^\prime \pi_n = S_i^\prime \mathcal E.    
\end{equation}
Next, note that \eqref{supp:akm15}, the definition of $\delta_j$ in \eqref{supp:akm1}, the equality $S_k^\prime \mathcal E = \sum_{i=1}^p S_{ki} \mathcal E_i$, and some algebra allows us to express term \eqref{supp:akm12} as being equal to
\begin{align*}
\sum_{i=1}^n g_j(\varepsilon_i,W_i,S_i)(Z_i - W_i^\prime \hat \pi_n) &  = \sum_{i=1}^n g_j(\varepsilon_i,W_i,S_i)\{(Z_i - W_i^\prime \pi_n) + W_i^\prime(\pi_n - \hat \pi_n)\} \notag \\
& = \sum_{i=1}^n g_j(\varepsilon_i,W_i,S_i)\{S_i^\prime \mathcal E - W_i^\prime (\sum_{k=1}^n W_kW_k^\prime)^{-1}\sum_{l=1}^n W_l S_l^\prime \mathcal E\} \notag \\
& = \sum_{i=1}^p \mathcal E_i \times (\sum_{k=1}^n S_{ki}(g_j(\varepsilon_k,W_k,S_k) - W_k^\prime \delta_j)).
\end{align*}

We analyze the term in \eqref{supp:akm13} through a linearization argument.
To this end, we set
\begin{equation*}
M_{1q} \equiv \max_{1\leq j \leq q} \|\frac{\partial}{\partial \varepsilon} g_j\|_\infty \hspace{0.5 in}  M_{2q} \equiv \max_{1\leq j \leq q}    \|\frac{\partial^2}{\partial \varepsilon^2} g_j\|_\infty
\end{equation*}
and, following \cite{adao2019shift}, we let $n_k \equiv \sum_{i=1}^n S_{ki}$ and set $r_n = (\sum_{k=1}^p n_k^2 )^{-1}$.
Then note that by result \eqref{supp:akm15} and a standard Taylor expansion we obtain that
\begin{align}\label{supp:akm18}
\sum_{i=1}^n (g_j(\hat \varepsilon_i,W_i,S_i)& -g_j(\varepsilon_i,W_i,S_i))(Z_i-W_i^\prime \pi_n) \notag \\
= & \sum_{i=1}^n \frac{\partial}{\partial \varepsilon} g_j(\varepsilon_i,W_i,S_i)S_i^\prime \mathcal E(X_i(\beta - \hat \beta_n) + W_i^\prime(\gamma_{\mathtt{s}} - \hat \gamma_n)) \notag \\
& + \sum_{i=1}^n \frac{\partial^2}{\partial \varepsilon^2} g_j(\tilde \varepsilon_i,W_i,S_i)(\hat \varepsilon_i - \varepsilon_i)^2 S_i^\prime \mathcal E,
\end{align}
where $\tilde \varepsilon_i$ is some intermediate value between $\hat \varepsilon_i$ and $\varepsilon_i$.
If the covariates $W_i$ are bounded almost surely, then a maximal inequality (applied conditionally on $\mathcal G_n$) yields
\begin{multline*}
E[\max_{1\leq j \leq q}\|\sum_{i=1}^p \mathcal E_i\{\sum_{k=1}^n S_{ik} \frac{\partial }{\partial \varepsilon} g_j(\varepsilon_k,W_k,S_k)W_k^\prime\| | \mathcal G_n] \\ \lesssim \sqrt{\log(q)} M_{1q} \times (\sum_{i=1}^p (\sum_{k=1}^n S_{ik})^2)^{1/2} = \frac{\sqrt{\log(q)}M_{1q}}{\sqrt{r_n}},
\end{multline*}
where the final equality follows by definition of $r_n$.
We can therefore conclude that
\begin{equation*}
\max_{1\leq j \leq q}|\sum_{i=1}^n \frac{\partial }{\partial \varepsilon} g_j(\varepsilon_i,W_i,S_i)S_i^\prime \mathcal E W_i^\prime (\gamma_{\mathtt{s}} - \hat \gamma_n) |= O_P(\frac{\sqrt {\log(q)}M_{1q}}{\sqrt{r_n}}\|\gamma_{\mathtt{s}} - \hat \gamma_n\|),
\end{equation*}
where the probability is understood to be over the entire data.
Similarly, adapting the arguments in the proof of Proposition 3 in \cite{adao2019shift} (see in particular the proof of their display (A.4)) and employing a maximal inequality for degenerate U-statistics (see, e.g., equation (3.5) in \cite{gine2000exponential}) it is possible to establish that
\begin{equation*}
\max_{1\leq j \leq q}|\sum_{i=1}^n \frac{\partial}{\partial \varepsilon} g_j(\varepsilon_i,W_i,S_i)(S_i^\prime \mathcal E X_i - E[S_i^\prime \mathcal E X_i|\mathcal G_n])(\beta - \hat \beta_n)| = O_P(\frac{\log(q)M_{1q}}{\sqrt{r_n}}|\hat \beta_n - \beta|).
\end{equation*}
Moreover, the arguments in \cite{adao2019shift} can additionally be used to conclude that
\begin{equation}\label{supp:akm22}
 \hat \beta_n - \beta = \frac{\sum_{i=1}^n S_i^\prime \mathcal E \varepsilon_i}{\sum_{i=1}^n E[S_i^\prime \mathcal E X_i|\mathcal G_n]} + O_P(\frac{\|\hat \gamma - \gamma_{\mathtt{s}}\|}{n\sqrt{r_n}} + \frac{1}{n^2 r_n}),   
\end{equation}
while sup norm bound on the quadratic term in \eqref{supp:akm18} and the definition of $M_{2q}$ imply
\begin{equation*}
\max_{1\leq j \leq q}|\sum_{i=1}^n \frac{\partial^2}{\partial \varepsilon^2} g_j(\tilde \varepsilon_i,W_i,S_i)(\hat \varepsilon_i - \varepsilon_i)^2 S_i^\prime \mathcal E |= O_P(nM_{2q}(|\hat \beta_n - \beta|^2 + \|\hat \gamma - \gamma_{\mathtt{s}}\|^2)).
\end{equation*}
Thus, since result \eqref{supp:akm22} implies that $|\hat \beta_n - \beta| = O_P((n\sqrt{r_n})^{-1})$, our analysis so far yields
\begin{multline*}
\max_{1\leq j \leq q}|\sum_{i=1}^n(g_j(\hat \varepsilon_i,W_i,S_i) - g_j(\varepsilon_i,W_i,S_i))(Z_i - W_i^\prime \pi_n) - \kappa_j\sum_{i=1}^n S_i^\prime \mathcal E\varepsilon_i| \\= O_P(\frac{
\sqrt{\log(q)}M_{1q}}{\sqrt{r_n}}\|\hat \gamma - \gamma_{\mathtt{s}}\| + nM_{2q}\|\hat \gamma - \gamma_{\mathtt{s}}\|^2 + \log(q)\frac{M_{1q}\vee M_{2q}}{nr_n}).
\end{multline*}
Finally, using that $\|\hat \pi_n - \pi_n\|\vee |\hat \beta - \beta| = O_P((n\sqrt{r_n})^{-1})$ and relying on the mean value theorem allows to bound in probability the term in \eqref{supp:akm14} by 
\begin{equation*}
\max_{1\leq j \leq q}|\sum_{i=1}^n (g_j(\hat \varepsilon_i,W_i,S_i) - g_j(\varepsilon_i,W_i,S_i))W_i^\prime (\hat \pi_n - \pi_n)| = O_P(\frac{M_{1q}}{\sqrt{r_n}}(\frac{1}{n\sqrt{r_n}} + \|\hat \gamma - \gamma_{\mathtt{s}}\|)).
\end{equation*}
To simplify our bounds, we suppose that $\|\hat \gamma - \gamma_{\mathtt{s}}\|$ has the same rate of convergence as $|\hat \beta - \beta|$ so that $\|\hat \gamma - \gamma_{\mathtt{s}}\| = O_P((n\sqrt{r_n})^{-1})$. 
Combining our analysis of the terms in \eqref{supp:akm12}-\eqref{supp:akm14} together with the definition of $R_{ij}$ and $U_{ij}$ we can then conclude that
\begin{equation*}
\max_{1\leq j \leq q} |\sum_{i=1}^n R_{ij} - \sum_{i=1}^p U_{ij}| = O_P(\log(q)\frac{M_{1q}\vee M_{2q}}{n r_n} ) .  
\end{equation*}
Thus, finally setting $\underline{\sigma} \equiv \min_{1\leq j \leq q} \sigma_j$ we obtain that \eqref{supp:akm7} is implied by the condition
\begin{equation*}
\frac{\log^{3/2}(q)(M_{1q}\vee M_{2q})}{\underline{\sigma} \sqrt{p} n r_n } = o_P(1).    
\end{equation*}

\section{Simulation Evidence} \label{sec:mc}

We next conduct a series of Monte Carlo simulations to evaluate the finite sample performance of the overidentification tests proposed in Section \ref{sec:test}.
With the goal of informing the implementation of our tests in the empirical application of Section \ref{sec:china}, we employ simulation designs based on the \cite{david2013china} dataset.
In particular, as in \cite{david2013china}, our designs consist of short panels with $T = 2$ time periods, $n = 722$ commuting zones, and $p = 397$ sectors defined by four digit SIC codes.

In what follows, we incorporate the short panel structure into our notation by letting $Y_{it}$, $X_{it}$, $W_{it}$, and $Z_{it}$ respectively denote the outcome, regressor, controls, and instrument for commuting zone $i$ at time periods $t$.
We also note that in \cite{david2013china} both the regressor $X_{it}$ and instrument $Z_{it}$ have a Bartik structure and hence we now index shares and aggregate shocks by subscripts $x$ and $z$.
Concretely, we have
\begin{equation}\label{sec:mc1}
X_{it} = S_{xit}^\prime \Sh_{xt} \hspace{0.5 in} Z_{it} = S_{zit}^\prime \Sh_{zt}
\end{equation}
where $S_{xit}$ and $S_{zit}$ represent share vectors for commuting zone $i$ at time $t$ and $\Sh_{xt}$ and $\Sh_{zt}$ denote aggregate shocks at time $t$.
Finally, because \cite{david2013china} weight all observations by the start of period commuting zone population, in our simulations we employ the same weights throughout the analysis.

\subsection{Conditioning on Shocks}\label{mc:gss}

We begin by examining the finite sample performance of the overidentification test proposed in Section \ref{test:gss}, which recall was designed for applications that implicitly condition on the aggregate shocks -- i.e.\ that employ asymptotic approximations based on only $n$ growing.
As discussed in Remark \ref{rm:panelGSS}, implicitly conditioning on aggregate shocks in short panels yields the overidentifying moment restrictions
\begin{equation}\label{mc:gss1}
E[S_{zit}\varepsilon_{it}] = 0 \hspace{0.5 in} \text{ for } 1\leq t \leq T.
\end{equation}
Since $T =2$ and $p = 397$ in the context of \cite{david2013china}, result \eqref{mc:gss1} represents a total of 794 possible moment restrictions.
Moreover, because \cite{david2013china} cluster observations at the state level, their effective number of observations is 48.

In designing our simulations, we aimed to reflect the clustering structure in \cite{david2013china} by employing a heteroskedastic version of the group shock model of \cite{moulton1986random}.
To this end, we let $c$ denote a cluster, which consists of the commuting zone time pairs $(i,t)$ for which $i$ belongs to the state represented by $c$, and let $C$ denote the collection of all clusters.
Employing the \cite{david2013china} dataset, we then estimate a model in which the errors $\varepsilon_{it}$ are assumed to have the structure
\begin{equation*}
\varepsilon_{it} = \eta_c + \zeta_{it}
\end{equation*}
where $\eta_{c}$ are i.i.d.\ cluster level shocks and $\zeta_{it}$ are i.i.d.\ shocks and independent of $\eta_{c}$. 
We further impose a parsimonious heteroskedasticity specification by supposing that
\begin{align*}
E[\eta_c^2|\mathcal A_n] & = a_\eta + s_\eta (\sum_{(i,t)\in c} S^\prime_{zit}S_{zit}) \notag \\
E[\zeta_{it}^2|\mathcal A_n] & = a_\zeta + s_\zeta (S_{zit}^\prime S_{zit})
\end{align*}
for some constants $a_\eta,s_\eta,a_\zeta,$ and $s_\zeta$ and $\mathcal A_n \equiv \{S_{zit},S_{xit},W_{it}\}$.
In order to estimate this model, we employ the fitted residuals $\{\hat e_{it}\}$ from the weighted instrumental variable estimation in the main specification of \cite{david2013china} and let
\begin{align}\label{mc:gss4}
(\hat a_\eta,\hat s_\eta) & \equiv \arg\min_{a,s\in \mathbf R} \sum_{c\in C}     \sum_{(it)\neq (\tilde i,\tilde t) \in c} (\hat e_{it} \hat e_{\tilde i \tilde t} - a- s(\sum_{(i^\prime t^\prime)\in c} S_{zi^\prime t^\prime}^\prime S_{zi^\prime t^\prime}))^2 \notag \\
(\hat a_\zeta, \hat s_\zeta) & \equiv \arg\min_{a,s \in \mathbf R} \sum_{c\in C}\sum_{(i,t)\in c} (\hat e_{it}^2 - \hat \sigma_{\eta,c}^2 - a - s(S^\prime_{zit} S_{zit}))^2
\end{align}
where
\begin{equation*}
\hat \sigma_{\eta,c}^2 \equiv \max\{\hat a_\eta + \hat s_\eta (\sum_{(i, t)\in c} S_{zi t}^\prime S_{zi t}),0\}.
\end{equation*}

Given these estimates, we generate our Monte Carlo samples as follows:

\noindent \textsc{Step 1.} We employ the same controls $\{W_{it}\}$, aggregate shocks $\mathcal{Z}_{zt}$ and $\mathcal{Z}_{xt}$, and regression weights as in the main specification of \cite{david2013china}, which we keep fixed across all the simulations. \qed

\noindent \textsc{Step 2.} For each $t\in \{1,2\}$ we draw $n$ observations $\{S_{xit}^{\ast
},S_{zit}^{\ast}\}_{i=1}^n$ with replacement from the original full sample set of shares $\{S_{xit},S_{zit}\}_{i=1}^n$. \qed

\noindent \textsc{Step 3.} Given the sample $\{S_{xit}^*,S_{zit}^*\}$, we create a sample of instruments $\{Z_{it}^{\ast}\}$ and endogenous variables $\{X_{it}^\ast\}$ by setting $Z_{it}^{\ast}\equiv (S_{zit}^{\ast})^{\prime}\mathcal{Z}_{zt}$ and letting $X_{it}^{\ast}\equiv (S_{xit}^{\ast})^{\prime}\mathcal{Z}_{xt}$. \qed

\noindent \textsc{Step 4.} For each commuting zone time pair $(i,t)$ and cluster $c$ we create the variances
\begin{align*}
(\hat \sigma_{\zeta,it}^*)^2 & \equiv \max\{\hat a_\zeta + \hat s_\zeta (S_{zit}^*)^\prime S_{zit}^*,0\}   \\
(\hat \sigma_{\eta,c}^*)^2 & \equiv \max\{\hat a_\eta + \hat s_\eta \sum_{(i,t)\in c}(S^*_{zit})^\prime S_{zit}^*,0\}
\end{align*}
by employing the full sample estimates $\hat a_\eta,\hat s_\eta,\hat a_\zeta$ and $\hat s_\zeta$ from \eqref{mc:gss4}. \qed

\noindent \textsc{Step 5.} To create a sample of outcomes for our simulations, we draw $|C|$ i.i.d.\ standard normal variables $\{V_c\}_{c\in C}$, $n\times T$ i.i.d.\ standard normal variables $\{U_{it}\}$, and set
\begin{equation*}
Y_{it}^* \equiv X_{it}^* \hat \beta + W_{it}^\prime \hat \gamma_{\mathtt{s}} + V_c \hat \sigma_{\eta,c}^* + U_{it} \hat \sigma^*_{\zeta,it},
\end{equation*}
where $(\hat \beta,\hat \gamma_{\mathtt{s}})$ denote the weighted instrumental variable estimators from the main specification in \cite{david2013china} and the ``$c$" subscript is understood to refer to the state to which commuting zone $i$ belongs. \qed

By repeating Steps 1-5 we generate one thousand samples $\{Y_{it}^*,X_{it}^*,Z_{it}^*,W_{it}\}$ on which we evaluate the finite sample properties of our test.
We note that the number of moment restrictions ($p\times T =794$) in \eqref{mc:gss1} far exceeds the number of clusters in the simulations ($48$ states).
Since any linear combination of the moment restrictions in \eqref{mc:gss1} is also a valid moment restriction, we also examine the performance of our test when adding restrictions across time periods and/or different levels of SIC codes.
Table \ref{tab:gss} reports the finite sample rejection probabilities for tests based on different choices of moments and significance levels. 
The final column of Table \ref{tab:gss} additionally reports the p-value obtained when the test is implemented in the data of \cite{david2013china}.
All critical values were obtained by following the procedure in Section \ref{test:gss} with one thousand bootstrap draws.

\begin{table}[t!]
\begin{center}
\caption{Rejection Probabilities}													\label{tab:gss}
\begin{tabular}{lc c cccc}
\hline \hline
             &              & & \multicolumn{3}{c}{Significance Level} &   \\
Moment Restrictions                            & $\#$ Moments & & $1\%$   & $5\%$     & $10\%$  & p-value \\ \hline
Four Digit SIC $\&$ all time periods    & 794          & & 0.000   & 0.004     &  0.013 & 0.1274  \\     
Four Digit SIC $\&$ time aggregated     & 397          & & 0.000   & 0.008     &  0.031  & 0.168 \\
Three Digit SIC $\&$ all time periods   & 272          & & 0.000   & 0.007     &  0.029 & 0.0488  \\
Three Digit SIC $\&$ time aggregated    & 136          & & 0.000   & 0.007     & 0.025 & 0.0748   \\
Two Digit SIC $\&$ all time periods     & 40           & & 0.001   & 0.030     & 0.072  & 0.0054  \\
Two Digit SIC $\&$ time aggregated      & 20           & & 0.004   & 0.034     & 0.095 & 0.0018   \\  
\hline
\hline
\end{tabular}
\end{center}
\begin{flushleft}
\scriptsize{Finite sample rejection probabilities for overidentification tests of the validity of the moment restrictions imposed when asymptotic approximation implicitly condition on aggregate shocks.
}		
\end{flushleft}\vspace{-0.3 in}
\end{table}

Overall we find that the test is able to control size across all specifications.
However, for larger values of the number of moments, the finite sample rejection probability of the test is significantly below its nominal level.
The test performs best when aggregating across time periods and two digit SIC codes.
For this specification, which consists of twenty moments, the finite sample rejection probabilities of the test are close to the nominal levels.
Because the design only has 48 clusters, we view this specification as still employing a large number of moments relative to the sample size.

\subsection{Identification Through Shocks}\label{mc:akm}

We conclude by examining the finite sample performance of the overidentification test proposed in Section \ref{test:akm}, which was designed for applications in which the exogeneity of the instrument is due to the exogeneity of the aggregate shocks.
Recall that in these applications $\mathcal G_n = \{S_{zit},S_{xit},W_{it},\varepsilon_{it}\}$ and the overidentifying restriction is given by
\begin{equation}\label{mc:akm1}
E[Z_{it}|\mathcal G_n] = W_{it}^\prime \pi_n    
\end{equation}
where
\begin{equation*}
\pi_n = (\sum_{t=1}^T\sum_{i=1}^n W_{it}W_{it}^\prime)^{-1}\sum_{t=1}^T\sum_{i=1}^n W_{it} (S_{zit}^\prime E[Z_{it}|\mathcal G_n]).
\end{equation*}

In order to ensure that the null hypothesis holds in our simulation design, we rely on a model proposed by \cite{adao2019shift} as a sufficient condition for \eqref{mc:akm1}.
Specifically, we suppose that for some $p\times d$ matrix of shock $\Sh_{wt}$ and $d\times 1$ vector $\Gamma$ we have
\begin{equation}\label{mc:akm3}
E[\Sh_{zt}|\mathcal G_n] = \Sh_{wt} \Gamma \hspace{0.5 in} W_{it} = \Sh_{wt}^\prime S_{zit}.
\end{equation}
To estimate this model in the original \cite{david2013china} dataset, we first compute a ridge regression of the coordinates of $W_{it}\in \R^{d}$ on $S_{zit}$ by setting
\begin{equation*}
\hat \delta_{jt} \equiv (\sum_{i=1}^n S_{zit}S_{zit}^\prime +  \lambda I_p)^{-1}\sum_{i=1}^n S_{zit} W_{itj},  
\end{equation*}
for each $j$ and $t$, where $W_{itj}$ denotes the $j^{th}$ coordinate of the vector $W_{it}\in \R^{d}$, $I_p$ is a $p\times p$ identity matrix, and we set the penalty $\lambda$ to equal $0.1$.
Given these estimates, we let $\hat \Sh_{wt} \equiv [\hat \delta_{1t},\ldots, \hat \delta_{dt}]$ and estimate $\Gamma$ through the regression
\begin{equation*}
\hat \Gamma \equiv \arg\min_{g \in \R^{d}} \sum_{t=1}^T \|\Sh_{zt} - \hat \Sh_{wt}g\|^2.    
\end{equation*}
In what follows, it will be helpful to define $\hat E[\Sh_{zt}|\mathcal G_n] \equiv \hat \Sh_{wt}\hat \Gamma$ and $\hat \nu_t \equiv \Sh_{zt}-\hat \Sh_{wt}\hat \Gamma$.

We further aim to reflect the clustering structure in \cite{david2013china}.
To this end, we follow \cite{adao2019shift} and \cite{borusyak2022quasi} who in re-examining the empirical analysis of \cite{david2013china} cluster shocks at the three digit SIC code.
As in Section \ref{mc:gss}, we estimate a common shock model in which $\Sh_{zt}$ satisfies
\begin{equation*}
\Sh_{ztj} = \eta_c + \zeta_{tj}   
\end{equation*}
where $\Sh_{ztj}$ denotes the $j^{th}$ coordinate of $\Sh_{zt}$, $\eta_c$ are i.i.d.\ cluster level shocks, and $\zeta_{tj}$ are i.i.d.\ shocks independent of $\eta_c$.
We estimate the variance of these shocks by setting
\begin{align*}
\hat \sigma_\eta^2 & \equiv \frac{1}{|C|}\sum_{c\in C}\frac{1}{n_c(n_c-1)}\sum_{(t,j)\neq (\tilde t,\tilde j)\in c} \hat \nu_{tj}\hat \nu_{\tilde t \tilde j}\notag \\
\hat \sigma_\zeta^2 & \equiv \frac{1}{|C|}\sum_{c\in C} \frac{1}{n_c} \sum_{(s,t)\in c} \hat \nu_{tj}^2 - \hat \sigma_\eta^2,
\end{align*}
where $n_c$ denotes the number of observations in cluster $c$ and $\nu_{tj}$ denotes the $j^{th}$ coordinate of the vector $\hat \nu_t \in \R^p$.

Finally, in order to reflect the strength of the instrument in \cite{david2013china} in our simulation design, we run the following regression on the aggregate shocks
\begin{equation*}
(\hat \alpha,\hat \kappa ) \equiv \arg\min_{a,k\in \R} \sum_{t=1}^T\sum_{j=1}^p (\Sh_{xtj} - a - k \Sh_{ztj})^2
\end{equation*}
and let $\hat \sigma^2_\xi$ denote the sample variance of the residuals $\hat \xi_{tj} \equiv \Sh_{xtj} - \hat \alpha - \hat \kappa \Sh_{ztj}$.

Given these estimates, we generate our Monte Carlo samples as follows:

\noindent \textsc{Step 1}. We first create controls $\hat W_{it} \equiv \hat\Sh_{wt}^\prime S_{zit}$, which we note have the structure required by model \eqref{mc:akm3}.
We combine the controls $\{\hat W_{it}\}$ with the shares $\{S_{zit},S_{xit}\}$ in \cite{david2013china} and keep all of them fixed across simulation designs. \qed

\noindent \textsc{Step 2}. To generate our instrument, we first draw $|C|$ i.i.d.\ standard normal variables $\{V_c\}_{c\in C}$, $p\times T$ i.i.d.\ standard normal variables $\{U_{ztj}\}$, and set
\begin{equation*}
\Sh_{ztj}^* = \hat E[\Sh_{ztj}|\mathcal G_n] + V_c \hat \sigma_\eta + U_{ztj} \hat \sigma_\zeta,
\end{equation*}
where the ``$c$" subscript refers to the three digit SIC code to which sector $j$ belongs.
As our instrument we then employ $Z_{it}^* \equiv S_{zt}^\prime \Sh^*_{zt}$.
Note that, because $\hat E[\Sh_{zt}|\mathcal G_n] \equiv \hat \Sh_{wt} \hat \Gamma$, the shocks $\Sh^*_{zt}$ have the structure required by model \eqref{mc:akm3}. \qed

\noindent \textsc{Step 3}. Similarly, in order to generate aggregate shocks for our regressor, we draw $p\times T$ i.i.d.\ standard normal random variables $\{U_{xtj}\}$ and let
\begin{equation*}
\Sh^*_{xtj} = \hat \alpha + \hat \kappa \Sh_{ztj}^* + V_{xtj} \hat \sigma_\xi.
\end{equation*}
As our regressor, we then employ $X_{it}^* \equiv S_{xit}^\prime \Sh_{xt}^*$. \qed

\noindent \textsc{Step 4}. Finally, we generate a sample of outcomes $Y_{it}^*$ by simply setting $Y_{it}^*$ to equal
\begin{equation*}
Y_{it}^* = X_{it}^* \hat \beta + \hat W_{it}^\prime \hat \gamma_{\mathtt s} + \hat e_{it},   
\end{equation*}
where $(\hat \beta,\hat \gamma_{\mathtt s})$ and $\{\hat e_{it}\}$ denote the estimators and residuals obtained from replacing $W_{it}$ with $\hat W_{it}$ in the main specification of \cite{david2013china}. \qed

By repeating Steps 1-4 we generate one thousand samples $\{Y_{it}^*,X_{it}^*,Z_{it}^*,\hat W_{it}\}$ on which we evaluate the performance of the overidentification test proposed in Section \ref{test:akm}.
In order to implement the test, we need to select the moments to employ (i.e.\ the functions $g_j$ in \eqref{test:akm2}) and an estimator $\hat {\mathcal E}^*_t$ for the demeaned shock
\begin{equation*}
\mathcal E^*_t \equiv \Sh^*_{zt}-E[\Sh^*_{zt}|\mathcal G_n].    
\end{equation*}
In the main specification of \cite{david2013china}, there are no control variables with the structure required by the estimator for $\mathcal E_t^*$ proposed by \cite{borusyak2022quasi}.
We therefore instead adapt the estimator of $\mathcal E^*_t$ advocated by \cite{adao2019shift} by employing
\begin{equation}\label{mc:akm12}
\hat {\mathcal E}_t^* = (\sum_{i=1}^n S_{zit} S_{zit}^\prime + \lambda I_p)^{-1}\sum_{i=1}^n S_{zit}(Z_{it}^*-\hat W_{it}^\prime\hat \pi_n^*)
\end{equation}
for $\hat \pi_n^*$ the coefficient obtained from a weighted regression of $\{Z_{it}^*\}$ of $\{\hat W_{it}\}$.
We introduce ridge regression in \eqref{mc:akm12} because the design matrix is ill-conditioned. 
In this regard, our estimator differs from that in \cite{adao2019shift} who use ordinary regression (i.e.\ $\lambda = 0$), but instead drop sectors from the regression to address the ill conditioning of the design matrix.\footnote{See page 3 in https://github.com/kolesarm/ShiftShareSE/blob/master/doc/ShiftShareSE.pdf}
The p-values of the test can depend on $\lambda$, and we employ the simulations to inform the choice of penalty $\lambda$ for our application.\footnote{Analogously, as noted by \cite{adao2019shift}, dropping sectors to ensure that the design matrix is well conditioned similarly affects the resulting standard errors.}

Finally, for our moments we select the square of the residual and moments based on the pdf of the Logit distribution, which may be interpreted as different kernel estimators.
Specifically, we employ a total of twenty moments by setting
\begin{equation*}
g_j(\varepsilon,w,s) = \left\{\begin{array}{cl} \varepsilon^2 & \text{ if } j = 1\\
\frac{\exp(\varepsilon -a_j)}{(\exp(\varepsilon-a_j)+1)^2} & \text{ if } 2\leq j \leq 20
\end{array}\right.
\end{equation*}
with $a_2,\ldots, a_{20} = -2.25, -2, \ldots, 2, 2.25$.
Table \ref{tab:akm} reports the finite sample rejection probabilities of the resulting test for different choices of the ridge parameter $\lambda$.
The final column of Table \ref{tab:akm} additionally reports the p-value obtained when the test is implemented in the data of \cite{david2013china}.
The results are based on one thousand simulations with the bootstrap implementation relying on one thousand replications.
Overall, we find that the rejection probability is close to the nominal level of the test provided that the ridge parameter is sufficiently small.

\begin{table}[t!]
\begin{center}
\caption{Rejection Probabilities}													\label{tab:akm}
\begin{tabular}{l  c  ccc  c}
\hline \hline
                          & & \multicolumn{3}{c}{Significance Level}  & \\
Ridge Parameter           & & $1\%$   & $5\%$     & $10\%$            & p-value \\ \hline
$\lambda = 1{\rm e-3}$          & & 0.029   & 0.103     & 0.177    & 0.0012    \\ 
$\lambda = 1{\rm e-4}$          & & 0.011   & 0.061     & 0.127     & 0.0074   \\ 
$\lambda = 1{\rm e-5}$          & & 0.008   & 0.049     & 0.114     & 0.0368   \\ 
$\lambda = 1{\rm e-6}$          & & 0.008   & 0.047     & 0.100      & 0.0650  \\ 
\hline
\hline
\end{tabular}
\end{center}
\begin{flushleft}
\scriptsize{Finite sample rejection probabilities for overidentification tests of the validity of the moment restrictions imposed when identification is driven by the exogeneity of the aggregate shocks. 
}		
\end{flushleft}\vspace{-0.3 in}
\end{table}

\phantomsection
\addcontentsline{toc}{section}{References}

{\small
\singlespace

}


\end{document}